\newtheorem{theorem}{Theorem}
\newtheorem{definition}{Definition}
\newtheorem{lemma}[theorem]{Lemma}
\newtheorem{assumption}{Assumption}
\crefname{assumption}{Assumption}{Assumptions}
\newtheorem{proposition}[theorem]{Proposition}
\newtheorem{proc}{Procedure}
\newtheorem{remark}{Remark}
\newtheorem*{remark*}{Remark}
\newtheorem{corollary}[theorem]{Corollary}
\newcommand{\shortmonthyear}{%
  \ifcase\month\or
  Jan\or Feb\or Mar\or Apr\or May\or Jun\or
  Jul\or Aug\or Sep\or Oct\or Nov\or Dec\fi
  \space \number\year
}
\title{Non-parametric Causal Inference in\\ Dynamic Thresholding Designs}
\author{Aditya Ghosh\\
\texttt{ghoshadi@stanford.edu} \and Stefan Wager\\
\texttt{swager@stanford.edu}}
\date{Stanford University}
\begin{document}
\allowdisplaybreaks

\maketitle
\begin{abstract}

We consider causal inference in dynamic settings where treatment is assigned by thresholding a state variable that can change over time. There is a large literature on regression-discontinuity methods building on the fact that, in the static setting, treatment assignment via threshold crossing induces a quasi-experimental design that enables pragmatic causal inference. But dynamic settings involve challenges not present in the static setting, e.g., past treatments may affect current state and thus future treatments, and so existing regression-discontinuity methods do not apply. Here, we show that dynamic thresholding designs identify a marginal policy effect that nests the classical regression-discontinuity parameter in the static setting; and propose a tailored local linear regression estimator that is consistent for this marginal policy effect. We demonstrate our approach using an experiment that emulates real-world optimization of thresholds for continuous glucose monitoring using data generated from an FDA-approved simulator.
\end{abstract}

\section{Introduction}

Dynamic threshold-based rules\blfootnote{\hspace{-6mm}Draft version \shortmonthyear.
We are grateful for helpful feedback and suggestions from seminar participants at Boston College, Boston University, Brown, Columbia, Duke, Michigan, MIT, Stanford, Uber, University of Chicago, University of Toronto, the Atlantic Causal Inference Conference, the Joint Statistical Meetings, and the Simons symposium on Bridging Prediction and Intervention Problems in Social Systems. A version of this paper has been accepted for presentation at the 2026 ACM Conference on Economics and Computation, and will appear as an abstract in the proceedings of the conference. This research was supported by the Office of Naval Research under grant number N00014-24-1-2091.}
govern many consequential decisions in healthcare, education, credit markets,
and public policy---and present numerous opportunities for policy evaluation. 
For example,~\citet{IIZUKA2021} study benefits of health signals using data from the Japanese healthcare system, where health signals are driven by threshold rules. Patients receive yearly health checkups at which their fasting blood sugar is measured, and are diagnosed as pre-diabetic and recommended lifestyle interventions along with follow-up care if their fasting blood sugar level crosses 110 mg/dL. \citet{IIZUKA2021} then ask whether such pre-diabetic health signals are efficient from a public health perspective relative to the cost of the induced follow-up care.

The goal of this paper is to develop methods for non-parametric policy evaluation in such dynamic thresholding designs.
The fact that thresholding designs open the door to non-parametric causal inference has been recognized
for a long time \citep{thistlethwaite1960regression}, and recent decades have seen a flurry of work on
regression discontinuity designs following this insight
\citep{Hahn-et-al-2001,IL2008review,CCT2014robustCI,armstrong2018optimal,noack2024bias}.
Existing work on regression discontinuity designs, however, are focused on cross-sectional settings
where each unit only receives treatment once and then experiences an outcome, and so are amenable
to causal analysis using the basic potential outcomes model \citep{imbens2015causal}.
In contrast, we are interested in settings where each unit is eligible for treatment multiple times
(in the example above, each patient gets a new pre-diabetes diagnosis each year)
thus resulting in complex treatment dynamics that need to be modeled in order to achieve correct inferences about the overall effect of a policy \citep{robins1986new}. 
For example, if prescribing lifestyle interventions is effective in lowering fasting blood sugar, then having a patient be diagnosed as prediabetic one year may make them less likely to receive the same diagnosis in subsequent years.

Here, we propose a framework for causal inference in dynamic thresholding designs that
non-parametrically accounts for dynamics as they arise in the model of \citet{robins1986new}.
Our first step is to re-interpret the classical causal target in regression-discontinuity
analyses as the marginal policy effect of modifying the treatment threshold \citep{carneiro2010evaluating}.
We find that such marginal policy effects remain identified in dynamic specifications; furthermore, we show that they can be consistently estimated via carefully tailored local linear regressions.
As such, we argue that marginal policy effects provide an informative yet statistically
tractable target for inference in dynamic thresholding designs.
Our formal approach draws from the literature on reinforcement learning and Markov decision processes
\citep{SuttonBarto2018}. Our analysis is in particular motivated by the policy gradient theorem
and the work of \citet{sutton1999policy} on first-order optimization of reinforcement learning models.

\subsection{Regression Discontinuities as Marginal Policy Effects}\label{sec:classical-RD-as-policy-gradient}

As background for our results on causal inference in dynamic threshold designs, we first
briefly review standard regression discontinuity {(RD)} designs (or, what could be called
cross-sectional thresholding designs) and how the resulting estimand can be interpreted
as a marginal policy effect. Following \citet{IL2008review}, assume that we have data on
IID-sampled pairs $(Z_i,\, Y_i)$ for units $i = 1, \, \ldots, \, n$,
where $Z_i \in \R$ is the running variable and
$Y_i \in \R$ is the outcome of interest.

The sharp {RD} design assumes that
there exists a cutoff $c \in \R$ such that treatment is assigned as $A_i = \ind{Z_i\ge c}$.
We posit potential outcomes $\{Y_i(0), \, Y_i(1)\}$ such that $Y_i = Y_i(A_i)$, and
define the conditional average treatment effect (CATE) as $\tau(z) = \E[Y_i(1) - Y_i(0) \, |\, Z_i = z]$.
Then, the sharp RD design enables us to estimate $\tau(c)$, i.e., the CATE at the cutoff,
as a discontinuity in the conditional response surface at $Z_i = c$ \citep{Hahn-et-al-2001},
\begin{equation}\label{eqn:tau-definition}
\tauRD := \tau(c) = \lim_{z\,\downarrow\, c} \, \mathbb{E}[Y_i \,|\, Z_i = z] - \lim_{z\,\uparrow\, c} \, \mathbb{E}[Y_i \,|\, Z_i = z],
\end{equation}
provided that the $\mu_{a}(z) = \E[Y_i(a) \, |\, Z_i = z]$ are continuous in $z$ and that the
running variable has continuous support around $c$.

The classical interpretation of the RD estimand as the CATE for units whose
running variable $Z_i$ straddles the cutoff relies crucially on $Z_i$ being causally
prior to any actions induced by our thresholding policy. In dynamic thresholding
designs, however, past thresholding actions can affect future values of the running
variable; and a direct approach that defines different causal estimands for every
treatment history suffers from a curse of dimension as time horizon gets large.
To avoid this issue, we find it helpful to re-interpret the classical RD estimand
as a marginal policy effect in the sense of \citet{carneiro2010evaluating}, i.e.,
as essentially the answer to a cost-benefit analysis. Given a threshold $c$, the sharp RD design with threshold introduces a treatment policy $\pi_c(Z_i)=\ind{Z_i\ge c}$ with associated policy value 
\begin{equation*}
V(\pi_c):=\E[Y_i(\pi_c(Z_i))]=\E_{\pi_c}[Y_i].
\end{equation*}
Then, provided the running variable is exogenous to the policy cutoff, $\tauRD$ can be
interpreted as the policy gradient of lowering the cutoff (i.e., of treating more units),
divided by the corresponding increase in the number of units treated as we lower the cutoff.
Further results on policy counterfactuals in thresholding designs are given in \citet{dong2015identifying}.

\begin{lemma}\label{lemma:classical-RD-as-policy-gradient}
Suppose that the distribution of $\{Y_i(0), \, Y_i(1), \, Z_i\}$ is exogenous to the chosen cutoff $c$. 
Suppose furthermore that the running variable $Z_i$ has density $f(\cdot)$ which is continuous and positive at $c$, and that the conditional response functions $\mu_{a}(z):=\E[Y(a)\mid Z = z]$ $(a=0,1)$ are continuous at $c$. Then,
    \begin{equation}
    \label{eq:mpe_static}
       \tauRD = \gradc V(\pi_c) \, \bigg/\, \gradc \E_{\pi_c}[A_i], \ \ \ \
       \gradc \E_{\pi_c}[A_i] = -f(c).
    \end{equation}
\end{lemma}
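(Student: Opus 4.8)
The plan is to evaluate both the numerator and the denominator of the ratio in \eqref{eq:mpe_static} in closed form by conditioning on the running variable, and then to verify that the ratio collapses to the CATE at the cutoff. I begin with the denominator, which is the simpler piece. Writing $F$ for the CDF of $Z_i$, exogeneity of the running variable to $c$ gives $\E_{\pi_c}[A_i] = \E[\ind{Z_i \ge c}] = 1 - F(c)$, so that $\gradc \E_{\pi_c}[A_i] = -f(c)$, which is the second claim; here continuity and positivity of $f$ at $c$ ensure the derivative exists and is nonzero, so that the ratio is well defined.

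For the numerator, I would first decompose the realized potential outcome according to the event defining the action. Since $\pi_c(Z_i) = \ind{Z_i \ge c}$, we have $Y_i(\pi_c(Z_i)) = Y_i(1)\ind{Z_i \ge c} + Y_i(0)\ind{Z_i < c}$. Conditioning on $Z_i = z$, using $\mu_a(z) = \E[Y(a)\mid Z = z]$, and integrating against the density $f$ expresses the policy value as
\begin{equation*}
V(\pi_c) = \int_c^{\infty} \mu_1(z)\, f(z)\, dz + \int_{-\infty}^{c} \mu_0(z)\, f(z)\, dz.
\end{equation*}
Exogeneity is precisely what lets me treat the joint law of $\{Y_i(0),\, Y_i(1),\, Z_i\}$ as fixed while varying only the cutoff $c$, which enters solely through the limits of integration.

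The next step is to differentiate in $c$. Since $\mu_1 f$ and $\mu_0 f$ are continuous at $c$ (as products of functions continuous at $c$) and locally integrable, the fundamental theorem of calculus applies to each term, giving $\gradc V(\pi_c) = -\mu_1(c)\, f(c) + \mu_0(c)\, f(c) = -f(c)\,\big(\mu_1(c) - \mu_0(c)\big)$. Dividing by $\gradc \E_{\pi_c}[A_i] = -f(c)$ and cancelling the common nonzero factor $-f(c)$ leaves $\mu_1(c) - \mu_0(c)$. To identify this with $\tauRD$, note that under the sharp RD assignment $\E[Y_i \mid Z_i = z]$ equals $\mu_1(z)$ for $z > c$ and $\mu_0(z)$ for $z < c$; taking one-sided limits and invoking continuity of each $\mu_a$ at $c$ reproduces exactly the discontinuity formula \eqref{eqn:tau-definition}, so $\mu_1(c) - \mu_0(c) = \tau(c) = \tauRD$, completing the argument.

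The only real subtlety I anticipate is in the differentiation: strictly, I must justify that $\gradc \int_c^{\infty} \mu_1 f = -\mu_1(c)\, f(c)$ at the single point $c$, which requires only continuity of $\mu_1 f$ at $c$ (not everywhere) together with integrability so the integrals are finite—both furnished by the hypotheses. Everything else is bookkeeping, with exogeneity doing the conceptual work of letting $c$ be treated purely as a policy parameter rather than as a feature of the data-generating process.
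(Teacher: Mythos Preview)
Your proof is correct and follows essentially the same route as the paper's own argument: write $V(\pi_c)$ as two integrals with $c$ as the variable limit, differentiate via the Leibniz rule (fundamental theorem of calculus), and divide by $\gradc \E_{\pi_c}[A_i]=-f(c)$. Your version is in fact a bit more explicit about why exogeneity is needed and why $\mu_1(c)-\mu_0(c)$ coincides with $\tauRD$, but the substance is identical.
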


In other words, provided the running variables $Z_i$ are exogenous to the thresholding
policy, and if the cost of providing treatment to a unit is $\lambda$, then a social
planner could achieve cost-adjusted welfare benefits by reducing $c$ (and thus marginally
increasing the treatment rate) if and only if $\tauRD > \lambda$. As we move to a
multi-period setting, we will find this alternative characterization of the RD estimand
as the solution to a cost-benefit analysis to be remarkably resilient to challenges
induced by treatment dynamics.

\subsection{Modeling Dynamics}
\label{sec:model}

Now consider a setting where units are observed at times $t = 0, \, 1, \, 2, \, \ldots, \, T$,
where the horizon $T$ may be either finite or infinite\footnote{In the infinite-horizon setting, we use $T=\infty$ to define the population quantity of interest. However, the methods we propose for estimation and inference on this estimand are designed to operate in the realistic setting where the observed trajectories are finite (but sufficiently long for our asymptotic results to provide useful approximations).}.
At each time period $t$ we observe a running variable $Z_{i,t} \in \R \, \cup \, \{-\infty\}$,
take a thresholding action
$A_{i,t} = \ind{Z_{i,t}\ge c}$ for some threshold $c \in \R$, and observe an outcome $Y_{i,t} \in \R$. We allow for the case $Z_{i,t} = -\infty$ (so that $A_{i,t} = 0$ regardless of $c$) to account for the possibility
that there may be some time periods where action can never be taken; for example, in the case of yearly
health checkups, it's possible a patient misses their health checkup one year \citep{hsu2024dynamic}.

The causal structure of dynamic problems is considerably richer than in cross-sectional ones:
In addition to affecting outcomes $Y_{i,t}$, actions $A_{i,t}$ taken at time $t$ can affect state---and
thus also actions---at all times $t' > t$.
The induced potential outcomes then acquire a tree-like branching structure indexing over all possible
past treatment sequences \citep{robins1986new}. This branching makes a direct reduced-form approach to dynamic
thresholding designs intractable---or, at the very least, subject to an exponential blow-up in dimensionality
as the time horizon (and thus action space) grows. Instead, it is usually more fruitful to proceed
via what Robins refers to as the $g$-formula which provides a useful factorization for the observed-data
distribution under natural temporal consistency assumptions. Throughout, we will assume
that conditions required for the $g$-formula to hold are satisfied.

\begin{assumption}\label{assump:data-collected-under-thresholding-policy}
We observe data collected under a dynamic thresholding policy $\pi_c$ for some $c \in \R$, i.e.,
actions are taken according to $A_{i,t} = \ind{Z_{i,t}\ge c}$.
\end{assumption}

\begin{assumption}\label{assump:g-formula}
Under policy $\pi_c$, observation sequences for each unit $i = 1, \, \ldots, \, n$ are sampled IID
from a distribution $\mathbb{P}_{\pi_c}$ which factors according to the $g$-formula,
\begin{equation}
\label{eq:g-formula}
\mathbb{P}_{\pi_c}\left[Z_{i,0}, \, Y_{i,0}, \, \ldots, \, Z_{i,T}, \, Y_{i,T}\right]
= \prod_{t = 0}^{T} \mathbb{P}\left[Z_{i,t} \,\big|\, S_{i,t}\right] \mathbb{P}\left[Y_{i,t} \,\big|\, S_{i,t}, \,  Z_{i,t}, \, A_{i,t} = \ind{Z_{i,t}\ge c}\right],
\end{equation}
where $S_{i,t} = \{Z_{i,0}, \, A_{i,0}, \, Y_{i,0}, \, \ldots, \, Z_{i,t-1}, \, A_{i,t-1}, \, Y_{i,t-1}\}$ denotes observation history up
to time $t$ and $S_{i,0} = \emptyset$,
and we emphasize that all conditional probabilities on the right-hand side of the $g$-formula are
policy independent.
\end{assumption}

In the context of dynamic thresholding designs, the $g$-formula emerges
naturally from flexible structural models such as Markov decision processes \citep{SuttonBarto2018}.
For example, suppose there exists a (potentially high-dimensional and unobserved) state variable
$U_{i,t}$ such that it renders the system into a (partially observed) Markov decision process, i.e.,
\begin{equation}
\label{eq:pomdp}
\begin{split}
&\mathbb{P}_{\pi_c}\left[U_{i,0}, \, Z_{i,0}, \, Y_{i,0}, \, \ldots, \, U_{i,T}, \, Z_{i,T}, \, Y_{i,T}, \, U_{i,T+1}\right] \\
&\quad\quad\quad\quad=  \mathbb{P}\left[U_{i,0}\right] \prod_{t = 0}^{T} \mathbb{P}\left[Z_{i,t} \,\big|\, U_{i,t}\right] \mathbb{P}\left[Y_{i,t}, \, U_{i,t+1} \,\big|\, S_{i,t}, \,  Z_{i,t}, \, A_{i,t} = \ind{Z_{i,t}\ge c}\right],
\end{split}
\end{equation}
for any threshold $c \in \mathbb{R}$. Then, the $g$-formula \eqref{eq:g-formula} holds (\cref{prop:g-formula}).
The key insight in verifying this result is that, even though the state $U_{i,t}$ may
be unobserved, the actions themselves depend deterministically on the running variable $Z_{i,t}$
which is observed. Recent textbook discussions on related models are given in
\citet{HernanRobins2020} and \citet{wager2024causal}.

\begin{proposition}
\label{prop:g-formula}
Suppose there exists a variable $U_{i,t}$ such that \eqref{eq:pomdp} holds;
the variable $U_{i,t}$ may be observed or unobserved. Then, under
\cref{assump:data-collected-under-thresholding-policy}, the joint
distribution of the $Z_{i,t}$ and $Y_{i,t}$ across time satisfies the $g$-formula,
i.e., \cref{assump:g-formula} holds.
\end{proposition}

Our main question of interest is how the treatment---as determined by dynamic thresholding as in
\cref{assump:data-collected-under-thresholding-policy}---affects net-present expected welfare and treatment frequency,
\begin{equation}
\label{eq:value_dynamic}
V^Y(\pi_c):=\E_{\pi_c}\left[\sum_{t = 0}^T \gamma^t\, Y_{i,t} \right], \ \ \ \
V^A(\pi_c):=\E_{\pi_c}\left[\sum_{t = 0}^T \gamma^t\, A_{i,t} \right],
\end{equation}
where $0 < \gamma \leq 1$ is a discount rate (if $T = \infty$ then we must have $\gamma < 1$). The discount factor $\gamma$ determines the effective time horizon of the policy value by controlling how quickly the weights on future outcomes and treatments decay.
Because of the branching structure of potential outcomes a direct analogue to \eqref{eqn:tau-definition}
does not immediately enable meaningful program evaluation. However, perhaps surprisingly,
we will find that marginal policy effect characterizations of the form \eqref{eq:mpe_static} remain
useful: An RD estimand defined as
\begin{equation}
\label{eq:mpe_dynamic}
\tauRD := \frac{\partial}{\partial c} V^Y(\pi_c) \,\bigg/\, \frac{\partial}{\partial c} V^A(\pi_c)
\end{equation}
can still be effectively estimated in a sharp RD design---and can still be used to resolve
policy-relevant cost-benefit tradeoffs.\footnote{Concretely, one could use such estimates
to better allocate resources across different interventions (under global budget constraints)
by tuning the eligibility thresholds for each intervention. See \citet{sun2021treatment} for
further discussion of how such ratio quantities can be used to guide budget-constrained
treatment allocation.}

\subsection{Related Work}\label{sec:dynRDrelatedwork}

The modern literature on regression discontinuity designs goes back to \citet{Hahn-et-al-2001};
 influential contributions to this literature include \citet{IL2008review},
\citet{imbens2012optimal}, \citet{CCT2014robustCI} and \citet{armstrong2018optimal}.
Most of the existing methodological literature on regression discontinuity designs, however, is focused
on the cross-sectional setting where treatment is only assigned once. And, when faced with
the longitudinal setting, empirical researchers often reduce the problem to a cross-sectional
setting by simply considering various reduced-form regression discontinuities, e.g., by running
a standard RDD of $Y_t$ on $Z_t$ or of $Y_{t+1}$ on $Z_t$; this is, for example, the strategy
in the study by \citet{IIZUKA2021}. Such reduced form analyses
can be of considerable substantive interest in applications, but they do not capture
full treatment dynamics (e.g., how actions taken in one period may change the running
variable---and thus actions---taken in subsequent ones) and are thus not directly interpretable
as policy-relevant treatment effects \citep{heckman2016dynamic}.

One notable exception is \citet{cellini2010value}, who use regression discontinuities to
identify a type of treatment on the treated (ATT) effect in dynamic designs. They then use
their estimator to identify the effect of local school spending via public bonds on house
prices in California by comparing outcomes in school districts where bond measures are just
barely accepted vs.~barely rejected by voters; and their estimator allows them to formally consider
the fact that approving bonds in the past makes it less likely that additional bonds will
be approved in the future. The approach of \citet{cellini2010value}, however, makes crucial
use of a linear parametric model whereby
\begin{equation}
\label{eq:cellini}
    Y_{i,t} = \sum_{t' \leq t} \theta_{t - t'} A_{i,t'} + \varepsilon_{i,t},
\end{equation}
i.e., treatment effects are homogeneous and decay uniformly over time\footnote{When the parametric model of \citet{cellini2010value} is satisfied and the horizon is infinite $(T=\infty)$, our population target $\tauRD$ reduces to the discounted sum $\sum_{t=0}^\infty \gamma^t\,\theta_t$ where $\theta_t$ is defined in \eqref{eq:cellini}; see \cref{lemma:Celini} for details.}. And, as shown by
\citet{hsu2024dynamic}, their approach no longer recovers an ATT if we allow for treatment
heterogeneity. \citet{hsu2024dynamic} propose an alternative analysis that avoids \eqref{eq:cellini}.
But they in turn require a strong conditional mean independence assumption (CIA) which, e.g., in
the 2-period case requires that the time-2 control potential outcomes be independent of the time-2 running
variable for all units whose time-1 running variable
is near the cutoff.\footnote{See Assumption 3.1.2 of \citet[p.~1049]{hsu2024dynamic} for a precise
statement. This assumption is substantive, and would not hold in generic dynamic thresholding
designs. In particular, the CIA assumption will generally not hold if control potential outcomes
and the running variable both vary smoothly with some time-varying latent confounder; e.g., in our
motivating example, it would generally not hold if FBS
and health outcomes both vary smoothly with unobserved and time-varying health-seeking behaviors.}
To the best of our knowledge our paper is the first to provide results on non-parametric
causal inference for dynamic thresholding designs with generality that's comparable to standard
results in the cross-sectional setting following \citet{Hahn-et-al-2001}.

The dynamic causal model we use for reasoning about counterfactuals goes back to \citet{robins1986new}.
This model is widely used in biostatistics in the context of, e.g., marginal structural models \citep{robins2000marginal} and
optimal treatment regimes \citep{robins2004optimal}.
To the best of our knowledge, this model has not been
previously used in the context of dynamic thresholding designs---the one exception being \citet{hsu2024dynamic},
who pair the model of \citet{robins1986new} with their potentially restrictive CIA assumption to make progress.
Our work is also adjacent to the growing literature on micro-randomized trials (MRTs) \citep{klasnja2015microrandomized,liao2016sample,nahum2016just,boruvka2018assessing,dempsey2020stratified,qian2021estimating}. Like us, the MRT framework considers settings where treatment is offered at intermittent decision points that may depend on the individual's history, and the resulting causal estimands aggregate the downstream consequences of time-varying treatment decisions across future periods. The key distinction from our work is one of identification: MRTs exploit (state-conditional) randomization, whereas we utilize quasi-random variation in treatment assignment whenever the running variable is near the treatment threshold.

Our formal approach is motivated by results from the reinforcement learning (RL) literature \citep{SuttonBarto2018}, and
especially the policy-gradient theorem \citep{marbach2001simulation,sutton1999policy}. The policy-gradient theorem is widely used for optimizing RL systems via first-order algorithms \citep{kakade2001natural,sutton1999policy,williams1992simple}, and has recently been deployed for estimating global treatment effects in nonstationary Markovian A/B tests with temporal interference \citep{johari2025}.
General reinforcement learning concepts such as Q-functions and Bellman recursion are widely used in statistics, including for the study of dynamic treatment regimes \citep{ertefaie2018constructing,laber2014dynamic,luckett2020estimating,murphy2003optimal,schulte2015q} and for off-policy evaluation in Markov decision processes 
\citep{kallus2020double,liao2021off,liao2022batch,shi2022statistical}.
However, we are not aware of previous uses of policy-gradient
theorems for observational study causal inference in settings of the type we consider here.

\section{Characterizing the Marginal Policy Effect}

Our target estimand $\tauRD$ as defined in \eqref{eq:mpe_dynamic} involves counterfactual reasoning about how various moments would change as we change the treatment threshold $c$; however, for the purpose of estimation and inference, we only have access to data collected at a single status-quo threshold $c$. As such, our first task will be to provide an identification result $\tauRD$ in terms of moments of the observed data.

To this end, we start by providing an identification result for the policy gradient $\partial V^Y(\pi_c)/\partial c$ under the general dynamic model introduced in \cref{sec:model}. In the cross-sectional case, \cref{lemma:classical-RD-as-policy-gradient} shows that this gradient can be expressed in terms of conditional response functions and the density of the running variable at the cutoff. The dynamic setting introduces an additional complexity:~The relevant conditional response functions must now account for all future treatment dynamics induced by changing the treatment assignment for the current period.
To formalize this, we introduce the $Q$-function (or action-value function), which plays a central role in reinforcement learning and dynamic treatment regime analysis:
\begin{equation*}
Q_{c,\,t}^R\left(s_t,\,z_t,\,a_t\right):=\E_{\pi_c}\left[\sum_{j=0}^{T-t} \gamma^j\,R_{i,t+j}\,\bigg|\, S_{i,t}=s_t, \, Z_{i,t}=z_t,\,A_{i,t}=a_t\right] \quad (R\in\{Y,A\}).
\end{equation*}
In words, $Q_{c,\,t}^Y(s_t,\,z_t,\,a_t)$ is the expected discounted sum of future outcomes and $Q_{c,\,t}^A(s_t,\,z_t,\,a_t)$ is the expected number of  future periods in which the unit is exposed to the treatment, starting with history $s_t$, running variable $z_t$ and action $a_t$ at time $t$. 

To characterize the policy gradient, all we need in addition to the basic model from \cref{sec:model}
is that the running variable have a density around the cutoff $c$ conditionally on past state, and that relevant
conditional-response functions vary smoothly with the running variable. We note that both assumption
will hold whenever there is non-trivial (continuously distributed and exogenous) noise in the running variable
\citep{lee2008randomized,Eckles2025}.

\begin{assumption}\label{assump:condtional-density}
    Conditional on the history $S_{i,t}$ and the event that $Z_{i,t}>-\infty$, $Z_{i,t}$ has a density $f_t^*(\,\cdot\mid S_{i,t})$ that is continuous at $c$ (almost surely). Define  $f_t(\,\cdot\mid S_{i,t}):=f_t^*(\,\cdot\mid S_{i,t})\,\P(Z_{i,t}>-\infty\mid S_{i,t})$, and assume that $$\E_{\pi_c}\left[\sum_{t=0}^T \gamma^t f_t(c\mid S_{i,t})\right]>0.$$
\end{assumption}

\begin{assumption}\label{assump:continuous-Q}
For each $t\ge 0$, the $Q$-functions $Q_{c,\,t}^R(s_t,\, z_t\,,\, a_t)$ $(a_t=0, 1,\ R=Y, A)$ are continuous at $z_t = c$ for almost every $s_t$ and for $a_t = 0, \, 1$.  Furthermore, assume that $$\E_{\pi_c}\left[\sum_{t=0}^T\gamma^t\, (Q_{c,\,t}^A(S_{i,t},\,c,\,1) - Q_{c,\,t}^A(S_{i,t},\,c,\,0)) f_t(c\mid S_{i,t})\right]>0.$$
\end{assumption}

The following result explicitly characterizes the policy gradients used in \eqref{eq:mpe_dynamic} to define the marginal policy effect for the dynamic thresholding design. Although it is conceptually similar to the standard policy-gradient theorem \citep{sutton1999policy}, we note that it is not a direct corollary:~The standard policy gradient quantifies the effect of changing action probabilities under overlap conditions (i.e., where treatment and control actions can both occur with positive probability in all states), whereas here we consider the effect of changing the treatment cutoff in a setting without overlap.

\begin{theorem}\label{thm:expression-for-policy-gradient}
   Suppose that \cref{assump:data-collected-under-thresholding-policy,assump:g-formula,assump:condtional-density,assump:continuous-Q} hold, and assume furthermore that the
   following integrability conditions hold: For some $\eta>0$,
   \begin{equation}
       \label{eq:integrability}
    \begin{split}
        &\sup_{t\ge 0}\,\sup_{s_t}\,\sup_{|c'-c|\le\eta}\,\E_{\pi_{c'}}\left[\sum_{j=0}^{T-t}\gamma^j\,|Y_{i,t+j}|\,\bigg|\, S_{i,t}=s_t\right]<\infty, \\
        &\sup_{t\ge 0}\,\sup_{s_t}\,\sup_{|z-c|\le\eta} \max\left\{\left|Q_{c,\,t}^Y(s_t,\,z,\,1)-Q_{c,\,t}^Y(s_t,\,z,\,0)\right|, 1\right\}f_t(z\mid s_t)<\infty.
    \end{split}
   \end{equation}
   Then, the gradient of the total discounted reward under
   the threshold-based policy  $\pi_c$ with respect to the threshold parameter $c$ is given by
\begin{equation*}
-\gradc V^Y(\pi_c)=\sum_{t=0}^T \gamma^t\,  \E_{\pi_c}\left[\left(Q_{c,\,t}^Y(S_{i,t},\,c,\,1)-Q_{c,\,t}^Y(S_{i,t},\,c,\,0)\right)f_t(c\mid S_{i,t})\right],
\end{equation*}
provided that $\gamma < 1$ if $T =\infty$. Moreover, an analogous expression holds for $-\partial V^A(\pi_c)/\partial c$, and thus the dynamic marginal policy effect parameter $\tauRD$ introduced in \eqref{eq:mpe_dynamic} can be equivalently expressed as 
\begin{equation}\label{eq:tauRD}
\tauRD=\frac{\sum_{t=0}^T \gamma^t\, \E_{\pi_c}\left[(Q_{c,\,t}^Y(S_{i,t},\, c,\, 1)-Q_{c,\,t}^Y(S_{i,t},\, c,\, 0))f_t(c\mid S_{i,t})\right]}{\sum_{t=0}^T \gamma^t\, \E_{\pi_c}\left[(Q^A_{c,\,t}(S_{i,t},\, c,\, 1)-Q^A_{c,\,t}(S_{i,t},\, c,\, 0))f_t(c\mid S_{i,t})\right]}.
\end{equation}
\end{theorem}

The above result establishes that each of the policy gradients $\partial V^Y(\pi_c)/\partial c$ and $\partial V^A(\pi_c)/\partial c$ equals a discounted sum of the $Q$-function differences at the threshold across all future time periods, weighted by the conditional density at the threshold in each period, leading to the characterization \eqref{eq:tauRD} for the dynamic marginal policy effect parameter $\tauRD$. Although this characterization may not immediately point to an intuitive estimation result, we emphasize that all right-hand-side quantities in \eqref{eq:tauRD}, i.e., $Q_{c,\,t}^Y(s,\, c,\, w)$, $Q_{c,\,t}^A(s,\, c,\, w)$, and $f_t(c\mid s)$, are conditional moments of the data-collection distribution---and thus identified under continuity assumptions.

\section{Estimation and Inference via Local Linear Regression}

The expression for the causal parameter $\tauRD$ given in \eqref{eq:tauRD}
is explicit---but at first glance may appear unwieldy to operationalize because of
its dependence on the growing-dimensional state $S_{i,t}$. Perhaps
surprisingly, however, it turns out that the quantity $\tauRD$ can be
estimated using a carefully designed local linear regression procedure;
this section details how.

Consider the dynamic thresholding design introduced in \cref{sec:model}.
While \eqref{eq:tauRD} provides a unified characterization of the dynamic marginal policy effect across both finite and infinite horizons, the infinite-horizon case demands separate attention to the fact that the observed trajectories in any real-world data are necessarily finite. To accommodate this, we denote by $T_n$ the horizon for the observed data\footnote{The methods we develop can also accommodate cases where the units have trajectories of varying lengths, provided all trajectories are sufficiently long for asymptotic approximations to hold. However, to keep the exposition simple, we use a common trajectory length $T_n$ to state our results.}  (indexed by the sample size $n$), where $T_n=T$ in the finite-horizon case, and $T_n\to\infty$ as $n\to\infty$ in the infinite-horizon case.

Define the discounted sum of future outcomes and treatment assignments from time $t$ onward: 
\begin{equation}\label{eqn:def-Gamma}
\Gamma_{i,t}^Y :=\sum_{s=0}^{T_n-t} \gamma^s \,Y_{i,\,t+s}, \ \ \ \
\Gamma_{i,t}^A :=\sum_{s=0}^{T_n-t} \gamma^s \,A_{i,\,t+s}
\end{equation}
where $t=0,1,\dots,T_n$, and $i=1,2,\dots,n$.
Motivated by the twice-discounted structure of the expression in \eqref{eq:tauRD}, where discounting appears both within the $Q$-functions and in the outer summation, we propose below a twice-discounted local linear regression procedure for estimating the causal parameter $\tauRD$.
We choose a small bandwidth $h=h_n$ (where $h_n\to 0$ as $n\to\infty$), a weighting
function $K:\R \to [0,\infty)$ and run a weighted linear regression on each side
of the threshold, as follows.
\begin{equation}
    \label{eq:def-LLR}
    \begin{split}
&\wh{\tau}^R(h):=e_1^\top\argmin_{(\tau,\,\alpha,\,\beta_0,\,\beta_1)}\frac{1}{n}\sum_{i=1}^n\sum_{t=0}^{T_n}w_{i,t}(h)\left(\Gamma_{i,t}^R -\tau A_{i,t}-\alpha_t-\beta_0(Z_{i,t}-c)-\beta_1 A_{i,t}(Z_{i,t}-c)\right)^2, \\[2mm]
&\text{where } R\in\{Y,A\},\ \ \text{and}\ \ w_{i,t}(h) := \gamma^t \,K \left(|Z_{i,t}-c|/h\right). \  \text{Define}\ \
    \htauRD:=\wh\tau^Y(h_n) \,\big/\,  \wh{\tau}^A(h_n).
    \end{split}
\end{equation}
Here $e_1=(1,0,\dots,0)^\top$ selects $\tau$ from the entire vector of parameters.
 Popular choices for the weighting function $K(\cdot)$ include the window function $K(z)=\ind{|z|\le 1}$ or the triangular kernel $K(z)=(1-|z|)_+$.

It is important to note that the estimator $\htauRD$ proposed above has a ratio
form that's typically associated with fuzzy RD designs \citep{IL2008review}, even though this paper is focused on `sharp' thresholding designs, i.e., where treatment
assignment is a deterministic function of threshold crossing. This ratio form is explained by the fact that, in the dynamic design, there is
some uncertainty on how moving $c$ will affect the net-present treatment frequency
; and, as argued in \citet{sun2021treatment}, cost-benefit
analyses under cost uncertainty induce statistical structure resembling that encountered
in instrumental-variable analyses. When $K(\cdot)$ is the window function, the above can also be numerically implemented as a two-stage least squares type estimator with ``instrument'' $A_{i,t}$ and ``treatment'' $\Gamma_{i,t}^A$ \citep{IL2008review}. 

It is worth clarifying how the local linear regressions in \eqref{eq:def-LLR} modify the standard local linear regression used in static / cross-sectional RD designs. First, we use the discounted sum of future outcomes $\Gamma_{i,t}^Y$ (resp.~$\Gamma_{i,t}^A$) instead of the immediate outcome $Y_{i,t}$ (resp.~$A_{i,t}$) to approximate the $Q$-functions, thus automatically incorporating long-term downstream effects of treatment decisions. Second, in addition to the local kernel weights, we use the temporal weighting  $\gamma^t$ that mirrors our dynamic policy gradient result (cf.~\cref{thm:expression-for-policy-gradient}). 
Third, the regression \eqref{eq:def-LLR} includes time fixed-effects. This is an algorithmic choice to reduce asymptotic variance, rather than a modeling assumption. The time fixed effects absorb time-specific shifts common across units, removing variation orthogonal to the local comparison at the cutoff.
Our next result illustrates that with these simple tweaks to the standard local linear regression, we can consistently estimate the causal parameter $\tauRD$. Before stating this result, we list some regularity assumptions.

\begin{assumption}\label{assump5:regularity-for-consistency}
     \hspace{1mm} 
\begin{enumerate}[label=(\roman*)]
    \item The kernel $K:\R \to [0,\infty)$ is continuous, supported on $[-1,1]$, and not identically zero.

        \item The conditional second moments $z\mapsto m_{2,\,t}^R(s,\,z,\,a):=\E_{\pi_c}[(\Gamma_{i,t}^R)^2\mid S_{i,t}=s,\, Z_{i,t}=z,\, A_{i,t}=a]$ $(R\in\{Y,A\},\ a=0,1)$ are continuous at $c$ for a.e.~$s$, for every $t\ge 0$. 
        \item The densities $f_t$ and conditional second-moments $m_{2,\,t}^Y$ are locally bounded: For some $\eta>0$, there exist measurable envelopes $B_{f,\,t}$ and $B_{m_2,\,t}$ such that for a.e.~$s$, and for $a=0,1$, 
        $$\sup_{|z-c|\le \eta}f_t(z\mid s)\vee 1\le B_{f,\,t}(s),\quad \sup_{|z-c|\le \eta} m_{2,\,t}^Y(s,\, z,\, a)\vee 1\le B_{m_2,\,t}(s).$$
Moreover, assume that $\E_{\pi_c}\left[\sum_{t= 0}^T\gamma^t\, B_{m_2,\,t}(S_{i,t})\, B_{f,\,t}(S_{i,t})\right]<\infty$.
 
    \item  The unconditional second-moments are uniformly bounded: $\sup_{t\ge 0}\E_{\pi_c}[Y_t^2]<\infty$.
\end{enumerate}

    \end{assumption}
\begin{theorem}\label{consistency}
    Suppose that 
    \cref{assump:data-collected-under-thresholding-policy,assump:g-formula,assump:condtional-density,assump:continuous-Q,assump5:regularity-for-consistency} hold true, and that we run the local linear regression \eqref{eq:def-LLR} with bandwidth $h=h_n$ that satisfies $h_n\to 0$ and $n h_n\to \infty$. If $T=\infty$, assume further that $\gamma<1$ and that the observed horizon is long enough in the sense that $\gamma^{T_n}\le h_n^{1+\eps}$ 
    for all large $n$, for some $\eps>0$. Then  $\htauRD$ converges in probability to the causal parameter  $\tauRD$ defined in \eqref{eq:mpe_dynamic}.
\end{theorem}

Our next goal is to show that our estimator achieves the standard nonparametric rate of $n^{-2/5}$
for local linear regression based RD estimators under appropriate smoothness conditions.
Continuing the parallel between dynamic thresholding designs and classical (cross-sectional) RD designs, we impose the following regularity conditions that are natural extensions of second-order smoothness assumptions standard in classical RD literature (see, e.g., \citet{Hahn-et-al-2001}).

\begin{assumption}\label{assump6:smoothness}
The functions   $f_t(\,\cdot\mid s)$,  $Q_{c,\,t}^Y(s,\, \cdot\,,\, a)$ and $Q_{c,\,t}^A(s,\, \cdot\,,\, a)$ $(a=0,1)$ have second derivatives in $[c-\eta,c+\eta]$, and there exist measurable envelopes $B_{f'',\,t}(s)$ and $B_{Q'',\,t}(s)$ such that
\begin{equation*}
    \begin{split}
        &\sup_{|z-c|\le \eta}\max\left\{\left|\frac{\partial^2}{\partial z^2} \,f_t(z\mid s)\right|, 1\right\}\le B_{f'',\,t}(s),\\ 
        &\sup_{|z-c|\le \eta}\max\left\{ \left|\frac{\partial^2}{\partial z^2} \,Q_{c,\,t}^Y(s,\, z,\, a)\right|, \left|\frac{\partial^2}{\partial z^2} \,Q_{c,\,t}^A(s,\, z,\, a)\right|, 1\right\}\le B_{Q'',\,t}(s),\\
       & \sum_{t=0}^T\gamma^t\, \E_{\pi_c}\left[B_{Q'',\,t}(S_t)\,B_{f'',\,t}(S_t)\right]<\infty,\\
        &\sum_{t=0}^T\sum_{t'=t+1}^{T}\gamma^{t+t'}\, \E_{\pi_c}\left[B_{m_2,\,t}^{1/2}(S_t)\,B_{m_2,\,t'}^{1/2}(S_{t'})\,B_{f,\,t}(S_t)\,B_{f,\,t'}(S_{t'})\right]<\infty,
    \end{split}
\end{equation*}
where the envelopes $B_{m_2,\,t}$ and $B_{f,\,t}$ are as defined in \cref{assump5:regularity-for-consistency}. Also assume that the conditional cross-moments $\E_{\pi_c}[\Gamma_{i,t}^Y\, \Gamma_{i,t}^A\mid S_{i,t}=s,\, Z_{i,t}=z,\, A_{i,t}=a]$ $(a=0,1)$ are continuous in $z$ at $c$ for a.e.~$s$, for every $t\ge 0$.
\end{assumption}

The following result establishes the limiting distribution of the local linear regression estimator proposed in \eqref{eq:def-LLR} with an explicit characterization of the asymptotic variance.

\begin{theorem}\label{clt-for-twice-discounted-llr}
     Suppose that \cref{assump:data-collected-under-thresholding-policy,assump:g-formula,assump:condtional-density,assump:continuous-Q,assump5:regularity-for-consistency,assump6:smoothness} hold true, and that the local linear regression \eqref{eq:def-LLR} is run with bandwidth $h_n=O(n^{-1/5})$. If $T=\infty$, assume further that $\gamma<1$ and that the observed horizon satisfies 
     $T_n\ge (3/5+\eps)(\log n)/(\log \gamma^{-1})$ 
     for all large $n$, for some $\eps>0$. Then the asymptotic distribution of the local linear regression estimator $\htauRD$ is given by 
\begin{equation*}
    \sqrt{nh_n}\left(\htauRD -\tauRD-\frac{1}{2}h_n^2\xi_1\frac{\Delta\mu''_{Y}(c)-\tauRD \Delta\mu_A''(c)}{\Delta\mu_A(c)}\right)\dto \normal\left(0, V_{\,\mathrm{RD}} \right),
\end{equation*}
where $\xi_1:=(\kappa_2^2-\kappa_1\kappa_3)/(\kappa_0\kappa_2-\kappa_1^2)$,  $\kappa_j:=\int_0^1 u^j K(u)du$, and for $R\in\{Y,A\}$,  $\Delta\mu_{R}:=\mu_{R,1}-\mu_{R,0}$, 
\begin{equation}
    \label{def-muRz}
    \mu_{R,a}(z):=\frac{\sum_{t=0}^T \gamma^t\,\E_{\pi_c}\left[Q_{c,t}^R(S_t,\,z,\,a)\,f_t(z\mid S_t)\right]}{\sum_{t=0}^T \gamma^t\,\E_{\pi_c}\left[f_t(z\mid S_t)\right]}, \quad a=0,1.
\end{equation}
The asymptotic variance is given by \begin{equation}\label{eq:asymp-var}
    V_{\,\mathrm{RD}}:=\frac{V_{YY} + \tauRD^2 V_{AA} - 2\tauRD V_{YA}}{(\Delta\mu_A(c))^2},
\end{equation} where, for any $G,H\in\{Y,A\}$, 
\begin{multline*}
    V_{GH}:=\frac{\xi_2}{F(c)^2}\sum_{t=0}^T\sum_{a=0}^1  \gamma^{2 t}\, \E_{\pi_c}\bigg[\E_{\pi_c}\bigg[\left(\Gamma^G_t - \alpha_{t}^G-a\tau^G\right)\\
    \times\left(\Gamma^H_t - \alpha_{t}^H-a\tau^H\right)\,\Big|\, S_t,\,Z_t=c,\,A_t=a\bigg] f_t(c \mid S_t)\bigg],
\end{multline*}
with  $F(c):=\sum_{t=0}^T\gamma^t\, \E_{\pi_c}\left[f_t(c\mid S_t)\right]$, $\xi_2:=(\kappa_2^2\rho_0 - 2\kappa_1\kappa_2\rho_1+\kappa_1^2\rho_2)/(\kappa_0\kappa_2-\kappa_1^2)^2$, $ \rho_j:=\int_0^1 u^j\,K^2(u)\, du$, 
$\tau^R := \Delta\mu_R(c)$ 
($R=Y,A$),
and $\alpha_t^R$ are the population-level fixed-effects, defined as:
\begin{equation}
    \label{def-alpha-t}
    \alpha_t^R := \frac{1}{2}(m_{t,1}^R(c) + m_{t,0}^R(c)-\Delta\mu_R(c)),\quad  m_{t,a}^R(c) :=\frac{\E_{\pi_c}[Q_{c,t}^R(S_t,\,c,\,a)\,f_t(c\mid S_t)]}{\E_{\pi_c}[f_t(c\mid S_t)]},
\end{equation}
with the convention that $m_{t,a}^R(c)=0$ when $\E_{\pi_c}[f_t(c\mid S_t)]=0$.
\end{theorem}

\cref{clt-for-twice-discounted-llr} shows that when the bandwidth scales as $h_n=O(n^{-1/5})$, our estimator achieves the same $n^{-2/5}$ rate of convergence and exhibits the same bias structure as classical local linear regression estimators in static RD designs. The leading bias term has the familiar $h_n^2$ form, driven by the curvature of the weighted conditional response functions $\mu_{G,\,a}(z)$ at the threshold. The asymptotic variance in \cref{clt-for-twice-discounted-llr} reflects the additional complexity of the dynamic setting: It aggregates uncertainty across all time periods, with contributions weighted by $\gamma^{2t}$ to account for temporal discounting. 
It is also interesting to note that when $T=0$, \cref{clt-for-twice-discounted-llr} reduces to the analogous asymptotic result for the classical RD setting (see, e.g., \citet[Theorem 4]{Hahn-et-al-2001}).

\begin{remark}[Dependence of the asymptotic variance on the discount factor]
Under suitable regularity conditions, the asymptotic variance $V_{\mathrm{RD}}$ in \cref{clt-for-twice-discounted-llr} satisfies $V_{\mathrm{RD}} = O(\cgg{T}),$
where \begin{equation*}
        \cgg{T}:=\frac{\sum_{t=0}^T\gamma^{2t}\cg{T-t}^2}{\cg{T}^2}, \quad \cg{T}:=\sum_{t=0}^T\gamma^t,
    \end{equation*}
see \cref{lemma:VRD-on-gamma} for a precise statement. The behavior of the constant $\cgg{T}$ is governed by the product $T(1-\gamma)$: It scales as $T/3$ when $T(1-\gamma)\to 0$ (short effective horizon) and as 
$(1-\gamma^2)^{-1}$ when $T(1-\gamma)\to\infty$ (long effective horizon); see \cref{lemma:C2rates} for a complete characterization.
\end{remark}

Next we propose a consistent estimator of the asymptotic variance.
The following result shows that the unit-clustered sandwich variance estimator for the coefficient of \(A_{i,t}\) in \eqref{eq:def-LLR} consistently estimates the asymptotic variance of $\htauRD$.

\begin{proposition}[Consistency of unit-clustered  standard errors]
\label{propo:sandwich-estimator}
For $G, H\in\{Y,A\}$, define  
\begin{align*}
    \wh V_{GH,n} &:=
\frac1n\sum_{i=1}^n\wh\psi_{n,i}^G\wh\psi_{n,i}^H,\\
\wh\psi_{n,i}^R&:=\sqrt{h_n}\,e_1^\top\left(\frac1n\sum_{i=1}^n\sum_{t=0}^{T_n}
\gamma^t \,K\bigg(\frac{|Z_{i,t}-c|}{h_n}\bigg)\,
r(Z_{i,t})^{\otimes2}\right)^{-1}
\sum_{t=0}^{T_n}
\gamma^t \, K\bigg(\frac{|Z_{i,t}-c|}{h_n}\bigg)\,
r(Z_{i,t})\,\wh \eps_{i,t}^R,
\end{align*}
where $\wh\eps_{i,t}^R$ are the residuals from the local linear regression \eqref{eq:def-LLR}, and
\[
r(Z_{i,t}):=\left(
A_{i,t},\,
Z_{i,t}-c,\,
A_{i,t}(Z_{i,t}-c)
\right)^\top - 
\frac{\sum_{i=1}^n K(|Z_{i,t}-c|/h_n)\,\left(
A_{i,t},\,
Z_{i,t}-c,\,
A_{i,t}(Z_{i,t}-c)
\right)^\top}
     {\sum_{i=1}^n K(|Z_{i,t}-c|/h_n)},
\]
where the ratio is set to zero when the denominator is zero. 
Then, under the conditions of \cref{clt-for-twice-discounted-llr},
\[
\wh V_{{\rm RD},n}:=
\frac{
\wh V_{YY,n}
+\htauRD^2\wh V_{AA,n}
-2\,\htauRD\wh V_{YA,n}
}
{(\wh\tau^A(h_n))^2}
\,\,\Pto\,\, V_{\rm RD}.
\]
\end{proposition}

\paragraph{Bandwidth choice.}
As in standard (cross-sectional) RD designs, the bandwidth $h$ used in the local linear regression \eqref{eq:def-LLR} determines the bias--variance tradeoff of the proposed estimator $\htauRD$. The asymptotic expansion in \cref{clt-for-twice-discounted-llr} implies that minimizing the asymptotic mean squared error (AMSE) with respect to the bandwidth $h$ results in the following (infeasible) optimal bandwidth:
\begin{equation}\label{eq:dynamic-amse}
\begin{split}
   h_{\,\mathrm{IK},\mathrm{dyn}}^*&=\argmin_{h} \left\{\frac{V_{YY}+\tauRD^2V_{AA}-2\tauRD V_{YA}}{n h}
    +\frac{h^4}{4}\xi_1^2\left(\Delta\mu_Y''(c)-\tauRD\Delta\mu_A''(c)\right)^2 \right\}\\
    &= \left(\frac{V_{YY}+\tauRD^2V_{AA}-2\tauRD V_{YA}}{\xi_1^2\left(\Delta\mu_Y''(c)-\tauRD\Delta\mu_A''(c)\right)^2}\right)n^{-1/5}.
\end{split}
\end{equation} Following \citet{imbens2012optimal}, we also derive a data-driven bandwidth $\wh{h}_{\,\mathrm{IK},\mathrm{dyn}}$ which is consistent for the AMSE-optimal bandwidth as above, in the sense that $\wh{h}_{\,\mathrm{IK},\mathrm{dyn}}/h_{\,\mathrm{IK},\mathrm{dyn}}^*\,\Pto\, 1$.
 Since the proposed estimator $\htauRD$ is a ratio of local linear estimators, we follow the procedure for the fuzzy-RD case in \citet{imbens2012optimal} and choose a single bandwidth (instead of two separate bandwidths for the numerator and the denominator in \eqref{eq:def-LLR}) by applying the sharp-RD calculation to the residualized outcomes $\Gamma^Y-\tauRD\Gamma^A$.
The main strategy is to pick a pilot bandwidth $h_{\mathrm{pilot}}$, calculate the residualized outcomes as $\Gamma^Y-\wh\tau_{\,\mathrm{pilot}}\Gamma^A$ where $\wh\tau_{\,\mathrm{pilot}}=\htauRD(h_{\mathrm{pilot}})$, and estimate the quantities in \eqref{eq:dynamic-amse} exactly as in \citet{imbens2012optimal}. 

The dynamic setting affects the estimation of the bias and variance quantities in \eqref{eq:dynamic-amse} in the following ways. First, the variance must allow for arbitrary dependence across time within a trajectory; we achieve this using the cluster-by-unit procedure as in \cref{propo:sandwich-estimator}. Second, the bias is governed by the curvature of the weighted dynamic response functions $\mu_{R,a}(\cdot)$ as defined in \eqref{def-muRz}, estimated using a local quadratic regression with the same temporal weights and time fixed effects as in \eqref{eq:def-LLR}. The time fixed effects do not change the leading bias, but they change the influence function and hence the variance entering the bandwidth rule. We provide implementation details and show consistency of this data-driven bandwidth in \ref{app:bandwidth-selection}, and demonstrate its finite-sample performance in our numerical experiments in \cref{sec:numerical-experiments}.

\paragraph{Inference.} A well-known challenge in classical RD inference is that the local linear regression run with a bandwidth scaling as $h_n\sim n^{-1/5}$ (the same scale as the AMSE-optimal bandwidth) yields an estimator with the bias and standard error of the same order. As a result, conventional confidence intervals generally do not achieve nominal coverage. One way to resolve this is to `undersmooth':~Use a bandwidth $h_n \ll n^{-1/5}$ so that the variance dominates the bias; we formally state this for the dynamic setting in \cref{coro:undersmooth} below. 
\begin{corollary}\label{coro:undersmooth}
  Suppose that \cref{assump:data-collected-under-thresholding-policy,assump:g-formula,assump:condtional-density,assump:continuous-Q,assump5:regularity-for-consistency,assump6:smoothness} hold, and that the local linear regression \eqref{eq:def-LLR} is run with bandwidth $h_n=o(n^{-1/5})$. If $T=\infty$, assume further that the trajectory length $T_n$ satisfies $T_n\ge (3+\eps)\,(\log h_n^{-1})/(\log \gamma^{-1})$ for all large $n$, for some $\eps>0$. Then, with the variance estimator $\wh{V}_{\,\mathrm{RD},\,n}$ as defined in \cref{propo:sandwich-estimator}, it holds for any $\alpha\in (0,1)$ that
  $$\lim_{n\to\infty} \P_{\pi_c}\left(\tauRD\in\left[\htauRD\pm z_{1-\alpha/2}\,\wh{V}_{\,\mathrm{RD},\,n}^{1/2} (nh_n)^{-1/2}\right]\right)= 1-\alpha.$$
\end{corollary}

The above strategy, however, may not be preferable, as undersmoothing results in larger-than-optimal estimation error, and there is no clear guidance on how much to undersmooth for reliable finite-sample performance. 
More principled alternatives are bias-correction procedures that leverage higher-order smoothness to estimate and eliminate the bias \citep{CCT2014robustCI} and bias-aware procedures that widen intervals to accommodate worst-case bias \citep{armstrong2018optimal}. These methods are all tailored to the static RD setting ($T=0$), and we leave the task of adapting them to dynamic thresholding designs for future work.
We do, however, examine one bias-aware construction in our numerical experiments:~\citet{ArmstrongKolesar2020} show that the standard local linear regression with an MSE-optimal bandwidth can be paired with the critical value $2.18$, rather than the standard normal quantile $1.96$, to yield `honest' $95\%$ confidence intervals that are uniformly valid over a smoothness class of regression functions. Given that our estimator already rests on local linear regression with a data-driven bandwidth, implementing this amounts to a single numerical substitution; we include it as an ad hoc extension of the conventional intervals, while emphasizing that its validity has been established only for the static case.

\section{Numerical Experiments}\label{sec:numerical-experiments}

In this section, we examine the empirical performance of the proposed method. In \cref{sec:autoreg}, we use a stylized autoregressive data-generating process that makes the treatment-dependent state evolution explicit. In \cref{sec:gluc-sim}, we turn to a more calibrated simulation based on the UVa/Padova Type-1 Diabetes simulator \citep{UVA-PADOVA}, where the thresholding rule interacts with richer, physiologically motivated dynamics.

\subsection{An Autoregressive Simulator}\label{sec:autoreg}

Consider a simulation setting where we generate data from the following autoregressive process: 
\begin{equation}
    \begin{split}
        Z_{t+1} &= \delta + Z_t- (1-\rho)(Z_t -\mu_0) - \theta\, A_t\, (Z_t-\mu_0)_+ +4\eps_t,\\[1mm] A_t &= \ind{Z_t\ge c},\\[1mm] Y_t &=-Z_{t+1}, \quad t=0,1,\dots,T-1.
    \end{split}
\end{equation}
     We consider a finite horizon $T=12$, and generate the noise $\eps_t$ as i.i.d.~from $\normal(0,1)$. We use the treatment threshold $c=110$, the baseline mean $\mu_0=100$, the autocorrelation coefficient $\rho=0.9$, the treatment intensity parameter $\theta=0.1$, and consider two choices for the drift parameter $\delta$, namely $\delta=0$ (Setting 1) and $\delta=1$ (Setting 2). We initialize the autoregressive process with $Z_0\sim \normal(0, 4/(1-\rho^2)^{1/2})$. Note that this initialization does not make $(Z_t)_{t\ge 0}$ stationary even when $\delta=0$, because of the treatment-dependent drift $\theta\neq 0$. 
     
We contrast our proposed procedure \eqref{eq:def-LLR} with the following baseline methods, which can be viewed as natural strategies but are not tailored for inference on $\tau_{\mathrm{RD}}$.

\begin{enumerate}
    \item Baseline 1: Run the standard LLR of $Y_{i,t}$ on $Z_{i,t}$, i.e.,
    \begin{multline}\label{eq:static}
       \wh{\tau}_{\mathrm{LLR}}(h):=e_1^\top\argmin_{(\tau,\,\alpha,\,\beta_0,\,\beta_1)}\frac{1}{n}\sum_{i=1}^n\sum_{t=0}^{T_n}K\bigg(\frac{|Z_{i,t}-c|}{h}\bigg)\Big(Y_{i,t} -\tau A_{i,t}-\alpha_t\\-\beta_0(Z_{i,t}-c)-\beta_1 A_{i,t}(Z_{i,t}-c)\Big)^2.
    \end{multline}

Here $e_1=(1,0,\dots,0)^\top$ selects $\tau$ from the entire vector of parameters.

    \item Baseline 2: Consider a naive long-run LLR approach where
    we collapse each trajectory to the first-period net-present outcome $\Gamma_{i,0}^Y$ and treatment exposure $\Gamma_{i,0}^A$ (as defined in~\eqref{eqn:def-Gamma}), and run standard LLR of $\Gamma_{i,0}^Y$ and $\Gamma_{i,0}^A$ on the first-period  running variable $Z_{i,0}$. Precisely,
        \begin{multline}\label{eq:def-naive}
       \wh{\tau}^R_{\,\text{naive}}(h):=e_1^\top\argmin_{(\tau,\,\alpha,\,\beta_0,\,\beta_1)}\frac{1}{n}\sum_{i=1}^n K\bigg(\frac{|Z_{i,0}-c|}{h}\bigg)\Big(\Gamma_{i,0}^R -\tau A_{i,0}-\alpha\\-\beta_0(Z_{i,0}-c)-\beta_1 A_{i,0}(Z_{i,0}-c)\Big)^2,
    \end{multline}
    where $R\in\{Y,A\}$, and define $\wh\tau_{\,\mathrm{RD},\,\text{naive}}:=\wh\tau^Y_{\,\text{naive}}(h_n) \,\big/\,  \wh{\tau}^A_{\,\text{naive}}(h_n)$.
\end{enumerate}

 The first baseline procedure is expected to recover the short-run, partial equilibrium effect (i.e., the one-period RD jump at the threshold), which in this case is given by $\tau_\text{partial  eq.}=\theta (c-\mu_0)=1$. This parameter can be very different from the marginal policy effect parameter $\tauRD$.  For example, in Setting 1 with discount factor $\gamma=0.8$, $\tauRD\approx 3$ (three times the partial equilibrium effect). 
 The second baseline procedure tries to naively incorporate long-run dynamics by first aggregating all future outcomes and treatment decisions into the net-present quantities $\Gamma_{i,0}^Y$ and $\Gamma_{i,0}^A$, and then running a single cross-sectional RD at time $t=0$. In the limit, $\wh{\tau}_{\,\mathrm{RD},\,\text{naive}}(h)$ targets 
 the marginal effect of infinitesimally lowering the threshold at the initial period on discounted outcomes per additional discounted treatment generated by that initial perturbation. However, this estimand only uses information from the first time the running variable is near the cutoff and treats the future evolution of $Z_t$ as fixed. 
 As a result, this naive long-run LLR generally does not target the dynamic marginal policy effect and can be biased if the treatment policy has strong feedback effects on the future state trajectory.

\begin{table}[p]
\centering
\renewcommand{\arraystretch}{0.95}
\setlength{\tabcolsep}{3pt}

\begin{subtable}{\linewidth}
\caption{$\gamma=0.5$ ($\tauRD\approx 1.745$, $\tau_{\,\text{partial eq.}}=1$)}
\centering
\begin{tabular}{@{} c cc cc cc cc c @{}}
\toprule
& \multicolumn{2}{c}{\textbf{Proposed LLR}}
& \multicolumn{2}{c}{\textbf{AK variant}}
& \multicolumn{2}{c}{\textbf{Naive long-run}}
& \multicolumn{3}{c}{\textbf{Standard LLR}}\\
& \multicolumn{2}{c}{\small (targeting $\tau_{\mathrm{RD}}$)}
& \multicolumn{2}{c}{\small (targeting $\tau_{\mathrm{RD}}$)}
& \multicolumn{2}{c}{\small (targeting $\tau_{\mathrm{RD}}$)}
& \multicolumn{2}{c}{\small (targeting $\tau_{\mathrm{RD}}$)}
& \multicolumn{1}{c}{\small (for $\tau_{\text{partial eq.}}$)}\\
\cmidrule(lr){2-3}\cmidrule(lr){4-5}\cmidrule(lr){6-7}\cmidrule(lr){8-9}\cmidrule(l){10-10}
\textbf{$n$}
& \textbf{coverage} & \textbf{width}
& \textbf{coverage} & \textbf{width}
& \textbf{coverage} & \textbf{width}
& \textbf{coverage} & \textbf{width}
& \textbf{coverage}\\
\midrule
500    & 93.1\% & 6.804 & 95.5\% & 7.572 & 92.9\% & 11.645 & 68.7\% & 2.073 & 95.3\% \\
1000   & 95.1\% & 5.054 & 97.5\% & 5.624 & 95.0\% & 8.782  & 52.8\% & 1.545 & 94.5\% \\
2000   & 94.1\% & 3.729 & 97.4\% & 4.149 & 93.7\% & 6.544  & 27.5\% & 1.142 & 95.5\% \\
4000   & 95.6\% & 2.765 & 97.1\% & 3.076 & 95.0\% & 4.899  & 7.7\%  & 0.848 & 94.7\% \\
8000   & 94.9\% & 2.042 & 96.6\% & 2.273 & 95.0\% & 3.625  & 0.5\%  & 0.628 & 94.3\% \\
16000  & 93.8\% & 1.512 & 96.7\% & 1.682 & 94.5\% & 2.689  & 0.0\%  & 0.465 & 95.2\% \\
32000  & 94.3\% & 1.121 & 96.8\% & 1.248 & 95.1\% & 1.995  & 0.0\%  & 0.346 & 95.0\% \\
64000  & 94.4\% & 0.832 & 97.0\% & 0.926 & 94.3\% & 1.486  & 0.0\%  & 0.257 & 95.6\% \\
\bottomrule
\end{tabular}\\\vspace{1mm}
\end{subtable}

\begin{subtable}{\linewidth}
\caption{$\gamma=0.8$ ($\tauRD\approx 3.014$, $\tau_{\,\text{partial eq.}}=1$)}
\centering
\begin{tabular}{@{} c cc cc cc cc c @{}}
\toprule
& \multicolumn{2}{c}{\textbf{Proposed LLR}}
& \multicolumn{2}{c}{\textbf{AK variant}}
& \multicolumn{2}{c}{\textbf{Naive long-run}}
& \multicolumn{3}{c}{\textbf{Standard LLR}}\\
& \multicolumn{2}{c}{\small (targeting $\tau_{\mathrm{RD}}$)}
& \multicolumn{2}{c}{\small (targeting $\tau_{\mathrm{RD}}$)}
& \multicolumn{2}{c}{\small (targeting $\tau_{\mathrm{RD}}$)}
& \multicolumn{2}{c}{\small (targeting $\tau_{\mathrm{RD}}$)}
& \multicolumn{1}{c}{\small (for $\tau_{\text{partial eq.}}$)}\\
\cmidrule(lr){2-3}\cmidrule(lr){4-5}\cmidrule(lr){6-7}\cmidrule(lr){8-9}\cmidrule(l){10-10}
\textbf{$n$}
& \textbf{coverage} & \textbf{width}
& \textbf{coverage} & \textbf{width}
& \textbf{coverage} & \textbf{width}
& \textbf{coverage} & \textbf{width}
& \textbf{coverage}\\
\midrule
500    & 92.7\% & 11.134 & 94.9\% & 12.389 & 91.0\% & 29.268 & 3.0\% & 2.073 & 95.3\% \\
1000   & 95.9\% & 8.026  & 97.4\% & 8.931  & 94.8\% & 22.343 & 0.4\% & 1.545 & 94.5\% \\
2000   & 92.9\% & 5.714  & 96.0\% & 6.359  & 95.0\% & 16.609 & 0.0\% & 1.142 & 95.5\% \\
4000   & 94.2\% & 4.169  & 96.5\% & 4.639  & 95.3\% & 12.260 & 0.0\% & 0.848 & 94.7\% \\
8000   & 95.3\% & 3.004  & 97.0\% & 3.342  & 95.7\% & 9.083  & 0.0\% & 0.628 & 94.3\% \\
16000  & 94.7\% & 2.199  & 96.4\% & 2.447  & 94.9\% & 6.694  & 0.0\% & 0.465 & 95.2\% \\
32000  & 94.6\% & 1.618  & 96.9\% & 1.801  & 94.5\% & 5.011  & 0.0\% & 0.346 & 95.0\% \\
64000  & 94.2\% & 1.194  & 96.4\% & 1.329  & 93.8\% & 3.745  & 0.0\% & 0.257 & 95.6\% \\
\bottomrule
\end{tabular}\\\vspace{1mm}
\end{subtable}

\begin{subtable}{\linewidth}
\caption{$\gamma=1.0$ ($\tauRD\approx 4.325$, $\tau_{\,\text{partial eq.}}=1$)}
\centering
\begin{tabular}{@{} c cc cc cc cc c @{}}
\toprule
& \multicolumn{2}{c}{\textbf{Proposed LLR}}
& \multicolumn{2}{c}{\textbf{AK variant}}
& \multicolumn{2}{c}{\textbf{Naive long-run}}
& \multicolumn{3}{c}{\textbf{Standard LLR}}\\
& \multicolumn{2}{c}{\small (targeting $\tau_{\mathrm{RD}}$)}
& \multicolumn{2}{c}{\small (targeting $\tau_{\mathrm{RD}}$)}
& \multicolumn{2}{c}{\small (targeting $\tau_{\mathrm{RD}}$)}
& \multicolumn{2}{c}{\small (targeting $\tau_{\mathrm{RD}}$)}
& \multicolumn{1}{c}{\small (for $\tau_{\text{partial eq.}}$)}\\
\cmidrule(lr){2-3}\cmidrule(lr){4-5}\cmidrule(lr){6-7}\cmidrule(lr){8-9}\cmidrule(l){10-10}
\textbf{$n$}
& \textbf{coverage} & \textbf{width}
& \textbf{coverage} & \textbf{width}
& \textbf{coverage} & \textbf{width}
& \textbf{coverage} & \textbf{width}
& \textbf{coverage}\\
\midrule
500    & 92.6\% & 19.140 & 94.6\% & 21.298 & 88.2\% & 91.731 & 0.0\% & 2.073 & 95.3\% \\
1000   & 96.6\% & 14.139 & 98.1\% & 15.733 & 92.9\% & 70.204 & 0.0\% & 1.545 & 94.5\% \\
2000   & 94.4\% & 9.982  & 96.3\% & 11.108 & 92.8\% & 51.788 & 0.0\% & 1.142 & 95.5\% \\
4000   & 94.4\% & 7.220  & 96.4\% & 8.035  & 95.0\% & 40.223 & 0.0\% & 0.848 & 94.7\% \\
8000   & 94.7\% & 5.143  & 96.8\% & 5.723  & 97.0\% & 27.793 & 0.0\% & 0.628 & 94.3\% \\
16000  & 95.0\% & 3.723  & 97.2\% & 4.143  & 98.0\% & 20.628 & 0.0\% & 0.465 & 95.2\% \\
32000  & 94.0\% & 2.703  & 96.9\% & 3.008  & 97.4\% & 15.502 & 0.0\% & 0.346 & 95.0\% \\
64000  & 92.7\% & 1.983  & 96.7\% & 2.207  & 90.7\% & 11.681 & 0.0\% & 0.257 & 95.6\% \\
\bottomrule
\end{tabular}
\end{subtable}

\caption{Empirical coverage and median width across $1{,}000$ replications of 95\% CIs for Setting 1 (with drift parameter $\delta=0$). We report results for the proposed LLR (using the standard normal quantile $1.96$ and the \citet{ArmstrongKolesar2020} critical value $2.18$), the naive long-run LLR, and the standard LLR. We also report coverage for the standard LLR estimator targeting the partial equilibrium effect $\tau_{\text{partial eq.}}=1$. We vary the discount factor $\gamma\in\{0.5,0.8,1.0\}$ and use the uniform kernel with the \citet{imbens2012optimal} bandwidth. The oracle estimates of $\tauRD$ are obtained using $5{,}000{,}000$ replications.}
\label{tab:coverage-llr}
\end{table}

\begin{table}[p]
\centering
\renewcommand{\arraystretch}{0.95}
\setlength{\tabcolsep}{3pt}

\begin{subtable}{\linewidth}
\caption{$\gamma=0.5$ ($\tauRD\approx 1.706$, $\tau_{\,\text{partial eq.}}=1$)}
\centering
\begin{tabular}{@{} c cc cc cc cc c @{}}
\toprule
& \multicolumn{2}{c}{\textbf{Proposed LLR}}
& \multicolumn{2}{c}{\textbf{AK variant}}
& \multicolumn{2}{c}{\textbf{Naive long-run}}
& \multicolumn{3}{c}{\textbf{Standard LLR}}\\
& \multicolumn{2}{c}{\small (targeting $\tau_{\mathrm{RD}}$)}
& \multicolumn{2}{c}{\small (targeting $\tau_{\mathrm{RD}}$)}
& \multicolumn{2}{c}{\small (targeting $\tau_{\mathrm{RD}}$)}
& \multicolumn{2}{c}{\small (targeting $\tau_{\mathrm{RD}}$)}
& \multicolumn{1}{c}{\small (for $\tau_{\text{partial eq.}}$)}\\
\cmidrule(lr){2-3}\cmidrule(lr){4-5}\cmidrule(lr){6-7}\cmidrule(lr){8-9}\cmidrule(l){10-10}
\textbf{$n$}
& \textbf{coverage} & \textbf{width}
& \textbf{coverage} & \textbf{width}
& \textbf{coverage} & \textbf{width}
& \textbf{coverage} & \textbf{width}
& \textbf{coverage}\\
\midrule
500    & 93.1\% & 6.739 & 95.6\% & 7.499 & 93.1\% & 11.564 & 71.2\% & 2.006 & 94.9\% \\
1000   & 95.8\% & 5.031 & 97.7\% & 5.599 & 95.5\% & 8.727  & 54.9\% & 1.499 & 96.1\% \\
2000   & 94.7\% & 3.778 & 96.6\% & 4.204 & 93.7\% & 6.506  & 28.6\% & 1.117 & 94.1\% \\
4000   & 95.1\% & 2.816 & 97.4\% & 3.134 & 95.1\% & 4.854  & 9.1\%  & 0.829 & 94.8\% \\
8000   & 94.8\% & 2.088 & 96.6\% & 2.323 & 95.3\% & 3.585  & 1.6\%  & 0.616 & 94.7\% \\
16000  & 94.4\% & 1.551 & 96.8\% & 1.725 & 94.5\% & 2.660  & 0.0\%  & 0.457 & 95.4\% \\
32000  & 94.9\% & 1.149 & 97.2\% & 1.278 & 95.2\% & 1.980  & 0.0\%  & 0.340 & 93.9\% \\
64000  & 92.9\% & 0.854 & 95.0\% & 0.951 & 94.2\% & 1.472  & 0.0\%  & 0.252 & 94.0\% \\
\bottomrule
\end{tabular}\\\vspace{1mm}
\end{subtable}

\begin{subtable}{\linewidth}
\caption{$\gamma=0.8$ ($\tauRD\approx 2.819$, $\tau_{\,\text{partial eq.}}=1$)}
\centering
\begin{tabular}{@{} c cc cc cc cc c @{}}
\toprule
& \multicolumn{2}{c}{\textbf{Proposed LLR}}
& \multicolumn{2}{c}{\textbf{AK variant}}
& \multicolumn{2}{c}{\textbf{Naive long-run}}
& \multicolumn{3}{c}{\textbf{Standard LLR}}\\
& \multicolumn{2}{c}{\small (targeting $\tau_{\mathrm{RD}}$)}
& \multicolumn{2}{c}{\small (targeting $\tau_{\mathrm{RD}}$)}
& \multicolumn{2}{c}{\small (targeting $\tau_{\mathrm{RD}}$)}
& \multicolumn{2}{c}{\small (targeting $\tau_{\mathrm{RD}}$)}
& \multicolumn{1}{c}{\small (for $\tau_{\text{partial eq.}}$)}\\
\cmidrule(lr){2-3}\cmidrule(lr){4-5}\cmidrule(lr){6-7}\cmidrule(lr){8-9}\cmidrule(l){10-10}
\textbf{$n$}
& \textbf{coverage} & \textbf{width}
& \textbf{coverage} & \textbf{width}
& \textbf{coverage} & \textbf{width}
& \textbf{coverage} & \textbf{width}
& \textbf{coverage}\\
\midrule
500    & 94.3\% & 8.146 & 96.3\% & 9.064 & 91.3\% & 26.702 & 6.3\% & 2.006 & 94.9\% \\
1000   & 95.5\% & 6.065 & 97.2\% & 6.749 & 94.3\% & 20.831 & 0.4\% & 1.499 & 96.1\% \\
2000   & 95.4\% & 4.532 & 97.5\% & 5.043 & 95.3\% & 15.376 & 0.0\% & 1.117 & 94.1\% \\
4000   & 95.4\% & 3.377 & 97.4\% & 3.758 & 95.5\% & 11.403 & 0.0\% & 0.829 & 94.8\% \\
8000   & 95.4\% & 2.508 & 96.9\% & 2.790 & 95.6\% & 8.438  & 0.0\% & 0.616 & 94.7\% \\
16000  & 93.9\% & 1.860 & 95.8\% & 2.069 & 95.4\% & 6.263  & 0.0\% & 0.457 & 95.4\% \\
32000  & 94.0\% & 1.382 & 96.2\% & 1.538 & 94.8\% & 4.662  & 0.0\% & 0.340 & 93.9\% \\
64000  & 93.5\% & 1.027 & 96.0\% & 1.142 & 93.4\% & 3.479  & 0.0\% & 0.252 & 94.0\% \\
\bottomrule
\end{tabular}\\\vspace{1mm}
\end{subtable}

\begin{subtable}{\linewidth}
\caption{$\gamma=1.0$ ($\tauRD\approx 4.308$, $\tau_{\,\text{partial eq.}}=1$)}
\centering
\begin{tabular}{@{} c cc cc cc cc c @{}}
\toprule
& \multicolumn{2}{c}{\textbf{Proposed LLR}}
& \multicolumn{2}{c}{\textbf{AK variant}}
& \multicolumn{2}{c}{\textbf{Naive long-run}}
& \multicolumn{3}{c}{\textbf{Standard LLR}}\\
& \multicolumn{2}{c}{\small (targeting $\tau_{\mathrm{RD}}$)}
& \multicolumn{2}{c}{\small (targeting $\tau_{\mathrm{RD}}$)}
& \multicolumn{2}{c}{\small (targeting $\tau_{\mathrm{RD}}$)}
& \multicolumn{2}{c}{\small (targeting $\tau_{\mathrm{RD}}$)}
& \multicolumn{1}{c}{\small (for $\tau_{\text{partial eq.}}$)}\\
\cmidrule(lr){2-3}\cmidrule(lr){4-5}\cmidrule(lr){6-7}\cmidrule(lr){8-9}\cmidrule(l){10-10}
\textbf{$n$}
& \textbf{coverage} & \textbf{width}
& \textbf{coverage} & \textbf{width}
& \textbf{coverage} & \textbf{width}
& \textbf{coverage} & \textbf{width}
& \textbf{coverage}\\
\midrule
500    & 94.6\% & 13.578 & 97.8\% & 15.109 & 93.4\% & 78.775 & 0.0\% & 2.006 & 94.9\% \\
1000   & 94.6\% & 9.903  & 96.8\% & 11.019 & 94.6\% & 59.428 & 0.0\% & 1.499 & 96.1\% \\
2000   & 95.4\% & 7.363  & 97.5\% & 8.193  & 94.2\% & 43.770 & 0.0\% & 1.117 & 94.1\% \\
4000   & 93.6\% & 5.450  & 96.7\% & 6.064  & 95.5\% & 32.096 & 0.0\% & 0.829 & 94.8\% \\
8000   & 94.3\% & 4.010  & 96.3\% & 4.462  & 97.0\% & 23.007 & 0.0\% & 0.616 & 94.7\% \\
16000  & 94.1\% & 2.984  & 96.5\% & 3.320  & 96.4\% & 16.948 & 0.0\% & 0.457 & 95.4\% \\
32000  & 93.7\% & 2.213  & 96.3\% & 2.463  & 95.3\% & 12.729 & 0.0\% & 0.340 & 93.9\% \\
64000  & 94.0\% & 1.639  & 96.1\% & 1.824  & 92.2\% & 9.429  & 0.0\% & 0.252 & 94.0\% \\
\bottomrule
\end{tabular}
\end{subtable}

\caption{Empirical coverage and median width across $1{,}000$ replications of 95\% CIs for Setting 2 (with drift parameter $\delta=1$). We report results for the proposed LLR (using the standard normal quantile $1.96$ and the \citet{ArmstrongKolesar2020} critical value $2.18$), the naive long-run LLR, and the standard LLR.  We also report coverage for the standard LLR estimator targeting the partial equilibrium effect $\tau_{\text{partial eq.}}=1$. We vary the discount factor $\gamma\in\{0.5,0.8,1.0\}$ and use the uniform kernel with the \citet{imbens2012optimal} bandwidth. The oracle estimates of $\tauRD$ are obtained using $5{,}000{,}000$ replications.}
\label{tab:coverage-llr-nonstationary}
\end{table}

The empirical performance of our proposed method, as well as the baseline procedures described above, depends crucially on the choice of the bandwidth $h$ used in the kernel function. Here we use the data-driven IK bandwidth \citep{imbens2012optimal} for both baseline approaches, where for the second baseline we use a single bandwidth for the numerator and the denominator using the fuzzy-RD approach of \citep{imbens2012optimal}. The proposed local linear regression \eqref{eq:def-LLR} is run with the data-driven IK-style bandwidth we derive in \ref{app:bandwidth-selection}.  We also include the \citet{ArmstrongKolesar2020} variant which replaces the standard normal quantile $1.96$ with the critical value $2.18$ to account for worst-case bias over a smoothness class of regression functions. We emphasize that this is an ad hoc extension to the conventional confidence intervals, as the theoretical analysis has been established only for the static ($T=0$) case.  

We vary the discount factor $\gamma$ in $\{0.5, 0.8, 1\}$, and use an exponentially increasing sequence of sample sizes, namely $\{500, 1000, 2000, 4000,\dots, 64000\}$. We report in \cref{tab:coverage-llr,tab:coverage-llr-nonstationary} the empirical coverage and median width of $95\%$ confidence intervals constructed in Settings 1 and 2, respectively. The results are aggregated across $1000$ replications.
\cref{tab:coverage-llr,tab:coverage-llr-nonstationary} demonstrate that the conventional confidence intervals constructed from the proposed LLR provide near-nominal coverage (as expected with the bandwidth scaling at the MSE-optimal rate), and its \citet{ArmstrongKolesar2020} variant provides nominal coverage (as expected---heuristically---from the static RD case). In contrast, standard LLR confidence intervals severely undercover when used to make inference on $\tauRD$, with coverage dropping to zero. This is also not surprising, since the standard LLR targets  $\tau_{\,\text{partial eq.}}$, not the dynamic marginal policy effect $\tauRD$. Indeed, \cref{tab:coverage-llr,tab:coverage-llr-nonstationary} illustrate that the standard LLR confidence intervals provide nominal coverage for the parameter $\tau_{\,\text{partial eq.}}$.

The naive long-run LLR (Baseline 2) also provides near-nominal coverage in these simulations. However, the width of the naive confidence intervals are substantially wider than our proposed ones. In some cases, its coverage deteriorates, dropping to around~90\%. This breakdown is expected since this naive method only exploits the first-period threshold proximity and ignores how changing the threshold affects the frequency with which units cross the threshold in future periods. In contrast, our method maintains nominal coverage across all settings by correctly pooling information from all time periods using the twice-discounted weighting scheme.

The proposed confidence intervals are substantially wider than those from standard LLR, with widths increasing as $\gamma\uparrow 1$. This is due to the fact that the asymptotic variance of our estimator aggregates uncertainty across all future time periods, and the effective number of periods contributing to this variance grows as the discount factor increases. More fundamentally, the problem of estimating the long-term marginal policy effect $\tauRD$ is intrinsically more difficult than estimating the short-run partial equilibrium effect $\tau_{\,\text{partial eq.}}$, because $\tauRD$  incorporates spillovers and dynamic treatment switching across all future periods, each contributing additional uncertainty. 

\subsection{UVa/Padova Type-1 Diabetes Simulator}\label{sec:gluc-sim}

Here we consider a simulation exercise based on the FDA-approved UVa/Padova Type-1 Diabetes simulator \citep{UVA-PADOVA}, implemented via the \texttt{simglucose} Python package \citep{simglucose}. Motivated by clinical decision rules in diabetes management, this simulator incorporates realistic physiological dynamics: Insulin administration at time $t$ affects not only the immediate next glucose reading, but also influences future glucose trajectories and thus future treatment decisions.

\paragraph{Simulation design.} 
Consider a diabetes management protocol where Type-1 diabetes patients are monitored at $30$-minute intervals. At each time index $t$, we observe the patients' continuous glucose monitoring (CGM) readings denoted by $\mathrm{CGM}_{i,t}$ (in mg/dL). The patients are given an insulin bolus \emph{before taking a meal} if their CGM reading exceeds a pre-specified threshold $c$. Concretely, the treatment indicator is defined as $$A_{i,t} =  \mathrm{Meal}_{i,t}\cdot\mathbf{1}\{\mathrm{CGM}_{i,t} \geq c\},$$ where $\mathrm{Meal}_{i,t}$ is the indicator for whether a meal is consumed between time periods $t$ and $t+1$, chosen randomly (and endogenously) in the simulator. The reward at time $t$ is the negative Kovatchev risk index  \citep{ClarkeKovatchev}, defined as:
$$Y_{i,t}= -10 \left(1.509\left( \left(\log \mathrm{CGM}_{i,t+1}\right)^{1.084}-5.381\right)\right)^2.$$
Since the threshold-based treatment rule is only imposed before a meal and no action is taken otherwise, we define the running variable as 
$$Z_{i,t}=\begin{cases}
    \mathrm{CGM}_{i,t}&\text{if}\quad\mathrm{Meal}_{i,t}=1,\\ -\infty &\text{if}\quad\mathrm{Meal}_{i,t}=0.
\end{cases}.$$
With this convention, we indeed have $A_{i,t}=\pi_c(Z_{i,t})=\ind{Z_{i,t}\ge c}$ at all time periods.

\paragraph{Oracle estimation.}
To obtain ground-truth values of $\tauRD$ for different discount factors $\gamma$, we generate $100{,}000$ trajectories at each of the thresholds $c \in \{120, 125, \dots, 180\}$ and estimate the policy gradients via finite differences. \cref{fig:diabetes_oracle_days} reports oracle estimates across observation horizons of $1$, $2$, $4$, and $8$ days. We note that both the policy gradients $\gradc V^Y(\pi_c)$ and $\gradc V^A(\pi_c)$ vary systematically with the horizon, yet their ratio $\tauRD$ remains stable across all horizons. This suggests that, extending the horizon changes the outcome and treatment gradients in tandem, so the marginal change in welfare per unit change in treatment intensity is largely insensitive to the length of the observation window. On the other hand, \cref{fig:diabetes_oracle_gammas} reports the oracle estimates across discount factors $\gamma$, from $\gamma = 0$ (immediate reward) to $\gamma = 1$ (infinite horizon). Here we vary the discount factor $\gamma$ as $\{1-\frac{1}{24}, 1-\frac{1}{48},1-\frac{1}{96},1-\frac{1}{192},1\}$, corresponding to an effective horizon of $0.5, 1, 2, 4, \infty$ days, respectively. The marginal policy effect $\tauRD$ exhibits a clean, monotone behavior:~It is approximately null at $\gamma=0$, and increases steadily to a positive value as $\gamma \to 1$. In other words, a marginal increase in the threshold has negligible effect on the glucose reading immediately after the insulin dose, but yields a positive, cumulative effect on the long-run outcomes.

\begin{figure}[tp]
    \centering
    \includegraphics[width=\linewidth]{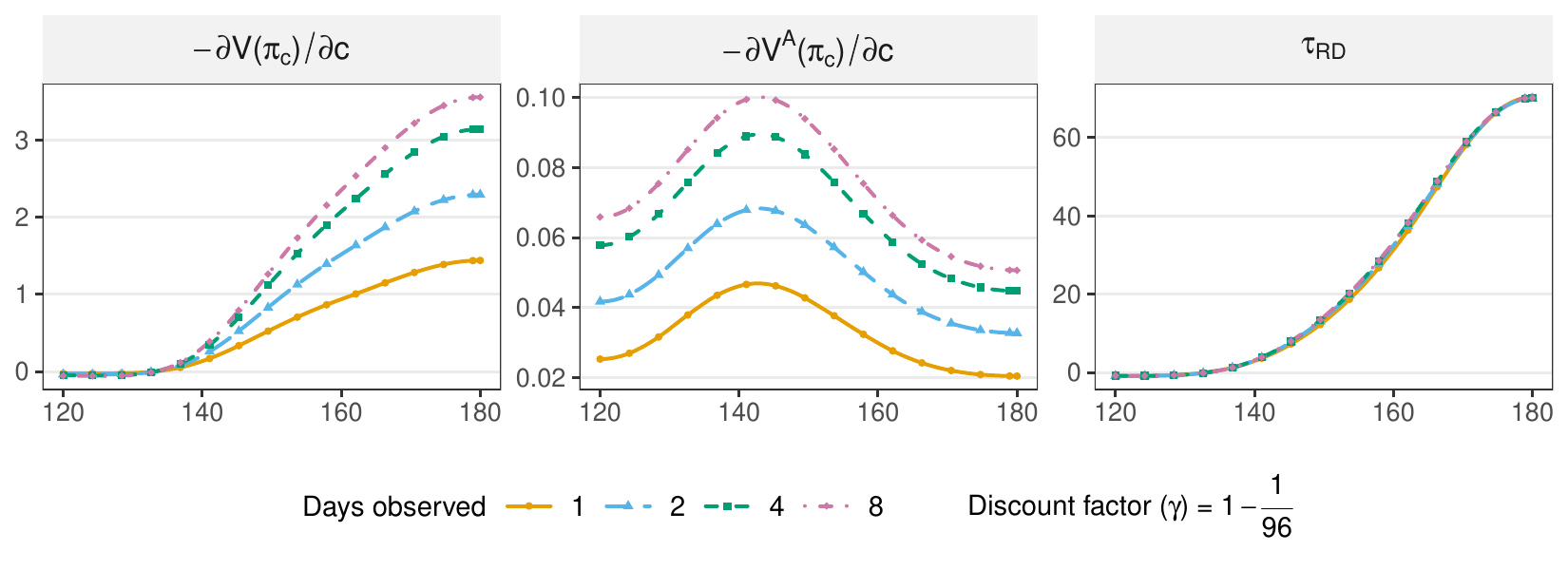}
    \caption{Oracle estimates across observation horizons of $1$, $2$, $4$, and $8$ days. The first two panels plot the derivatives of the outcome and treatment value functions against the threshold; both vary systematically with the horizon. The third panel shows that the marginal policy effect $\tauRD$, defined as their ratio,  remains stable across all four horizons.}
    \label{fig:diabetes_oracle_days}
\end{figure}

\begin{figure}[tp]
    \centering
    \includegraphics[width=\linewidth]{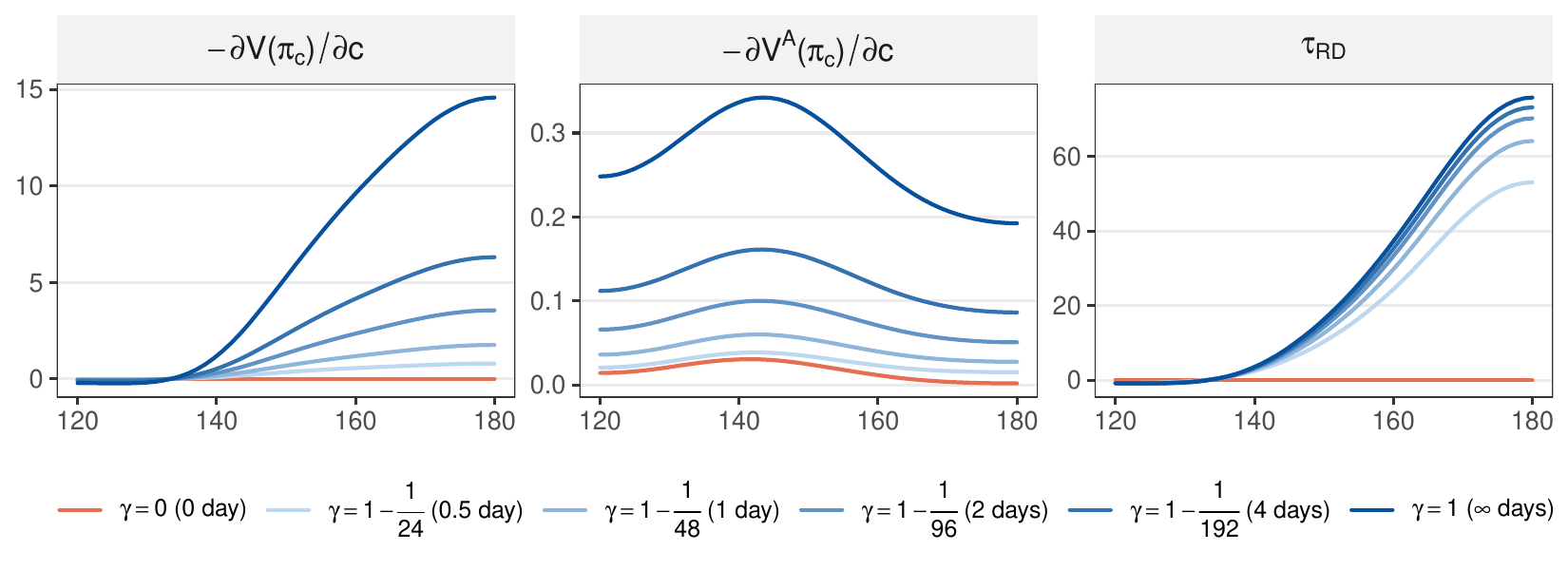}
    \caption{Oracle estimates across different discounting factors (with the effective horizons (in days) shown in parentheses). The first two panels plot the derivatives of the outcome and treatment value functions against the threshold. The third panel shows that the marginal policy effect $\tauRD$ increases gradually from $\gamma=0$ (short-term reward) to $\gamma=1$ (infinite horizon, long-term reward).}
    \label{fig:diabetes_oracle_gammas}
\end{figure}

\begin{table}[tp]
\centering
\renewcommand{\arraystretch}{1.2}
\setlength{\tabcolsep}{4pt}
\begin{tabular}{@{}cc cc cc cc cc@{}}
\toprule
& & \multicolumn{2}{c}{\textbf{Proposed LLR}}
  & \multicolumn{2}{c}{\textbf{AK variant}}
  & \multicolumn{2}{c}{\textbf{Naive long-run}}
  & \multicolumn{2}{c}{\textbf{Standard LLR}} \\
\cmidrule(lr){3-4}\cmidrule(lr){5-6}\cmidrule(lr){7-8}\cmidrule(l){9-10}
$\gamma$ & Eff.~horizon
& coverage & width
& coverage & width
& coverage & width
& coverage & width \\
\midrule
$1-\frac{1}{24}$  & $0.5$ day & $94.8\%$ & $20.12$ & $96.8\%$ & $22.39$ & $95.3\%$ & $22.69$  & $0.0\%$ & $0.71$ \\
$1-\frac{1}{48}$  & $1$ day   & $94.3\%$ & $23.80$ & $97.2\%$ & $26.49$ & $94.2\%$ & $35.57$  & $0.0\%$ & $0.71$ \\
$1-\frac{1}{96}$  & $2$ days  & $95.6\%$ & $28.00$ & $97.2\%$ & $31.16$ & $94.4\%$ & $53.78$  & $0.0\%$ & $0.71$ \\
$1-\frac{1}{192}$ & $4$ days  & $95.4\%$ & $34.09$ & $97.1\%$ & $37.94$ & $95.3\%$ & $74.74$  & $0.0\%$ & $0.71$ \\
$1$               & $\infty$  & $95.4\%$ & $45.62$ & $97.4\%$ & $50.77$ & $97.6\%$ & $119.21$ & $0.0\%$ & $0.71$ \\
\bottomrule
\end{tabular}
\caption{Empirical coverage and median width of 95\% confidence intervals for the dynamic marginal policy effect $\tauRD$ for the threshold $c=150$ using the Type-1 diabetes simulator \citep{UVA-PADOVA}. The confidence intervals are constructed from the proposed LLR (using the standard normal quantile $1.96$ as well as the \citet{ArmstrongKolesar2020} critical value $2.18$), the naive long-run LLR, and the standard LLR.  Results are aggregated across $1{,}000$ replications, each with $n=1{,}000$ independent trajectories.}
\label{tab:diabetes_results}
\end{table}

\paragraph{Confidence interval evaluation.} 
Next, we generate data under the thresholding policy with threshold $c={150}$ mg/dL and construct 95\% confidence intervals using the same methods as in \cref{sec:autoreg}, namely: The standard LLR \eqref{eq:static}, the naive long-run approach as in \eqref{eq:def-naive}, the conventional confidence intervals constructed from our proposed twice-discounted LLR \eqref{eq:def-LLR}, and its \citet{ArmstrongKolesar2020} variant that uses the critical value $2.18$ instead of the standard normal quantile of $1.96$. As in \cref{sec:autoreg}, we use time fixed-effects for variance reduction and follow \citet{imbens2012optimal} to select the bandwidth for each the methods. 

We simulate $n = 1000$ adult patients\footnote{The \texttt{simglucose} simulator \citep{simglucose} only generates the trajectories for $10$ adult patients, but it allows using `random scenarios', which we use with $n/10$ unique seeds per replication to generate $n$ trajectories.} in each replication, and report empirical coverage and interval width across $1000$ replications in \cref{tab:diabetes_results}. Here, the conventional confidence intervals constructed from our proposed LLR maintain nominal coverage across all discount factors (and so does the \citet{ArmstrongKolesar2020} variant). The standard LLR confidence intervals provide zero coverage, which is expected since the dynamic marginal policy effect is much larger than the short-term / partial equilibrium effect $\tau_{\text{partial eq.}}\approx 0$ (\cref{fig:diabetes_oracle_gammas})---which is what this method is designed to target. The naive long-run method, on the other hand, provides nominal coverage, but at the expense of extremely wide confidence intervals---which is uninformative in most of the replications (contains both $\tauRD>0$ as well as zero).

The above results demonstrate that our proposed method provides reliable performance for the UVa/Padova Type-1 Diabetes simulator with complex treatment dynamics and physiological feedback mechanisms. The breakdown of competing methods at higher discount factors highlights the importance of properly accounting for the temporal structure of thresholding designs is essential for valid inference about long-run policy effects.

\section{Discussion}

Treatment decisions based on clinical thresholds are ubiquitous in modern healthcare. Patients are often evaluated periodically against the same threshold, creating a dynamic setting where treatment decision at one visit affects clinical measurements, and thus treatment assignments, at subsequent visits. This feedback loop between treatment and state evolution poses a fundamental challenge for policy evaluation: A social planner considering whether to adjust the treatment threshold needs to determine the long-term welfare consequences, accounting for how today's treatment decisions alter tomorrow's patient states and treatment patterns.

The standard approaches for analyzing threshold-based policies in regression discontinuity designs compare the outcomes of units just above and below the treatment threshold and have been widely used in cross-sectional settings. However, the same approach, when naively applied in the dynamic setting, can only recover the immediate, partial equilibrium effect of crossing the threshold rather than the cumulative impact on health over time. And existing attempts to accommodate dynamics rely on restrictive parametric assumptions \citep{cellini2010value} or strong conditional independence conditions \citep{hsu2024dynamic} that are unlikely to hold in generic dynamic thresholding designs.

In this paper, we pursued a new approach to causal inference in dynamic thresholding designs that focuses on the marginal policy effect of increasing/decreasing the treatment intensity via moving the treatment-eligibility threshold. We showed that our marginal policy effect estimand, $\tauRD$, is equivalent to the standard regression-discontinuity parameter in the static setting, and remains identified in the dynamic setting using only data collected at the status-quo threshold. Finally, we proposed a tractable local linear regression algorithm that consistently estimates this quantity.

In the introduction, we discussed a paper by \citet{IIZUKA2021} that sought to assess cost-effectiveness of preventative care assigned to patients whose fasting blood sugar (FBS) was higher than a pre-specified cutoff, $c = 110$ mg/dL, at which they were declared prediabetic. In their original study, the authors only considered partial-equilibrium effects of treatment (i.e., of preventative care given one year on next-period outcomes), thus avoiding the need to model dynamics.

When applied to the setting of \citet{IIZUKA2021}, our results imply that it is possible to estimate the marginal policy effect of moving the cutoff $c$ under a fully dynamics-aware model. Furthermore, in this setting, our estimand $\tauRD$ can be interpreted as the net-present health benefit of infinitesimally lowering the FBS cutoff for pre-diabetes, scaled by the net-present cost of treatment (assuming each treatment triggered by a pre-diabetes diagnosis has the same cost). It is well known that such ratio-form estimands are exactly of the form that's needed to formally optimize cost-benefit tradeoffs in resource allocation \citep{dantzig1957discrete}: We should generally lower $c$ (and give preventative care to more people) if $\tauRD$ exceeds the corresponding cost-benefit ratio for an outside option, and raise $c$ (i.e., give less preventative care and re-allocate budget to the outside option) otherwise. As such, at least in the setting of \citet{IIZUKA2021}, our approach doesn't just give a crude proxy for a causal effect of interest, but in fact directly enables a formal, dynamics-aware cost-benefit analysis.

There are many questions left open in this paper. Here, we showed that one specific local linear regression estimator is consistent for $\tauRD$, but it would be interesting to see if other algorithms may be more efficient. Similarly, it would be valuable to investigate how best to make use of time-varying contextual information, perhaps extending results in \citet{calonico2019regression} or \citet{noack2021flexible}. Finally, in the cross-sectional setting, there has been a large literature on methods for inference based on either eliminating \citep{CCT2014robustCI} or accommodating \citep{armstrong2018optimal,noack2024bias} bias when constructing confidence intervals, and in using convex optimization to design point estimators with improved worst-case error properties relative to local linear regression \citep{imbens2019optimized,ghosh2025plrd}. Extending such contributions to the dynamic setting would be of considerable interest.

\bibliographystyle{chicago}
\bibliography{dynamic_RD_refs}

\newpage

\appendix
\numberwithin{equation}{section}
\numberwithin{lemma}{section}
\numberwithin{theorem}{section}
\numberwithin{proposition}{section}

\renewcommand{\thesection}{Appendix~\Alph{section}}

\makeatletter
\renewcommand{\theequation}{\@Alph\c@section.\arabic{equation}}
\renewcommand{\thetheorem}{\@Alph\c@section.\arabic{theorem}}
\renewcommand{\thelemma}{\@Alph\c@section.\arabic{lemma}}
\renewcommand{\theproposition}{\@Alph\c@section.\arabic{proposition}}
\renewcommand{\thesubsection}{\@Alph\c@section.\arabic{subsection}}
\makeatother

\linespread{1.1}

\section{Bandwidth Selection}\label{app:bandwidth-selection}

This appendix describes the bandwidth choices used in the simulations in \cref{sec:numerical-experiments}. For the static and naive long-run baselines, we use the corresponding \citet{imbens2012optimal} bandwidth rules. For the proposed estimator \eqref{eq:def-LLR}, we use the same residualization idea as in the fuzzy-RD bandwidth rule of \citet{imbens2012optimal}, with the variance and curvature quantities replaced by those appearing in the dynamic expansion in \cref{clt-for-twice-discounted-llr}.

\subsection{Bandwidths for the Static LLR and Naive Long-run LLR}

For the static / cross-sectional LLR baseline \eqref{eq:static}, we use the usual \citet{imbens2012optimal} bandwidth applied to the stacked one-period observations $(Y_{i,t},Z_{i,t})$. 
The naive long-run estimator \eqref{eq:def-naive} divides the local linear jump in the first-period long-run outcome $\Gamma_{i,0}^Y$ by the corresponding jump in the first-period long-run exposure $\Gamma_{i,0}^A$. We therefore use the fuzzy-RD extension of \citet{imbens2012optimal} to select a single bandwidth for this ratio estimator.
\begin{enumerate}[label=(\roman*)]
    \item Pick a pilot bandwidth $h_{\mathrm{pilot}}$, e.g., by applying the sharp IK bandwidth selector to the local linear regression of $\Gamma_{i,0}^Y$ on $Z_{i,0}$.

    \item Using $h_{\mathrm{pilot}}$, apply the naive long-run LLR \eqref{eq:def-naive} and obtain the pilot ratio estimator:
    \[
        \wh\tau_{\,\mathrm{pilot}}
        =
        \wh\tau^Y_{\mathrm{naive}}(h_{\mathrm{pilot}})
        \big/
        \wh\tau^A_{\mathrm{naive}}(h_{\mathrm{pilot}}).
    \]

    \item Run the sharp IK bandwidth selector for the LLR of $\wt\Gamma_i$ on $Z_{i,0}$, where
\[
    \wt\Gamma_i
    =
    \Gamma_{i,0}^Y-\wh\tau_{\,\mathrm{pilot}}\Gamma_{i,0}^A.
\]
\end{enumerate}
This is the usual residualized IK rule for a ratio of two RD jumps.

\subsection{Bandwidth for our Proposed Method}

The asymptotic expansion in \cref{clt-for-twice-discounted-llr} implies that minimizing the asymptotic mean squared error (AMSE) with respect to the bandwidth $h$ results in the following (infeasible) optimal bandwidth:
\begin{equation}\label{eq:dynamic-infeasible-bandwidth}
\begin{split}
   h_{\,\mathrm{IK},\mathrm{dyn}}^*&=\argmin_{h} \left\{\frac{V_{YY}+\tauRD^2V_{AA}-2\tauRD V_{YA}}{n h}
    +\frac{h^4}{4}\xi_1^2\left(\Delta\mu_Y''(c)-\tauRD\Delta\mu_A''(c)\right)^2 \right\}\\
    &= \left(\frac{V_{YY}+\tauRD^2V_{AA}-2\tauRD V_{YA}}{\xi_1^2\left(\Delta\mu_Y''(c)-\tauRD\Delta\mu_A''(c)\right)^2}\right)n^{-1/5}.
\end{split}
\end{equation} 
 Next, we derive a feasible, data-driven bandwidth $\wh{h}_{\,\mathrm{IK},\mathrm{dyn}}$ which is consistent for the AMSE-optimal bandwidth as above, in the sense that $\wh{h}_{\,\mathrm{IK},\mathrm{dyn}}/h_{\,\mathrm{IK},\mathrm{dyn}}^*\,\Pto\, 1$.
 Since the proposed estimator $\htauRD$ is a ratio of local linear estimators, we follow the procedure for the fuzzy-RD case in \citet{imbens2012optimal} and choose a single bandwidth (instead of two separate bandwidths for the numerator and the denominator in \eqref{eq:def-LLR}) by applying the sharp-RD calculation to the residualized outcomes $\Gamma^Y-\tauRD\Gamma^A$. We note, however, that for the dynamic setting, the residualized IK calculation must be combined with the `new' bias and variance expressions as given in \cref{clt-for-twice-discounted-llr}.

For ease of exposition, we first present the `unregularized' version of the proposed bandwidth; the finite-sample regularization used in the simulations is introduced later.

\begin{proc}[Dynamic IK bandwidth]\label{proc:dynamic-ik-bandwidth}
{\hspace{1mm}}
\begin{enumerate}[label=(\roman*)]
        \item Compute the discounted sums  $\Gamma^Y_{i,t}$ and $\Gamma^A_{i,t}$ as in \eqref{eqn:def-Gamma}. Obtain $h_{\mathrm{pilot}}$ by applying the sharp IK bandwidth selector to the stacked pairs $(\Gamma^Y_{i,t},Z_{i,t})$. 
        
        \item Run \eqref{eq:def-LLR} with this bandwidth $h_{\mathrm{pilot}}$ to obtain the point estimate $\wh\tau_{\mathrm{pilot}}$ and the unit-clustered covariance estimates $\wh V_{YY}$, $\wh V_{AA}$ and $\wh V_{YA}$ for the two local-linear jumps (\cref{propo:sandwich-estimator}). Set
    \[
        \widehat V_{\mathrm{dyn}}
        =
        nh_{\mathrm{pilot}}\left\{
        \wh V_{YY}
        +\wh\tau_{\mathrm{pilot}}^2\wh V_{AA}
        -2\wh\tau_{\mathrm{pilot}}\wh V_{YA}
        \right\}.
    \]
    Also compute the residualized outcomes $\wt\Gamma_{i,t}:=\Gamma^Y_{i,t}-\wh\tau_{\mathrm{pilot}}\Gamma^A_{i,t}$. 
    
    \item Following \citet{imbens2012optimal}, we find pilot curvature bandwidths as follows. First pick a pilot bandwidth using the Silverman rule
    \(
        h_1=1.84\,\wh{\mathrm{sd}}\{Z_{i,t}\}\, n^{-1/5}
    \). Then compute
$$\wh{f}_\gamma(c)=\frac{\sum_{i=1}^n\sum_{t=0}^{T_n}\gamma^t \ind{|Z_{i,t}-c|\le h_1}}{2h_1\sum_{i=1}^n\sum_{t=0}^{T_n}\gamma^t}.$$
Also run the cubic regression of $\wt\Gamma_{i,t}$ on $1$, $A_{i,t}$, $Z_{i,t}-c$, $\frac{1}{2}(Z_{i,t}-c)^2$ and $\frac{1}{6}(Z_{i,t}-c)^3$, using observations with $Z_{i,t}-c$ between the left- and right-side medians. Denote by $\wh{m}_{3,\gamma}$ the coefficient of the cubic term, and by $\wh{\sigma}^2_\gamma$ the weighted residual variance: $\wh{\sigma}^2_\gamma:=\sum_{i,t}\gamma^t\,\wh{e}_{i,t}^2/\sum_{i,t}\gamma^t$. Finally, set
\[
    \begin{split}
        h_{2,+}
        &:=
        3.56\left\{\frac{\wh\sigma^2_\gamma}{\wh f_\gamma(c)\max(\wh m_{3,\gamma}^2,0.01)}\right\}^{1/7}n_+^{-1/7},\qquad n_+ := n\frac{\sum_{i=1}^n\sum_{t=0}^{T_n}\gamma^t\,\ind{Z_{i,t}\ge c}}{\sum_{i=1}^n\sum_{t=0}^{T_n}\gamma^t},\\
        h_{2,-}
        &:=
        3.56\left\{\frac{\wh\sigma^2_\gamma}{\wh f_\gamma(c)\max(\wh m_{3,\gamma}^2,0.01)}\right\}^{1/7}n_{-}^{-1/7},\qquad n_{-} := n\frac{\sum_{i=1}^n\sum_{t=0}^{T_n}\gamma^t\,\ind{Z_{i,t}< c}}{\sum_{i=1}^n\sum_{t=0}^{T_n}\gamma^t}.
    \end{split}
    \]

    \item 
    With the pilot curvature bandwidths as above, estimate $\widehat B_{\mathrm{dyn}}$ as the coefficient on $\frac12 A_{i,t}(Z_{i,t}-c)^2$ in the local quadratic regression of $\wt\Gamma_{i,t}$ on
    \[
    1,\ A_{i,t},\ Z_{i,t}-c,\ A_{i,t}(Z_{i,t}-c),\ \frac12 (Z_{i,t}-c)^2,\ \frac12 A_{i,t}(Z_{i,t}-c)^2,
    \]
    using weights $\gamma^t\, K(|Z_{i,t}-c|/h_{2,+})$ on the right side and $\gamma^t\, K(|Z_{i,t}-c|/h_{2,-})$ on the left side. Additionally, include time fixed effects exactly as in \eqref{eq:def-LLR}.
    \item Return
    \[
        \wh h_{\,\mathrm{IK},\mathrm{dyn}}
        :=
        \left(
        \frac{\widehat V_{\mathrm{dyn}}}{\xi_1^2\widehat B_{\mathrm{dyn}}^2}
        \right)^{1/5} n^{-1/5},
    \]
    where $\xi_1$ is the same constant as in \cref{clt-for-twice-discounted-llr}.
\end{enumerate}
\end{proc}

The preliminary window $h_1$ chosen by the Silverman rule is used to estimate the local noise level and density at the cutoff; the constant $1.84=(12\sqrt{\pi})^{1/5}$ is the same as in \citet{imbens2012optimal} for this preliminary uniform-window calculation. The bandwidths $h_{2,+}$ and $h_{2,-}$ are the one-sided pilot bandwidths for estimating second derivatives by local quadratic regressions. The constant $3.56$ is also borrowed from \citet{imbens2012optimal}; it comes from the corresponding MSE calculation for this curvature pilot, where the variance is of order $(n_\pm f_\gamma(c)h_{2,\pm}^5)^{-1}$ and the squared bias is of order $m_{3,\gamma}^2h_{2,\pm}^2$. As in \citet{imbens2012optimal}, the term $\max(\wh m_{3,\gamma}^2,0.01)$ stabilizes this pilot bandwidth when the global cubic estimate is close to zero. These pilot choices enter only through $\widehat B_{\mathrm{dyn}}$; the final bandwidth is governed by the dynamic AMSE as in \eqref{eq:dynamic-infeasible-bandwidth}.

The main changes in the dynamic setting enter through the two estimated objects in \eqref{eq:dynamic-infeasible-bandwidth}. The variance pilot is computed from the unit-clustered covariance matrix in \cref{propo:sandwich-estimator}, allowing temporal correlations across periods within the same trajectory. On the other hand, to ensure that the curvature pilot estimates the curvature of the dynamic response functions in \eqref{def-muRz}, we modify the standard local quadratic regression with the same discounting ($\gamma^t$) weights as in \eqref{eq:def-LLR}. We also use time fixed effects which remove period-specific local means from the pilot regression. As in the main paper, these fixed effects leave the leading curvature term unchanged.

\begin{proposition}[Consistency of the dynamic IK bandwidth]\label{prop:dynamic-bandwidth-consistency}
Suppose that the conditions of \cref{clt-for-twice-discounted-llr} hold, and that $\Delta\mu_Y''(c)-\tauRD\Delta\mu_A''(c)\neq0$. Assume further that the residualized dynamic response functions are three times continuously differentiable in a neighborhood of $c$. Then, the feasible bandwidth $\wh h_{\,\mathrm{IK},\mathrm{dyn}}$ computed in \cref{proc:dynamic-ik-bandwidth} satisfies
\[
    \wh h_{\,\mathrm{IK},\mathrm{dyn}}/h_{\,\mathrm{IK},\mathrm{dyn}}^{*}\,\Pto\,1.
\]
\end{proposition}

\paragraph{Regularization.} The plug-in rule divides by an estimated squared curvature, and this denominator is sensitive to sampling noise when the curvature estimate is close to zero. For robust finite-sample performance, \citet{imbens2012optimal} stabilize the denominator  by adding a regularization term to the squared curvature, which is defined as three times the estimated variance of the estimator of the second-derivative jump. 
We apply the same regularization to the curvature difference estimated in Step $(iv)$ of \cref{proc:dynamic-ik-bandwidth}. The two one-sided curvatures are estimated in a single weighted local quadratic regression, so the unit-clustered covariance matrix for that regression includes the covariance between the two curvature estimates. Let $\wh R_{\mathrm{dyn}}$ denote the resulting clustered variance estimate for $\widehat B_{\mathrm{dyn}}$. The regularized bandwidth used in the simulations is
\[
    \wh h_{\,\mathrm{IK},\mathrm{dyn}}
    =
    \left(
    \frac{\widehat V_{\mathrm{dyn}}}{\xi_1^2(\widehat B_{\mathrm{dyn}}^2+3\wh R_{\mathrm{dyn}})}
    \right)^{1/5}n^{-1/5}.
\]
This regularization has no first-order effect under the conditions of \cref{prop:dynamic-bandwidth-consistency}. For the second-derivative coefficient in a local quadratic regression, the variance is of order $(nh_2^5)^{-1}$, and therefore
\[
    \wh R_{\mathrm{dyn}}=O_p\left((nh_{2,+}^5)^{-1}+(nh_{2,-}^5)^{-1}\right)=o_p(1).
\]
Thus the regularized bandwidth also satisfies $\wh h_{\,\mathrm{IK},\mathrm{dyn}}/h_{\,\mathrm{IK},\mathrm{dyn}}\,\Pto\,1$. 

\section{Main Proofs}

\subsection{\texorpdfstring{Proof of \cref{lemma:classical-RD-as-policy-gradient}}{classical RD as a policy gradient}}\label{proof-of-lemma:classical-RD-as-policy-gradient}

\begin{proof}
    First observe that the value function under policy $\pi_c$ is given by $$V(\pi_c)=\E[Y_i(\pi_c(Z_i))]=\int_c^\infty \mu_{1}(z)\,f(z)\,dz +\int_{-\infty}^c \mu_{0}(z)\,f(z)\,dz,$$
    where $\mu_{a}(z):=\E[Y(a)\mid Z = z]$. Differentiating with respect to $c$ using the Leibniz rule, we obtain
    $$\gradc V(\pi_c)=-(\mu_{1}(c) -\mu_{0}(c))f(c).$$
    Similarly,
    $$\gradc \E_{\pi_c}[A_i]=\gradc\int_c^\infty f(z)\,dz=-f(c).$$
    Combining the last two displays we arrive at the desired conclusion.
\end{proof}

\subsection{\texorpdfstring{Proof of \cref{prop:g-formula}}{verifying the g-formula}}
\label{proof-of-prop:g-formula}

\begin{proof}
Our proof proceeds by recursion. Our goal is to argue that, if \eqref{eq:pomdp} holds,
then the following conditional distributions are all policy independent for all $t = 0, \, 1, \, 2, \ldots$
\begin{align}
\label{eq:recurse1}
&\mathbb{P}_{\pi_c}\left[U_{i,t} \,\big|\, S_{i,t}\right] = \mathbb{P}\left[U_{i,t} \,\big|\, S_{i,t}\right], \\
\label{eq:recurse2}
&\mathbb{P}_{\pi_c}\left[Z_{i,t} \,\big|\, S_{i,t}\right] = \mathbb{P}\left[Z_{i,t} \,\big|\, S_{i,t}\right], \\
\label{eq:recurse3}
&\mathbb{P}_{\pi_c} \left[Y_{i,t} \,\big|\, S_{i,t}, \,  Z_{i,t}, \, A_{i,t} = \ind{Z_{i,t}\ge c}\right]=\mathbb{P}\left[Y_{i,t} \,\big|\, S_{i,t}, \,  Z_{i,t}, \, A_{i,t} = \ind{Z_{i,t}\ge c}\right].
\end{align}
The second and third equalities then immediately imply that the $g$-formula \eqref{eq:g-formula}
holds; while the first is a useful stepping stone in establishing the recursion.

Our base case is to note that \eqref{eq:recurse1} holds with $t = 0$; this is
immediate from \eqref{eq:pomdp}. Next, we verify
that if \eqref{eq:recurse1} holds for a given $t$, then \eqref{eq:recurse2} must
also hold for the same $t$. Clearly
\begin{align*}
\mathbb{P}_{\pi_c}\left[Z_{i,t} \,\big|\, S_{i,t}\right] = 
\int \mathbb{P}_{\pi_c}\left[Z_{i,t} \,\big|\, S_{i,t}, \, U_{i,t}\right] \mathbb{P}_{\pi_c}\left[U_{i,t} \,\big|\, S_{i,t}\right] d\lambda(U_{i,t}),
\end{align*}
and $\mathbb{P}_{\pi_c}\left[Z_{i,t} \,\big|\, S_{i,t}, \, U_{i,t}\right] = \mathbb{P}\left[Z_{i,t} \,\big|\, S_{i,t}, \, U_{i,t}\right]$ by \eqref{eq:pomdp}, while the second term on the right is
policy independent by our recursion hypothesis on \eqref{eq:recurse1}.

Moving onwards, we verify that if \eqref{eq:recurse1} and \eqref{eq:recurse2} hold
for a given $t$, then \eqref{eq:recurse3} also holds for that $t$. We again expand
\begin{align*}
&\mathbb{P}_{\pi_c} \left[Y_{i,t} \,\big|\, S_{i,t}, \,  Z_{i,t}, \, A_{i,t} = \ind{Z_{i,t}\ge c}\right] \\
&\quad\quad\quad\quad  = \int \mathbb{P}_{\pi_c} \left[Y_{i,t} \,\big|\, S_{i,t}, \,  Z_{i,t}, \, A_{i,t} = \ind{Z_{i,t}\ge c},\, U_{i,t}\right] \\
&\quad\quad\quad\quad\quad\quad\quad\quad  \mathbb{P}_{\pi_c} \left[U_{i,t} \,\big|\, S_{i,t}, \,  Z_{i,t}, \, A_{i,t} = \ind{Z_{i,t}\ge c}\right]
d\lambda(U_{i,t}).
\end{align*}
The first of the right-hand-side terms is again immediately policy independent
by \eqref{eq:pomdp}. Meanwhile, for the second, we can use Bayes' rule to get
\begin{align*}
\mathbb{P}_{\pi_c} \left[U_{i,t} \,\big|\, S_{i,t}, \,  Z_{i,t}, \, A_{i,t} = \ind{Z_{i,t}\ge c}\right]
&= \frac{\mathbb{P}_{\pi_c} \left[U_{i,t} \,\big|\, S_{i,t} \right]\mathbb{P}_{\pi_c} \left[Z_{i,t} \,\big|\, S_{i,t}, \, U_{i,t} \right] \ind{ A_{i,t} = \ind{Z_{i,t}\ge c} }}
{\mathbb{P}_{\pi_c} \left[Z_{i,t} \,\big|\, S_{i,t} \right] \ind{ A_{i,t} = \ind{Z_{i,t}\ge c} }} \\
&= \frac{\mathbb{P}_{\pi_c} \left[U_{i,t} \,\big|\, S_{i,t} \right]\mathbb{P}_{\pi_c} \left[Z_{i,t} \,\big|\, S_{i,t}, \, U_{i,t} \right]}
{\mathbb{P}_{\pi_c} \left[Z_{i,t} \,\big|\, S_{i,t} \right]}.
\end{align*}
The two terms in the numerator are threshold independent by \eqref{eq:recurse1}
and the assumed decomposition \eqref{eq:pomdp} respectively, while the one in
the denominator is threshold independent by \eqref{eq:recurse2}.

It remains to verify that \eqref{eq:recurse1} holds for time $t$ when all of
(\ref{eq:recurse1}--\ref{eq:recurse3}) hold for time $t-1$. We expand
\begin{equation*}
\mathbb{P}_{\pi_c} \left[U_{i,t} \,\big|\, S_{i,t} \right]
= \int \mathbb{P}_{\pi_c} \left[U_{i,t} \,\big|\, S_{i,t}, \, U_{i,t-1} \right]
\mathbb{P}_{\pi_c} \left[U_{i,t-1} \,\big|\, S_{i,t} \right] d\lambda(U_{t-1}).
\end{equation*}
Again $\mathbb{P}_{\pi_c} \left[U_{i,t} \,\big|\, S_{i,t}, \, U_{i,t-1} \right]
= \mathbb{P} \left[U_{i,t} \,\big|\, S_{i,t}, \, U_{i,t-1} \right]$ thanks to
the Markovian structure implied by \eqref{eq:pomdp}. Meanwhile,
\begin{align*}
\mathbb{P}_{\pi_c} \left[U_{i,t-1} \,\big|\, S_{i,t} \right]
= \frac{\mathbb{P}_{\pi_c} \left[U_{i,t-1} \,\big|\, S_{i,t-1} \right]\mathbb{P}_{\pi_c} \left[S_{i,t} \,\big|\, S_{i,t-1},\, U_{i,t-1} \right]}{\mathbb{P}_{\pi_c} \left[S_{i,t} \,\big|\, S_{i,t-1} \right]}
\end{align*}
The first term in the numerator is policy independent by induction on \eqref{eq:recurse1}.
Meanwhile, the ratio $\mathbb{P}_{\pi_c} \left[S_{i,t} \,\big|\, S_{i,t-1},\, U_{i,t-1} \right] \,\big/\, \mathbb{P}_{\pi_c} \left[S_{i,t} \,\big|\, S_{i,t-1} \right]$ can further be simplified as
{\footnotesize
\begin{align*}
&\frac{\mathbb{P}_{\pi_c} \left[Z_{i,t-1} \,\big|\, S_{i,t-1},\, U_{i,t-1} \right] \ind{A_{i,t-1} = \ind{Z_{i,t-1} \geq c}} \mathbb{P}_{\pi_c} \left[Y_{i,t-1} \,\big|\, S_{i,t-1}, \, Z_{i,t-1}, \, A_{i,t-1} = \ind{Z_{i,t-1} \geq c},\, U_{i,t-1} \right]}{\mathbb{P}_{\pi_c} \left[Z_{i,t-1} \,\big|\, S_{i,t-1} \right] \ind{A_{i,t-1} = \ind{Z_{i,t-1} \geq c}} \mathbb{P}_{\pi_c} \left[Y_{i,t-1} \,\big|\, S_{i,t-1}, \, Z_{i,t-1}, \, A_{i,t-1} = \ind{Z_{i,t-1} \geq c} \right]} \\
&\quad= \frac{\mathbb{P}_{\pi_c} \left[Z_{i,t-1} \,\big|\, S_{i,t-1},\, U_{i,t-1} \right] \mathbb{P}_{\pi_c} \left[Y_{i,t-1} \,\big|\, S_{i,t-1}, \, Z_{i,t-1}, \, A_{i,t-1} = \ind{Z_{i,t-1} \geq c},\, U_{i,t-1} \right]}{\mathbb{P}_{\pi_c} \left[Z_{i,t-1} \,\big|\, S_{i,t-1} \right] \mathbb{P}_{\pi_c} \left[Y_{i,t-1} \,\big|\, S_{i,t-1}, \, Z_{i,t-1}, \, A_{i,t-1} = \ind{Z_{i,t-1} \geq c} \right]}
\end{align*}}

\noindent Finally, the two conditional expectations in the numerator are policy independent by  \eqref{eq:pomdp},
while those in the denominator are policy independent by recursing on \eqref{eq:recurse2} and
\eqref{eq:recurse3} respectively.
\end{proof}

\subsection{\texorpdfstring{Proof of \cref{thm:expression-for-policy-gradient}}{Policy gradient formula}}\label{proof-of-policy-gradient}

\begin{proof}
     Since units $i=1,\dots,n$ are identically distributed, we drop the unit index throughout this proof. Denote by $P_t^{\pi_c}$ the distribution of $S_t$ under the policy $\pi_c$. Define
     $$V^Y_{c,\,t}(P_t)=\E_{\pi_c}\left[\sum_{j=0}^{T-t} \gamma^j\,Y_{t+j}\,\,\bigg|\,\, S_t\sim P_t\right]=\E_{S_t\sim P_t} \int_\R Q_{c,\,t}^Y(S_t,\,z,\,\pi_c(z)) f_t(z\mid S_t)\,dz,$$
where $$Q_{c,\,t}^Y(s,\,z,\,a)=\E_{\pi_c}\left[\,\sum_{j=0}^{T-t}\gamma^j\,Y_{t+j}\,\,\bigg|\,\, S_t=s,\,Z_t=z,\,A_t=a\right].$$
 Assume without loss of generality that $h>0$ (the case $h<0$ follows similarly) and $h<\eta$ where $\eta$ is as in \eqref{eq:integrability}. Since $S_0=\{\emptyset\}$, we can write
 \begin{equation}\label{eq:trivial}
     \frac{V^Y(\pi_{c+h})-V^Y(\pi_c)}{h}=\frac{V^Y_{c+h,\,0}(P_0^{\pi_{c}})-V^Y_{c,\,0}(P_0^{\pi_{c}})}{h}.
 \end{equation}
 Observe now that for any distribution $P_t$ for $S_t$,
\begin{align*}
    \frac{V^Y_{c+h,\,t}(P_t)-V^Y_{c,\,t}(P_t)}{h}  
    &=\frac{1}{h}\,\E_{S_t\sim P_t}\int_\R \left\{Q_{c+h,\,t}^Y(S_t,\, z,\, \pi_{c+h}(z)) - Q_{c,\,t}^Y(S_t,\, z,\, \pi_{c}(z))\right\}f_t(z\mid S_t)dz\\[2mm]
    &=\frac{1}{h}\,\E_{S_t\sim P_t}\int_\R \left\{Q_{c,\,t}^Y(S_t,\, z,\, \pi_{c+h}(z)) - Q_{c,\,t}^Y(S_t,\, z,\, \pi_{c}(z))\right\}f_t(z\mid S_t)dz\\[2mm]
    &\quad+\frac{1}{h}\,\E_{S_t\sim P_t}\int_\R \left\{Q_{c+h,\,t}^Y(S_t,\, z,\, \pi_{c+h}(z)) - Q_{c,\,t}^Y(S_t,\, z,\, \pi_{c+h}(z))\right\}f_t(z\mid S_t)dz\\[2mm]
    &=-M_t(h; P_t)+R_t(h; P_t),\numberthis\label{policy-grad-finite-differencing}
\end{align*}
     where
$$M_t(h; P_t):=\E_{S_t\sim P_t}\left[\frac{1}{h}\int_c^{c+h}\left(Q_{c,\,t}^Y(S_t,\,z,\,1)-Q_{c,\,t}^Y(S_t,\,z,\,0)\right)f_t(z\mid S_t)\,dz\right],$$
and $$R_t(h; P_t):=\E_{S_t\sim P_t}\left[\frac{1}{h} \int_\R\left(Q_{c+h,\,t}^Y\left(S_t, z, \pi_{c+h}(z)\right)-Q_{c,\,t}^Y\left(S_t, z, \pi_{c+h}(z)\right)\right) f_t\left(z \mid S_t\right) dz\right].$$
      Next, define $\mu_t(s,\, z,\, a):=\E\left[Y_{t}\mid S_{t}=s, \, Z_{t}=z,\, A_t=a\right]$ and recall that the Bellman equation reads
$$Q_{c',\,t}^Y(s,\, z,\, a) = \mu_t(s,\, z,\, a)+\gamma V^Y_{c',\,t+1}(P_{t+1}^{\,s,\, z,\, a}),$$
for any $c'$, where $P_{t+1}^{\,s,\, z,\, a}$ is the law after one step transition from $(S_t=s,\, Z_t=z,\, A_t=a)$.
Consequently, 
\begin{equation}\label{bellman}
    Q_{c+h,\,t}^Y\left(s,\, z,\, a\right)-Q_{c,\,t}^Y\left(s,\, z,\, a\right)=\gamma\left(V^Y_{c+h,\,t+1}(P_{t+1}^{\,s,\, z,\, a})-V^Y_{c,\,t+1}(P_{t+1}^{\,s,\, z,\, a})\right).
\end{equation}
This combined with the definition of $R_0(h)$ yields that
\begin{align*}
  &  R_t(h; P_t^{\pi_c})\\
  &=\E_{S_t\sim P_t^{\pi_c}}\left[\frac{1}{h} \int_\R\left(Q_{c+h,\,t}^Y\left(S_t, z, \pi_{c+h}(z)\right)-Q_{c,\,t}^Y\left(S_t, z, \pi_{c+h}(z)\right)\right) f_t\left(z \mid S_t\right) dz\right]\tag{by definition}\\[2mm]
    &=\E_{S_t\sim P_t^{\pi_c}}\left[\frac{1}{h} \int_\R\left(Q_{c+h,\,t}^Y\left(S_t, z, \pi_{c}(z)\right)-Q_{c,\,t}^Y\left(S_t, z, \pi_{c}(z)\right)\right) f_t\left(z \mid S_t\right) dz\right] \\[2mm]
    &\quad-\E_{S_t\sim P_t^{\pi_c}}\left[\frac{1}{h} \int_c^{c+h}\left((Q_{c+h,\,t}^Y - Q_{c,\,t}^Y)(S_t, z, 1)-(Q_{c+h,\,t}^Y - Q_{c,\,t}^Y)(S_t, z, 0)\right) f_t(z \mid S_t) dz\right]\\[2mm]
    &=\E_{S_t\sim P_t^{\pi_c}}\left[\frac{1}{h} \int \gamma\left(V^Y_{c+h,\,t+1}\left(P_{t+1}^{\,S_t,\, z,\, \pi_{c}(z)}\right)-V^Y_{c,\,t+1}\left(P_{t+1}^{\,S_t,\, z,\, \pi_{c}(z)}\right)\right) f_t\left(z \mid S_t\right) dz\right]\tag{using \eqref{bellman}}\\[2mm]
    &\quad-\E_{S_t\sim P_t^{\pi_c}}\left[\frac{1}{h} \int_c^{c+h}\left((Q_{c+h,\,t}^Y - Q_{c,\,t}^Y)(S_t, z, 1)-(Q_{c+h,\,t}^Y - Q_{c,\,t}^Y)(S_t, z, 0)\right) f_t(z \mid S_t) dz\right]\\[2mm]
    &=\gamma\,\frac{V^Y_{c+h,\,t+1}(P_{t+1}^{\pi_{c}})-V^Y_{c,\,t+1}(P_{t+1}^{\pi_{c}})}{h}-\Delta_t(h),\tag{by definition}
\end{align*}
where 
\begin{align*}
    \Delta_t(h)&:=\E_{S_t\sim P_t^{\pi_c}}\left[\frac{1}{h} \int_c^{c+h}\left((Q_{c+h,\,t}^Y - Q_{c,\,t}^Y)(S_t, z, 1)-(Q_{c+h,\,t}^Y - Q_{c,\,t}^Y)(S_t, z, 0)\right) f_t(z \mid S_t) dz\right].
\end{align*}
Plugging the above expression for $R_t(h)$ into \eqref{policy-grad-finite-differencing}, we arrive at the following recursion.
\begin{equation*}
    \frac{V^Y_{c+h,\,t}(P_t^{\pi_c})-V^Y_{c,\,t}(P_t^{\pi_c})}{h}=-M_t(h)+\gamma\,\frac{V^Y_{c+h,\,t+1}(P_{t+1}^{\pi_c})-V^Y_{c,\,t+1}(P_{t+1}^{\pi_c})}{h}-\Delta_t(h).
\end{equation*}
Using the above recursion, we deduce from \eqref{eq:trivial} that
\begin{equation}\label{2.1.4}
   \frac{V^Y(\pi_{c+h})-V^Y(\pi_{c})}{h}=-\sum_{t=0}^{T'}\gamma^{t} \, (M_t(h)+\Delta_t(h))+\gamma^{T'+1}\frac{V^Y_{c+h, \, T'+1}(P_{T'+1}^{\pi_{c}})-V^Y_{c,\,T'+1}(P_{T'+1}^{\pi_c})}{h}. 
\end{equation}
For the finite-horizon ($T<\infty$) case, we use $T'=T$ and $V^Y_{c',\,T+1}(P_{T+1}^{\pi_c})=0$ to deduce that
\begin{equation}\label{2.1.5}
   \frac{V^Y(\pi_{c+h})-V^Y(\pi_{c})}{h}=-\sum_{t=0}^{T}\gamma^{t} \, (M_t(h)+\Delta_t(h)). 
\end{equation}
For the infinite-horizon ($T=\infty$) case, we claim that the above display also holds, in a limiting sense: We obtain \eqref{2.1.5} by letting $T'\to\infty$ in \eqref{2.1.4}. To prove this, derive a uniform upper bound on $(V^Y_{c+h,\,t}(P_t)-V^Y_{c,\,t}(P_t))/h$ so that the last term in \eqref{2.1.4} goes to $0$ as $T'\to\infty$, for any fixed $h>0$.
The following chain of inequalities fits the bill. For any $t\ge 0$ and any distribution $P_t$ for $S_t$,
\begin{align*}
   \left|\frac{V^Y_{c+h,\,t}(P_t)-V^Y_{c,\,t}(P_t)}{h}\right|&\le \frac{2}{h}\sup_{|c'-c|\le \eta}\left|V^Y_{c',\,t}(P_t)\right|\tag{since $|h|<\eta$}\\
    &\le\frac{2}{h}\sup_{|c'-c|\le \eta} \E_{S_t\sim P_t}\,\E_{\pi_{c'}}\left[\sum_{j=0}^{T-t} \gamma^j\left|Y_{t+j}\right|\,\bigg|\, S_t\right]\\
    &\le\frac{2}{h(1-\gamma)}\sup_{t\ge 0}\, \sup_{s_t}\sup_{|c'-c|\le \eta}\E_{\pi_{c'}}\left[\sum_{j=0}^{T-t}\gamma^j\left|Y_{t+j}\right|\,\bigg|\, S_t=s_t\right]<\infty.\numberthis\label{2.1.2last}
\end{align*}
The above bound combined with $\gamma^{T'+1}\to 0$ as $T'\to\infty$ yields \eqref{2.1.5} from  \eqref{2.1.4}.

\medskip

Having shown \eqref{2.1.5}, we now consider taking limits as $h\to 0$. Since $z\mapsto Q_{c,\,t}^Y(s,\, z,\, a)$ and $z\mapsto f_t(z\mid s)$ are continuous at $c$ for a.e.~$s$ and for every $t\ge 0$, and it follows from the fundamental theorem of Calculus that
$$\lim_{h\to 0}\left[\frac{1}{h}\int_c^{c+h} (Q_{c,\,t}^Y(s_t,\, z,\, 1)-Q_{c,\,t}^Y(s_t,\,z,\,0))f_t\left(z \mid s_t\right) dz\right]=\left(Q_{c,\,t}^Y(s_t,\,c,\,1)-Q_{c,\,t}^Y(s_t,\,c,\,0)\right)f_t(c\mid s_t),$$
for a.e.~$s_t$ and for every $t$. Since $\sup_t\sup_{s_t}\sup_{|z-c|\le \eta}|Q_{c,\,t}^Y(s_t,\,z,\,1)-Q_{c,\,t}^Y(s_t,\,z,\,0)|f_t(z\mid s_t)<\infty$, we can apply the dominated convergence theorem  to conclude that $$\lim_{h\to 0} M_t(h) =\E_{S_t\sim P_t^{\pi_c}} \E_{\pi_c}(Q_{c,\,t}^Y(S_t,\, c,\,1)-Q_{c,\,t}^Y(S_t,\, c,\,0))\,f_t(c\mid S_t),$$
for each $t\ge 0$. The geometric weights allow us to pass the limit under the sum and write the following.
\begin{equation}\label{2.1.5.5}
    \lim_{h\to0}\sum_{t=0}^T \gamma^t\, M_t(h)= \sum_{t=0}^T \gamma^t\,\E_{\pi_c} (Q_{c,\,t}^Y(S_t,\,c,\,1)-Q_{c,\,t}^Y(S_t,\,c,\,0))f_t(c\mid S_t).
\end{equation}
In view of \eqref{2.1.5}, it only remains to show that
\begin{equation}\label{2.1.6}
    \lim_{h\to 0}\,\sum_{t=0}^T \gamma^t\,\Delta_t(h) = 0.
\end{equation}
Toward proving the above display, we first note that 
\begin{align*}
& \left|R_t(h; P_t)\right|\\ & =\left|\E_{S_t\sim P_t^{\pi_c}}\left[\frac{1}{h} \int_\R\left(Q_{c+h,\,t}^Y\left(S_t, z, \pi_{c+h}(z)\right)-Q_{c,\,t}^Y\left(S_t, z, \pi_{c+h}(z)\right)\right) f_t\left(z \mid S_t\right) dz\right]\right|\tag{by definition}\\[2mm]
    &=\left|\E_{S_t\sim P_t^{\pi_c}}\left[\frac{1}{h} \int \gamma\left(V^Y_{c+h,\,t+1}\left(P_{t+1}^{\,S_t,\, z,\, \pi_{c+h}(z)}\right)-V^Y_{c,\,t+1}\left(P_{t+1}^{\,S_t,\, z,\, \pi_{c+h}(z)}\right)\right) f_t\left(z \mid S_t\right) dz\right]\right|\tag{using \eqref{bellman}}\\
    &\le \gamma\,\sup_{P_{t+1}}\left|\frac{V^Y_{c+h,\,t+1}(P_{t+1})-V^Y_{c,\,t+1}(P_{t+1})}{h}\right|\tag{by triangle inequality}\\
    &\le \gamma\,\sup_{t\ge 0}\sup_{P_{t}}\left|\frac{V^Y_{c+h,\,t}(P_{t})-V^Y_{c,\,t}(P_{t})}{h}\right|.
\end{align*}
Note that the above bound is finite, due to \eqref{2.1.2last}. Using the above bound and \eqref{policy-grad-finite-differencing}, we deduce that
$$
  \left| \frac{V^Y_{c+h,\,t}(P_t)-V^Y_{c,\,t}(P_t)}{h}\right|\le \left|M_t(h; P_t)\right|+\gamma\,\sup_{t\ge 0}\sup_{P_{t}}\left|\frac{V^Y_{c+h,\,t}(P_{t})-V^Y_{c,\,t}(P_{t})}{h}\right|
$$
This, in turn, implies that
\begin{align*}
&\sup_{t\ge 0}\,\sup_{P_t} \left|\frac{V^Y_{c+h}(P_t)-V^Y_{c}(P_t)}{h}\right| \\
&\le\frac{1}{1-\gamma}\sup_{t\ge 0}\,\sup_{P_t}|M_t(h\,;P_t)|\\
  &\le  \frac{1}{1-\gamma}\sup_{t\ge 0}\,\sup_{P_t}\E_{S_t\sim P_t}\left[\frac{1}{h}\int_c^{c+h}\left|Q_{c,\,t}^Y(S_t,\,z,\,1)-Q_{c,\,t}^Y(S_t,\,z,\,0)\right|f_t(z\mid S_t)\,dz\right]
  \\
  &\le  \frac{1}{1-\gamma}\sup_{t\ge 0}\,\sup_{s_t}\sup_{|z-c|\le  \eta}\left|Q_{c,\,t}^Y(s_t,\,z,\,1)-Q_{c,\,t}^Y(s_t,\,z,\,0)\right|f_t(z\mid s_t),
\end{align*}
where the last quantity is finite, by assumption. This implies that for all $|h|<\eta$, for any $t\ge 0$ and any distribution $P_t$ for $S_t$,
$$|V^Y_{c+h,\,t}(P_t)-V^Y_{c,\,t}(P_t)|\le Ch,$$ for some universal constant $C\in (0,\infty)$. We can therefore invoke \eqref{bellman} once again to conclude that
\begin{align*}
    &\left|\Delta_t(h)\right|\\
    &=  \left|  \E_{S_t\sim P_t^{\pi_c}}\left[\frac{1}{h} \int_c^{c+h}\left(\left(Q_{c+h,\,t}^Y - Q_{c,\,t}^Y\right)(S_t, z, 1)-\left(Q_{c+h,\,t}^Y - Q_{c,\,t}^Y\right)(S_t, z, 0)\right) f_t\left(z \mid S_t\right) dz\right]\right|\\[1mm]
    &\le  \left|  \E_{S_t\sim P_t^{\pi_c}}\left[\frac{1}{h} \int_c^{c+h}\gamma\left(\left(V^Y_{c+h,\,t+1}-V^Y_{c,\,t+1}\right)(P_{t+1}^{S_t,\, z,\, 1})-\left(V^Y_{c+h,\,t+1}-V^Y_{c,\,t+1}\right)(P_{t+1}^{S_t,\, z,\, 0})\right) f_t\left(z \mid S_t\right) dz\right]\right|\\[1mm]
    &\le \E_{S_t\sim P_t^{\pi_c}}\left[\frac{1}{h} \int_c^{c+h}\gamma\cdot 2Ch \cdot f_t\left(z \mid S_t\right) dz\right]\\[1mm]
    &=2C\gamma\, \P_{\pi_c}\left(Z_t\in [c,c+h]\right).
\end{align*}
Finally, the dominated convergence theorem gives
$$\lim_{h\to 0}\,\sum_{t=0}^T\gamma^t\, \P_{\pi_c}\left(Z_t\in [c,c+h]\right)= 0,$$
which completes the proof of \eqref{2.1.6}. Combining \eqref{2.1.5}, \eqref{2.1.5.5} and \eqref{2.1.6} we conclude that
$$-\frac{\partial}{\partial c}V^Y(\pi_c)=\E_{\pi_c}\left[\sum_{t=0}^T \gamma^t\, (Q_{c,\,t}^Y(S_t,\,c,\,1)-Q_{c,\,t}^Y(S_t,\,c,\,0))f_t\left(c \mid S_t\right)\right],$$
as desired to show. Finally, note that the conditions \eqref{eq:integrability} for $Y$ imply analogous conditions for $A$. Thus,
$$-\frac{\partial}{\partial c}V^A(\pi_c)=\E_{\pi_c}\left[\sum_{t=0}^T \gamma^t\, (Q_{c,\,t}^A(S_t,\,c,\,1)-Q_{c,\,t}^A(S_t,\,c,\,0))f_t\left(c \mid S_t\right)\right].$$
Moreover, \cref{assump:continuous-Q} ensures that the above right-hand side is non-zero. Combining the last two displays yields the desired characterization \eqref{eq:tauRD}.
 \end{proof}

\subsection{\texorpdfstring{Proof of \cref{consistency}}{consistency}}\label{proof:consistency}

\begin{proof}
    Recall that the twice-discounted local linear regression estimator is defined as $$\htauRD(h_n)= \wh\tau^Y(h_n)\,\big/\,\wh\tau^A(h_n),$$ where $\wh\tau^Y(h)$ and $\wh\tau^A(h)$ are defined in \eqref{eq:def-LLR}. Define the corresponding population-level optimization problems as follows.
      \begin{equation}\label{eq:def-LLR-popln}
      \begin{split}
         & \tau^R_n :=e_1^\top \argmin_{(\tau,\alpha_t,\beta_0,\beta_1)}\, \E_{\pi_c}\left[\sum_{t=0}^{T_n} w_{i,t}(h_n)\left(\Gamma_{i,t}^R -\tau A_{i,t}-\alpha_t-\beta_0(Z_{i,t}-c)-\beta_1 A_{i,t}(Z_{i,t}-c)\right)^2\right], \\[2mm]
&\text{where } R\in\{Y,A\},\ \ \text{and}\ \ w_{i,t}(h) := \gamma^t \,K \left(|Z_{i,t}-c|/h\right).
      \end{split}   
  \end{equation}
  We establish consistency through two key lemmas. First, \cref{lemma:localization} demonstrates via standard analytic arguments that $\tau^R_n$ converges to $\tau^R$, where for $R\in\{Y,A\}$,
  \begin{equation}\label{eq:def-taustar}
      \begin{split}
          &\tau^R :=-\gradc V^R(\pi_c)=\frac{\E_{\pi_c}\left[\sum_{t=0}^T \gamma^t \left(Q_{c,\,t}^R(S_{i,t},\,c,\,1)- Q_{c,\,t}^R(S_{i,t},\,c,\,0)\right)f_t(c\mid S_{i,t})\right]}{\E_{\pi_c}\left[\sum_{t=0}^T \gamma^t  f_t(c\mid S_{i,t})\right]}.
      \end{split}
  \end{equation}
  Second, \cref{lemma:concentration} employs a concentration argument to show that $\wh\tau^R(h_n) -\tau^R_n=\o(1)$ ($R=Y,A$). Combining this with $\tau^R_n-\tau^R=o(1)$, we conclude that $$\wh\tau^Y(h_n) -\tau^Y=\o(1),\quad\text{and}\quad \wh\tau^A(h_n) -\tau^A=\o(1).$$ Applying the continuous mapping theorem, we are through.
\end{proof}

\subsection{\texorpdfstring{Proof of \cref{clt-for-twice-discounted-llr}}{consistency}}\label{proof-of-clt-for-llr}

\begin{proof}
    We show in \cref{bivariate-clt-for-twice-discounted-llr} that under the conditions of \cref{clt-for-twice-discounted-llr}, the numerator $\wh{\tau}^Y_n$ and the denominator $\wh{\tau}^A_n$ of the twice-discounted local linear regression \eqref{eq:def-LLR} satisfy the following.
    \begin{equation}\label{eq:multi-clt}
        \sqrt{n h_n}
\left(\begin{matrix}\wh{\tau}^Y_n-\Delta\mu_Y(c)
-\frac{1}{2}h_n^2\,\xi_1\Delta\mu_Y''(c)\\[2mm]
\wh{\tau}^A_n-\Delta\mu_A(c)
-\frac{1}{2}h_n^2\,\xi_1\Delta\mu_A''(c)
\end{matrix}\right) \stackrel{d}{\longrightarrow} \mathcal{N}\left(\begin{pmatrix}
    0 \\ 0
\end{pmatrix}, \begin{pmatrix}
V_{YY}   & V_{YA}\\V_{YA} & V_{AA}
\end{pmatrix}\right),
    \end{equation}
where $\xi_1:=(\kappa_2^2-\kappa_1\kappa_3)/(\kappa_0\kappa_2-\kappa_1^2)$ with $\kappa_j:=\int_0^1 u^j K(u)du$, and the quantities $\Delta\mu_Y$, $\Delta\mu_A$, $V_{YY}$, $V_{AA}$ and $V_{YA}$ are as defined in \cref{clt-for-twice-discounted-llr}. Using this notation we can write from \eqref{eq:def-taustar} that $\tau^Y=\Delta\mu_Y(c)$ and $\tau^A=\Delta\mu_A(c)$, and thus \eqref{eq:tauRD} gives $\tauRD=\Delta\mu_Y(c)/\Delta\mu_A(c)$. Now applying \cref{lemma:multiv-delta} (delta-method) to \eqref{eq:multi-clt}, we conclude that $\htauRD = \wh{\tau}^Y_n/\wh{\tau}^A_n$ satisfies
\begin{equation*}
    \sqrt{nh_n}\left(\htauRD -\tauRD-\frac{1}{2}h_n^2\xi_1\frac{\Delta\mu_Y''(c)-\tauRD\Delta\mu_A''(c)}{\Delta\mu_A(c)}\right)\dto \normal\left(0\,,\frac{V_{YY} + \tauRD^2 V_{AA} - 2\tauRD V_{YA}}{(\Delta\mu_A(c))^2}\right),
\end{equation*}
which completes the proof.
\end{proof}

\subsection{\texorpdfstring{Proof of \cref{propo:sandwich-estimator}}{variance estimator}}\label{proof:propo:sandwich-estimator}

\begin{proof} It follows from \cref{lem:fe-representation} that,
 for any \(G,H\in\{Y,A\}\), we can rewrite $V_{GH,\,n}$ as
\[
  \wh V_{GH,\,n}:=\frac1n\sum_{i=1}^n\wh\psi_{n,i}^G\,\wh\psi_{n,i}^H,
\]
where
\begin{align*}
    \wh\psi_{n,i}^R
&:=\sqrt{h_n}\,e_1^\top\wh M_n^{-1}
\sum_{t=0}^{T_n}\gamma^tK_{h_n}(Z_{i,t})
\{X_{i,t}(h_n)-\overline X_t(h_n)\}
\{\Gamma_{i,t}^R-\wh\alpha_{n,t}^R-(\wh\zeta_n^R)^\top X_{i,t}(h_n)\},\\
    \wh M_n&:=\frac1n\sum_{i=1}^n\sum_{t=0}^{T_n}\gamma^tK_{h_n}(Z_{i,t})
\{X_{i,t}(h_n)-\overline X_t(h_n)\}^{\otimes2}, \qquad \overline X_t(h_n):=
\frac{n^{-1}\sum_{i=1}^nK_{h_n}(Z_{i,t})X_{i,t}(h_n)}
     {n^{-1}\sum_{i=1}^nK_{h_n}(Z_{i,t})},\\
     \wh\zeta_n^R
&:=\wh M_n^{-1}\left\{
\frac1n\sum_{i=1}^n\sum_{t=0}^{T_n}\gamma^tK_{h_n}(Z_{i,t})
\{X_{i,t}(h_n)-\overline X_t(h_n)\}\Gamma_{i,t}^R
\right\},\\
\wh\alpha_{n,t}^R
&:=
\frac{\sum_{i=1}^nK_{h_n}(Z_{i,t})\{\Gamma_{i,t}^R-(\wh\zeta_n^R)^\top X_{i,t}(h_n)\}}
     {\sum_{i=1}^nK_{h_n}(Z_{i,t})},
\end{align*}
Fix \(G,H\in\{Y,A\}\).  We first prove consistency for the oracle scores \(\psi_{n,i}^R\), and then show that replacing them by the feasible regression scores \(\wh\psi_{n,i}^R\) changes the cluster covariance by \(\o(1)\).
Recall from the proof of \cref{bivariate-clt-for-twice-discounted-llr} that
\[
  \E_{\pi_c}\{\psi_{n,i}^G\,\psi_{n,i}^H\}\to V_{GH}.    
\]
The Lindeberg condition shown in the proof of \cref{bivariate-clt-for-twice-discounted-llr} implies uniform integrability of \(\{(\psi_{n,i}^Y)^2:n\geq1\}\) and \(\{(\psi_{n,i}^A)^2:n\geq1\}\).  Since
\(
  \abs{\psi_{n,i}^G\,\psi_{n,i}^H}
  \leq \frac12\{(\psi_{n,i}^G)^2+(\psi_{n,i}^H)^2\}
\),
the products \(\psi_{n,i}^G\,\psi_{n,i}^H\) are also uniformly integrable.  Therefore we can apply a triangular-array law of large numbers (\cref{triangular-array-lln}) to conclude that
\[
  \frac1n\sum_{i=1}^n\psi_{n,i}^G\,\psi_{n,i}^H
  -\E_{\pi_c}\{\psi_{n,i}^G\,\psi_{n,i}^H\}\,\Pto\,0, 
\]
and thus
\begin{equation}
    \label{eq:sw-32}
\frac1n\sum_{i=1}^n\psi_{n,i}^G\,\psi_{n,i}^H\,\Pto\, V_{GH}.
\end{equation}

It only remains to show that
\begin{equation}
    \frac1n\sum_{i=1}^n(\wh\psi_{n,i}^R-\psi_{n,i}^R)^2\,\Pto\,0,
  \qquad R\in\{Y,A\}. \label{eq:sw-33}
\end{equation}
We prove this for a fixed \(R\).  \cref{lemma:concentration} and the proof of \cref{lem:asymp_linearity} give
\[
  \wh M_n-M_n=\o(1),
  \qquad
  \wh M_n^{-1}-M_n^{-1}=\o(1),
  \qquad
  \wh\zeta_n^R-\zeta_n^R=\o(1).                             
\]
It follows from \cref{eq:uncentered-cross-conc,eq:uncentered-design-conc,eq:fixed-mass,eq:fixed-first,eq:fixed-outcome,eq:fixed-x-outcome,eq:fixed-second,eq:mean-quotient,eq:fixed-L-design,eq:fixed-L-cross} in \cref{lemma:concentration}, after summing over \(t\) with the discount weights, that
\begin{equation}    
\sum_{t=0}^{T_n}\gamma^t\norm{\overline X_t(h_n)-\mu_t(h_n)}=\O((nh_n)^{-1/2}), \label{eq:sw-35}
\end{equation}
\begin{equation}
    \sum_{t=0}^{T_n}\gamma^t\abs{\wh\alpha_{n,t}^R-\alpha_{n,t}^R}=\O((nh_n)^{-1/2})+\o(1)\norm{\wh\zeta_n^R-\zeta_n^R}. \label{eq:sw-36}
\end{equation}

Define the population residual
\[
  \varepsilon_{i,t,n}^R
  :=\Gamma_{i,t}^R-\alpha_{n,t}^R-(\zeta_n^R)^\top X_{i,t}(h_n).
\]
The covariance calculation in \cref{bivariate-clt-for-twice-discounted-llr}, together with the triangular-array LLN (\cref{triangular-array-lln}), gives
\[
\frac1n\sum_{i=1}^n
h_n\left\|
\sum_{t=0}^{T_n}\gamma^tK_{h_n}(Z_{i,t})
\{X_{i,t}(h_n)-\mu_t(h_n)\}\varepsilon_{i,t,n}^R
\right\|^2
=\O(1).     
\]
Now decompose \(\wh\psi_{n,i}^R-\psi_{n,i}^R\) according to the four replacements: \(M_n^{-1}\) by \(\wh M_n^{-1}\), \(\mu_t(h_n)\) by \(\overline X_t(h_n)\), \(\alpha_{n,t}^R\) by \(\wh\alpha_{n,t}^R\), and \(\zeta_n^R\) by \(\wh\zeta_n^R\).  The first replacement contributes at most
\[
\norm{\wh M_n^{-1}-M_n^{-1}}^2
\frac1n\sum_{i=1}^n
h_n\left\|
\sum_{t=0}^{T_n}\gamma^tK_{h_n}(Z_{i,t})
\{X_{i,t}(h_n)-\mu_t(h_n)\}\varepsilon_{i,t,n}^R
\right\|^2
=\o(1)                                                      
\]
using the last display.  For the replacement of \(\mu_t(h_n)\), Cauchy--Schwarz gives
\begin{align*}
&\frac1n\sum_{i=1}^n h_n
\left\|
\sum_{t=0}^{T_n}\gamma^tK_{h_n}(Z_{i,t})
\{\overline X_t(h_n)-\mu_t(h_n)\}\varepsilon_{i,t,n}^R
\right\|^2                                                        \notag\\
&\quad\leq
C\left(
\sum_{t=0}^{T_n}\gamma^t\norm{\overline X_t(h_n)-\mu_t(h_n)}
\left\{\frac1n\sum_{i=1}^nh_nK_{h_n}^2(Z_{i,t})(\varepsilon_{i,t,n}^R)^2\right\}^{1/2}
\right)^2
=\o(1),     
\end{align*}
where the empirical averages in braces are \(\O(1)\) by the diagonal part of the covariance calculation, and \eqref{eq:sw-35} gives the vanishing factor.  The replacement of \(\alpha_{n,t}^R\) is analogous:
\begin{align*}
&\frac1n\sum_{i=1}^n h_n
\left\|
\sum_{t=0}^{T_n}\gamma^tK_{h_n}(Z_{i,t})
\{X_{i,t}(h_n)-\mu_t(h_n)\}
\{\wh\alpha_{n,t}^R-\alpha_{n,t}^R\}
\right\|^2                                                        \notag\\
&\quad\leq
C\left(
\sum_{t=0}^{T_n}\gamma^t\abs{\wh\alpha_{n,t}^R-\alpha_{n,t}^R}
\left\{\frac1n\sum_{i=1}^nh_nK_{h_n}^2(Z_{i,t})\right\}^{1/2}
\right)^2
=\o(1),     
\end{align*}
using \eqref{eq:sw-36} and the kernel localization bound.  Finally, the replacement of \(\zeta_n^R\) contributes
\begin{align*}
&\frac1n\sum_{i=1}^n h_n
\left\|
\sum_{t=0}^{T_n}\gamma^tK_{h_n}(Z_{i,t})
\{X_{i,t}(h_n)-\mu_t(h_n)\}
\{(\wh\zeta_n^R-\zeta_n^R)^\top X_{i,t}(h_n)\}
\right\|^2                                                        \notag\\
&\quad\leq
\norm{\wh\zeta_n^R-\zeta_n^R}^2
\frac1n\sum_{i=1}^n h_n
\left(
\sum_{t=0}^{T_n}\gamma^tK_{h_n}(Z_{i,t})C
\right)^2
=\o(1),     
\end{align*}
because \(\wh\zeta_n^R-\zeta_n^R=\o(1)\) and the empirical average is \(\O(1)\).  This completes the proof of \eqref{eq:sw-33}.

\medskip

Finally,
\begin{align*}
\left|\frac1n\sum_{i=1}^n\wh\psi_{n,i}^G\,\wh\psi_{n,i}^H
      -\frac1n\sum_{i=1}^n\psi_{n,i}^G\,\psi_{n,i}^H\right| &\leq
\left(\frac1n\sum_{i=1}^n(\wh\psi_{n,i}^G-\psi_{n,i}^G)^2\right)^{1/2}
\left(\frac1n\sum_{i=1}^n(\wh\psi_{n,i}^H)^2\right)^{1/2} \\
&\qquad+
\left(\frac1n\sum_{i=1}^n(\wh\psi_{n,i}^H-\psi_{n,i}^H)^2\right)^{1/2}
\left(\frac1n\sum_{i=1}^n(\psi_{n,i}^G)^2\right)^{1/2}
=\o(1),
\end{align*}
where \eqref{eq:sw-32} and \eqref{eq:sw-33} imply that the second factors are \(\O(1)\).  Combining this display with \eqref{eq:sw-32} gives \(\wh V_{GH,\,n}\,\Pto\, V_{GH}\).
The consistency of \(\wh V_{{\rm RD},n}\) follows from \(\wh\tau_{\rm RD}\,\Pto\,\tau_{\rm RD}\), \(\wh\tau_n^A\,\Pto\,\tau^A=\Delta\mu_A(c)>0\), the three covariance-consistency statements shown above, and the continuous mapping theorem.
\end{proof}

\subsection{Proof of \texorpdfstring{\cref{prop:dynamic-bandwidth-consistency}}{IK-style bandwidth}}
\begin{proof} 
We first justify the conditions the pilot bandwidths need to satisfy; the point that needs care is the passage from the iid static calculation in \citet{imbens2012optimal} to the dynamic stacked sample.  Let $h\to0$ and $nh\to\infty$.  Any localized empirical moment used in the local linear or local quadratic regressions in \cref{proc:dynamic-ik-bandwidth} can be written, after the scaling in \eqref{eq:def-scaling}, as 
\begin{equation}\label{B.20}
    \frac1n\sum_{i=1}^n
    \sum_{t=0}^{T_n}\gamma^t\, K_h(Z_{i,t})\,
    q\{U_{i,t}(h),A_{i,t}\}\,H_{i,t},
    \tag{B.20}
\end{equation}
where $U_{i,t}(h)=(Z_{i,t}-c)/h$, $q$ is one of finitely many fixed polynomials and $H_{i,t}$ is one of $1$, $\Gamma^Y_{i,t}$, $\Gamma^A_{i,t}$, or $\widetilde\Gamma_{i,t}$.  The summands in \eqref{B.20}, viewed as functions of the whole trajectory of unit $i$, are independent across $i$. \cref{lem:kernel-localization} gives the population localization limits.  Moreover, the concentration argument used in \cref{lemma:concentration} gives
\begin{equation}
    \frac1n\sum_{i=1}^n
    \sum_{t=0}^{T_n}\gamma^t\, K_h(Z_{i,t})\,
    q\{U_{i,t}(h),A_{i,t}\}\,H_{i,t}
    -
    \E_{\pi_c}\left[
    \sum_{t=0}^{T_n}\gamma^t\, K_h(Z_t)
    q\{U_t(h),A_t\}\,H_t
    \right]
    =\O\{(nh)^{-1/2}\}.
    \label{B.21}
\end{equation}
This explains why every occurrence of the iid sample size $N$ in the IK calculation will be replaced by the number of independent trajectories $n$; the stacked count $n(T_n+1)$ never enters the stochastic order because the time series within each unit is part of a single cluster-level summand. Using \cref{lem:kernel-localization} with outcome equal to one, and using \eqref{B.21} at $h=h_1$, the density estimate in Step (iii) of \cref{proc:dynamic-ik-bandwidth} satisfies
\begin{equation}
    \widehat f_\gamma(c)
    =
    \frac{
    \sum_{i=1}^n\sum_{t=0}^{T_n}\gamma^t\ind{|Z_{i,t}-c|\le h_1}
    }{
    2h_1\sum_{i=1}^n\sum_{t=0}^{T_n}\gamma^t
    }
   \, \Pto\,
    \frac{
    \sum_{t=0}^{T}\gamma^t\E_{\pi_c}[f_t(c\mid S_t)]
    }{
    \sum_{t=0}^{T}\gamma^t
    }.                                    \label{B.23}
\end{equation}
The limit is finite by the local density envelope and strictly positive by \cref{assump:condtional-density}.  The same localization and concentration argument, applied to the squared residuals from the side-specific means used in the IK variance pilot, gives that the local variance estimate used in Step (iii) of \cref{proc:dynamic-ik-bandwidth} is $\O(1)$ and bounded away from zero in probability. Next, consider the median-window cubic regression. Similar to in \citet{imbens2012optimal}, we do not require the coefficient from this regression to consistently estimate the third derivative at $c$; we only need it to be $\O(1)$. This follows by the ordinary law of large numbers applied to the independent trajectory-level sums
\[
    \sum_{t=0}^{T_n}\gamma^t\,
    \ind{\widehat q_-\le Z_{i,t}\le \widehat q_+}
    r_3(Z_{i,t})r_3(Z_{i,t})^\top,
    \qquad
    \sum_{t=0}^{T_n}\gamma^t\,
    \ind{\widehat q_-\le Z_{i,t}\le \widehat q_+}
    r_3(Z_{i,t})\widetilde\Gamma_{i,t},
\]
where $r_3(z)=(1,\ind{z\ge c},z-c,(z-c)^2/2,(z-c)^3/6)^\top$, and  $\widehat q_-$, $\widehat q_+$ respectively denote the left- and right-side medians used in the global cubic regression. Finally, because \cref{proc:dynamic-ik-bandwidth} uses the stabilizer $\max\{\widehat m_{3,\gamma}^2,0.01\}$, the factor multiplying the side sample sizes in the definitions of $h_{2,+}$ and $h_{2,-}$ is bounded away from zero and infinity in probability.  Moreover, by the trajectory-level law of large numbers,
\[
    \frac{n_+}{n}	\;\Pto\;
    \frac{\sum_{t=0}^{T}\gamma^t\,\P_{\pi_c}(Z_t\ge c)}{\sum_{t=0}^{T}\gamma^t},
    \qquad
    \frac{n_-}{n}	\;\Pto\;
    \frac{\sum_{t=0}^{T}\gamma^t\,\P_{\pi_c}(Z_t<c)}{\sum_{t=0}^{T}\gamma^t},
\]
and both limits are positive by \cref{assump:condtional-density}. Consequently, 
the one-sided curvature-pilot bandwidths in Step (iii) of \cref{proc:dynamic-ik-bandwidth} satisfy
\begin{equation}
    \label{pilotcurvbw}
    h_{2,+}\to0,\qquad h_{2,-}\to0,\qquad
    nh_{2,+}^5\to\infty,\qquad nh_{2,-}^5\to\infty,
\end{equation}
with probability tending to one. Similar arguments combined with the sharp-RD case of \citet{imbens2012optimal} tell us that the pilot bandwidth $h_{\mathrm{pilot}}$ satisfies $h_\mathrm{pilot}\to 0$ and $nh_\mathrm{pilot}\to \infty$.

We now return to the main proof. For the residualized future outcome $\wt\Gamma_{i,t}=\Gamma^Y_{i,t}-\tauRD\Gamma^A_{i,t}$, the density-weighted response functions satisfy
\[
    \wt\mu_a(z)=\mu_{Y,a}(z)-\tauRD\mu_{A,a}(z),\qquad a\in\{0,1\},
\]
because the weights and normalizing denominator in \eqref{def-muRz} are common to $R=Y$ and $R=A$. Hence
\[
    \Delta\wt\mu''(c)
    =
    \Delta\mu_Y''(c)-\tauRD\Delta\mu_A''(c).
\]
The pilot ratio is consistent by \cref{consistency}, because $h_{\mathrm{pilot}}\to0$ and $nh_{\mathrm{pilot}}\to\infty$. It is therefore straightforward to verify that replacing $\tauRD$ by $\wh\tau_{\mathrm{pilot}}$ in the residualized outcome is asymptotically equivalent. The unit-clustered covariance estimates entering $\widehat V_{\mathrm{dyn}}$ are consistent by \cref{propo:sandwich-estimator}, evaluated at the pilot bandwidth. It remains to argue why the coefficient from the local quadratic regression consistently estimates the jump in the second-derivative. The standard boundary local-polynomial results from \citet{fan1996framework} imply that the bias in estimating the second-derivative jump is $O_p(h_{2,+}+h_{2,-})$ and its variance is $O_p((nh_{2,+}^5)^{-1}+(nh_{2,-}^5)^{-1})$. Hence the pilot bandwidth conditions  \eqref{pilotcurvbw} imply that
\[
    \widehat B_{\mathrm{dyn}}\,\Pto\, \Delta\mu_Y''(c)-\tauRD\Delta\mu_A''(c).
\]
The continuous mapping theorem then gives $\wh h_{\,\mathrm{IK},\mathrm{dyn}}/h_{\,\mathrm{IK},\mathrm{dyn}}^*\,\Pto\,1$.
\end{proof}

\section{Supporting Results}

Throughout, sums of the form $\sum_{t=0}^{T}$ are interpreted as finite sums when $T<\infty$ and as infinite sums when $T=\infty$. We denote by $T_n$ the observed horizon: In the finite-horizon case, $T_n=T$.  In the infinite-horizon case, we assume that $T_n\to\infty$ and $\gamma<1$. 

\begin{definition}
    Define scaled versions of the kernel weights, the running variable and the regressors as follows.
\begin{equation}\label{eq:def-scaling}
      K_h(z):=\frac{1}{h}K\left(\frac{|z-c|}{h}\right),\qquad
  U_{i,t}(h):=\frac{Z_{i,t}-c}{h},\qquad
  X_{i,t}(h):=\big(A_{i,t},\,U_{i,t}(h),\,A_{i,t}U_{i,t}(h)\big)^\top.
\end{equation}
\end{definition}


\begin{lemma}[Kernel localization with discounting]\label{lem:kernel-localization}
Suppose that the conditions of \cref{consistency} hold true.  For every fixed integer $j\ge0$ and each $R\in\{Y,A\}$,
\begin{align*}
&\E_{\pi_c}\!\left[\sum_{t=0}^{T_n}\gamma^t\,
K_{h_n}(Z_t)\,\ind{Z_t\ge c}\left(U_t(h_n)\right)^j\Gamma^R_t
\right] \to
\kappa_j\sum_{t=0}^{T}\gamma^t\,
\E_{\pi_c}\!\left[Q^R_{c,t}(S_t,c,1)f_t(c\mid S_t)\right],
\end{align*}
while
\begin{align*}
&\E_{\pi_c}\!\left[\sum_{t=0}^{T_n}\gamma^t\,
K_{h_n}(Z_t)\,\ind{Z_t<c}\left(U_t(h_n)\right)^j\Gamma^R_t
\right] \to
(-1)^j\kappa_j\sum_{t=0}^{T}\gamma^t\,
\E_{\pi_c}\!\left[Q^R_{c,t}(S_t,c,0)f_t(c\mid S_t)\right],
\end{align*}
where $\kappa_j:=\int_0^1 u^j \,K(u)\,du$.
The same two conclusions with $\Gamma_t^R$ replaced by $1$ hold after replacing $Q^R_{c,t}(S_t,c,a)$ by $1$; in particular,
\[
  \E_{\pi_c}\!\left[\sum_{t=0}^{T_n}\gamma^t\,
  K_{h_n}(Z_t)\,\ind{Z_t\ge c}\left(U_t(h_n)\right)^j\right]\to
  \kappa_j\sum_{t=0}^{T}\gamma^t\,\E_{\pi_c}\!\left[f_t(c\mid S_t)\right],
\]
and the left-side analogue has limit $(-1)^j\kappa_j\sum_{t=0}^{T}\gamma^t\,\E_{\pi_c}\!\left[f_t(c\mid S_t)\right]$.
\end{lemma}

\begin{proof}
We prove the first display; the second follows by the same argument with $z=c-h_nu$.  
Let $\Gamma^R_{t,\infty}$ denote the corresponding full future discounted sum, i.e., $\Gamma^R_{t,\infty}=\Gamma^R_t$ when $T<\infty$ and $\Gamma^R_{t,\infty}=\sum_{s=0}^{\infty}\gamma^s R_{t+s}$ when $T=\infty$.  Thus $$Q^R_{c,t}(s,z,a)=\E_{\pi_c}[\Gamma^R_{t,\infty}\mid S_t=s,Z_t=z,A_t=a].$$  
We first handle the truncation error---which arises only in the infinite-horizon case---from replacing $\Gamma^R_t$ with $\Gamma^R_{t,\infty}$. The goal is to show that
\begin{align}
&\E_{\pi_c}\!\left[\sum_{t=0}^{T_n}\gamma^t\,
K_{h_n}(Z_t)\,\ind{Z_t\ge c}(U_t(h_n))^j
\left(\Gamma^R_{t,\infty}-\Gamma^R_t\right)
\right]\to0 . \label{eq:truncation-tail}
\end{align}
\begin{proof}[Proof of \eqref{eq:truncation-tail}]
The conclusion is trivial in the finite-horizon case, since $\Gamma^R_{t,\infty}=\Gamma^R_t$. Consider now the infinite-horizon case, and thus $\gamma<1$. Note that when $K_{h_n}(Z_t)\neq 0$, we have $|U_t(h_n)|\le 1$. Thus, $|K_{h_n}(Z_t)\,(U_t(h_n))^j|\le \|K\|_\infty /h_n$. On the other hand, $$\Gamma^R_{t,\infty}-\Gamma^R_t=\sum_{j=0}^\infty \gamma^j\, R_{t+j}-\sum_{j=0}^{T_n-t} \gamma^j\, R_{t+j}=\sum_{j=T_n-t+1}^\infty \gamma^j\, R_{t+j}=\gamma^{T_n-t+1}\,\Gamma^R_{T_n+1,\infty}.$$
Therefore,
\begin{align*}
&\left|\E_{\pi_c}\!\left[\sum_{t=0}^{T_n}\gamma^t\,
K_{h_n}(Z_t)\,\ind{Z_t\ge c}(U_t(h_n))^j
\left(\Gamma^R_t-\Gamma^R_{t,\infty}\right)
\right]\right|\\
&\le \E_{\pi_c}\!\left[\sum_{t=0}^{T_n}\gamma^t\,\frac{\|K\|_\infty}{h_n}\,\gamma^{T_n-t+1}\left|\Gamma^R_{T_n+1,\infty}
\right|\right]=\frac{\|K\|_\infty}{h_n}(T_n+1)\gamma^{T_n+1}\E_{\pi_c}|\Gamma^R_{T_n+1,\infty}|\\
&\le \frac{\|K\|_\infty}{h_n}(T_n+1)\gamma^{T_n+1}\frac{1}{1-\gamma}\sup_{t\ge 0}\E_{\pi_c}|Y_t|.
\end{align*}
The above goes to zero as $n\to\infty$, since $(T_n+1)\gamma^{T_n+1}=o(h_n)$ follows from the assumption that $T_n\ge (1+\eps)(\log h_n^{-1})/(\log \gamma^{-1})$ for all large $n$.
\end{proof}

Returning to the main proof, it remains to prove that the localization result the full future sum $\Gamma^R_t$.  Conditional on $S_t$, using $A_t=1$ on $\{Z_t\ge c\}$ and the change of variables $z=c+h_nu$ gives
\begin{align*}
&\E_{\pi_c}\!\left[
K_{h_n}(Z_t)\,\ind{Z_t\ge c}(U_t(h_n))^j\,\Gamma^R_{t,\infty}
\right] \\
&\quad=
\E_{\pi_c}\!\left[
\int_0^1 u^jK(u)\,Q^R_{c,t}(S_t,c+h_nu,1) \,f_t(c+h_nu\mid S_t)du
\right].
\end{align*}
For each fixed $t$, the integrand converges a.e.~to
$u^jK(u)\,Q^R_{c,t}(S_t,c,1)\,f_t(c\mid S_t)$ by the continuity of $z\mapsto Q^R_{c,t}(S_t,\,z,\,1)$ and $z\mapsto f_t(z\mid S_t)$ at $z=c$.  The envelope conditions $\sup_{|z-c|\le \eta} f_t(z\mid S_t)\vee 1\le B_{f,\,t}(S_t)$ and $\sup_{|z-c|\le\eta} m_{2,\,t}^Y(s,\, z,\,a)\vee 1\le B_{m_2,\,t}(S_t)$ imply that, for all large $n$ (large enough so that $|h_n|\le\eta$),
\[
\sup_{|u|\le 1}\left|Q^R_{c,t}(S_t,c+h_nu,1) f_t(c+h_nu\mid S_t)\right|
\le C B_{m_2,\,t}(S_t)\,B_{f,\,t}(S_t),
\]
with the same interpretation for $R=A$ (since $B_{m_2,\,t}\ge 1$). Therefore we can apply the dominated convergence theorem to conclude that
\[
\E_{\pi_c}\!\left[
K_{h_n}(Z_t)\,\ind{Z_t\ge c}(U_t(h_n))^j\,\Gamma^R_{t,\infty}
\right]\to
\kappa_j\E_{\pi_c}\!\left[Q^R_{c,t}(S_t,c,1)f_t(c\mid S_t)\right],
\]
for each fixed $t$. The right hand side is summable after multiplying by $\gamma^t$.  Hence the dominated convergence for series yields
\begin{align*}
&\sum_{t=0}^{T_n}\gamma^t\,
\E_{\pi_c}\!\left[
K_{h_n}(Z_t)\,\ind{Z_t\ge c}(U_t(h_n))^j\,\Gamma^R_{t,\infty}
\right] \to
\kappa_j\sum_{t=0}^{T}\gamma^t\,
\E_{\pi_c}\!\left[Q^R_{c,t}(S_t,c,1)f_t(c\mid S_t)\right].
\end{align*}
Combining this display with \eqref{eq:truncation-tail} proves the first claim. The same proof works when we replace $\Gamma^R_t$ with $1$ and $Q^R_{c,t}$ with $1$.
\end{proof}

\begin{lemma}[Fixed-effects representation]\label{lem:fe-representation}
With the scaled kernel $K_h(\cdot)$ and regressors $X_{i,t}(h)$ as defined in \eqref{eq:def-scaling},  define the weighted time-specific means
\[
  \mu_t(h)
  :=\frac{\E_{\pi_c} \left[K_h(Z_{t})X_{t}(h)\right]}
          {\E_{\pi_c}\left[ K_h(Z_{t})\right]},\qquad\overline X_t(h)
  :=\frac{n^{-1}\sum_{i=1}^n K_h(Z_{i,t})X_{i,t}(h)}
          {n^{-1}\sum_{i=1}^n K_h(Z_{i,t})},
\]
whenever the denominator is nonzero, with arbitrary value otherwise.  Then, the population-level optimizer in \eqref{eq:def-LLR-popln} satisfies
\begin{equation*}
\resizebox{\textwidth}{!}{$\displaystyle
\begin{aligned}
\tau^R_n
&=
e_1^\top
\left(
\sum_{t=0}^{T_n}\gamma^t\,
\E_{\pi_c}\!\left[
K_{h_n}(Z_t)
\{X_t(h_n)-\mu_t(h_n)\}^{\otimes 2}
\right]
\right)^{-1} 
\sum_{t=0}^{T_n}\gamma^t\,
\E_{\pi_c}\!\left[
K_{h_n}(Z_t)
\{X_t(h_n)-\mu_t(h_n)\}
\Gamma^R_t
\right],
\end{aligned}
$}
\end{equation*}
where $v^{\otimes2}$ denotes $vv^\top$. The inverse in the above display exists for all large $n$.
Similarly, the empirical optimizer in \eqref{eq:def-LLR} satisfies
\begin{equation*}
\resizebox{\textwidth}{!}{$\displaystyle
\begin{aligned}
\wh\tau^R_n
&=
e_1^\top
\left(\frac{1}{n}\sum_{i=1}^n
\sum_{t=0}^{T_n}\gamma^t\,
K_{h_n}(Z_{i,t})
\{X_{i,t}(h_n)-\overline X_t(h_n)\}^{\otimes 2}
\right)^{-1} \frac{1}{n}\sum_{i=1}^n
\sum_{t=0}^{T_n}\gamma^t\,
K_{h_n}(Z_{i,t})
\{X_{i,t}(h_n)-\overline X_t(h_n)\}
\Gamma^R_{i,t},
\end{aligned}
$}
\end{equation*}
where the displayed inverse exists with probability tending to $1$ as $n\to\infty$.
\end{lemma}

\begin{proof}  The proof is a straightforward application of the Frisch--Waugh--Lovell theorem for weighted least squares with time fixed-effects.
Multiplying all weights in \eqref{eq:def-LLR} and \eqref{eq:def-LLR-popln} by $h_n^{-1}$ does not change the optimizer.  Likewise, replacing $Z_{i,t}-c$ by $U_{i,t}(h_n)$ only rescales the two slope coefficients and leaves the coefficient on $A_{i,t}$ unchanged.  Thus, for the population-level problem, it is equivalent to minimize
\begin{equation}
    \label{eq:rescaled-LLR}
    \sum_{t=0}^{T_n}\gamma^t\,
  \E_{\pi_c}\!\left[K_{h_n}(Z_t)
  \{\Gamma^R_t-\alpha_t-\zeta^\top X_t(h_n)\}^2\right],
  \qquad \zeta=(\tau,\beta_0,\beta_1)^\top .
\end{equation}
For a fixed $\zeta$, the time fixed-effects minimizing this objective are given by
\[
  \alpha_t(\zeta)
  =\frac{\E_{\pi_c}\!\left[K_{h_n}(Z_t)\{\Gamma^R_t-\zeta^\top X_t(h_n)\}\right]}
          {\E_{\pi_c}\!\left[K_{h_n}(Z_t)\right]},
\]
with arbitrary value when the denominator is zero. After substituting these profiled fixed-effects in the objective \eqref{eq:rescaled-LLR}, we minimize over $\zeta$. Since
\(
  \E_{\pi_c}\!\left[K_{h_n}(Z_t)\{X_t(h_n)-\mu_t(h_n)\}\right]=0
\),
the profiled normal equations for $\zeta$ are precisely
\begin{equation}\label{eq:normal-eqn}
\left(
\sum_{t=0}^{T_n}\gamma^t\,
\E_{\pi_c}\!\left[K_{h_n}(Z_t)
\{X_t(h_n)-\mu_t(h_n)\}^{\otimes2}\right]
\right)\zeta =
\sum_{t=0}^{T_n}\gamma^t\,
\E_{\pi_c}\!\left[K_{h_n}(Z_t)
\{X_t(h_n)-\mu_t(h_n)\}\Gamma^R_t\right].
\end{equation}
Solving these equations and selecting the first coordinate gives the stated population representation. The non-singularity follows from the first part of the proof of \cref{lemma:localization}, where we show (cf.~\eqref{eq:hessian-limit}) that 
\begin{align}
&\sum_{t=0}^{T_n}\gamma^t\,
\E_{\pi_c}\!\bigl[
K_{h_n}(Z_t)
\{X_t(h_n)-\mu_t(h_n)\}^{\otimes2}
\bigr] \to
\left(\sum_{t=0}^{T}\gamma^t\bar f_t (c)\right)M,
\label{eq:limiting-Hessian}
\end{align}
where
\[
M:=
\begin{pmatrix}
\kappa_0 & \kappa_1 & \kappa_1\\
\kappa_1 & 2\kappa_2 & \kappa_2\\
\kappa_1 & \kappa_2 & \kappa_2
\end{pmatrix}
-
\frac{1}{2\kappa_0}
\begin{pmatrix}
\kappa_0\\[0.15em]
0\\[0.15em]
\kappa_1
\end{pmatrix}
\begin{pmatrix}
\kappa_0 & 0 & \kappa_1
\end{pmatrix}.
\]
A direct calculation gives
\(
\det(M)={(\kappa_0\kappa_2-\kappa_1^2)^2}/{2\kappa_0},
\) which is strictly positive since the kernel is non-degenerate (\cref{assump5:regularity-for-consistency}). Since $\sum_{t=0}^{T}\gamma^t\bar f_t (c)>0$ (\cref{assump:condtional-density}), we conclude that the limit in \eqref{eq:limiting-Hessian} is non-singular, which completes the proof.

The empirical case is identical with expectations replaced by empirical averages. 
\end{proof}

\begin{lemma}[Consistency of population-level optimizers]\label{lemma:localization}
Define the population-level optimizers $\tau^R_n$ ($R\in\{Y,A\}$) as in \eqref{eq:def-LLR-popln}.
    Under the conditions of \cref{consistency}, it holds that $\tau^R_n \,\to\, \tau^R$ as $n\to\infty$,
    where $\tau^R$ is defined in \eqref{eq:def-taustar}.
\end{lemma}

\begin{proof} By \cref{lem:fe-representation}, after the fixed effects are profiled out, the population normal equations \eqref{eq:normal-eqn} for the coefficients $(\tau,\beta_0,\beta_1)^\top$ have Hessian
\[
\sum_{t=0}^{T_n}\gamma^t\,
\E_{\pi_c}\!\bigl[
K_{h_n}(Z_t)
\{X_t(h_n)-\mu_t(h_n)\}^{\otimes 2}
\bigr]
\]
and right-hand side
\[
\sum_{t=0}^{T_n}\gamma^t\,
\E_{\pi_c}\!\bigl[
K_{h_n}(Z_t)
\{X_t(h_n)-\mu_t(h_n)\}\Gamma_t^R
\bigr].
\]
It is therefore enough to identify the limits of these two objects.  Define $\bar f_t (c) := \E_{\pi_c}\big[f_t(c\mid S_t)\big]
$.
It follows from the proof of~\cref{lem:kernel-localization} (with $\Gamma^R_t$ replaced by $1$) that for each fixed $t$ and each $j=0,1,2$,
\begin{equation*}
\E_{\pi_c}\!\bigl[K_{h_n}(Z_t)A_t (U_t(h_n))^j\bigr]
\to \kappa_j\bar f_t (c),\qquad
\E_{\pi_c}\!\bigl[K_{h_n}(Z_t)(1-A_t) (U_t(h_n))^j\bigr]
\to (-1)^j\kappa_j\bar f_t (c) .
\end{equation*}
Since $X_t(h)=(A_t,U_t(h),A_tU_t(h))^\top$, these limits imply
\begin{align}
\E_{\pi_c}\!\bigl[K_{h_n}(Z_t)X_t(h_n)\bigr]
&\to
\bar f_t (c)
\begin{pmatrix}
\kappa_0\\[0.15em]
0\\[0.15em]
\kappa_1
\end{pmatrix},
\label{eq:period-x}\\
\E_{\pi_c}\!\bigl[K_{h_n}(Z_t)\bigr]
&\to 2\kappa_0\bar f_t (c),
\label{eq:period-one}\\
\E_{\pi_c}\!\bigl[K_{h_n}(Z_t)X_t(h_n)^{\otimes2}\bigr]
&\to
\bar f_t (c)
\begin{pmatrix}
\kappa_0 & \kappa_1 & \kappa_1\\
\kappa_1 & 2\kappa_2 & \kappa_2\\
\kappa_1 & \kappa_2 & \kappa_2
\end{pmatrix}.
\label{eq:period-xx}
\end{align}
By the definition of $\mu_t(h_n)$,
\begin{align*}
&\E_{\pi_c}\!\bigl[
K_{h_n}(Z_t)
\{X_t(h_n)-\mu_t(h_n)\}^{\otimes2}
\bigr]=
\E_{\pi_c}\!\bigl[K_{h_n}(Z_t)X_t(h_n)^{\otimes2}\bigr]
-
\frac{
\E_{\pi_c}\!\bigl[K_{h_n}(Z_t)X_t(h_n)\bigr]^{\otimes2}
}{
\E_{\pi_c}\!\bigl[K_{h_n}(Z_t)\bigr]
},
\end{align*}
with the correction interpreted as zero when $\E_{\pi_c}[K_{h_n}(Z_t)]=0$, in which case $K_{h_n}(Z_t)=0$ almost surely. Combining \eqref{eq:period-x}--\eqref{eq:period-xx} gives the period-wise limit
\begin{align*}
&\E_{\pi_c}\!\bigl[
K_{h_n}(Z_t)
\{X_t(h_n)-\mu_t(h_n)\}^{\otimes2}
\bigr]\to
\bar f_t (c)\left\{
\begin{pmatrix}
\kappa_0 & \kappa_1 & \kappa_1\\
\kappa_1 & 2\kappa_2 & \kappa_2\\
\kappa_1 & \kappa_2 & \kappa_2
\end{pmatrix}
-
\frac{1}{2\kappa_0}
\begin{pmatrix}
\kappa_0\\[0.15em]
0\\[0.15em]
\kappa_1
\end{pmatrix}
\begin{pmatrix}
\kappa_0 & 0 & \kappa_1
\end{pmatrix}
\right\}.
\end{align*}
The same convergence may be summed over $t$. Indeed, on the event $K_{h_n}(Z_t)\neq0$, $|U_t(h_n)|\le1$ and $A_t\in\{0,1\}$, so the norm of each matrix term above is bounded by a constant times $\E_{\pi_c}[K_{h_n}(Z_t)]$. For all sufficiently large $n$ (large enough so that $|h_n|\le \eta$ where $\eta$ is as in \cref{assump5:regularity-for-consistency}),
\[
\E_{\pi_c}[K_{h_n}(Z_t)]=\E_{\pi_c}\left[\sum_{a=0}^1\int_0^1 K(u) f_t(S_t,\, c+(-1)^{1-a}h_nu,\, a)\, du\right]
\le 2\|K\|_\infty\E_{\pi_c}[B_{f,\,t}(S_t)],
\]
and $\sum_{t=0}^T\gamma^t\,\E_{\pi_c}[B_{f,\,t}(S_t)]<\infty$ follows from \cref{assump5:regularity-for-consistency}, because $B_{m_2,\,t}\ge1$. Applying the dominated convergence for series therefore yields
\begin{align}
&\sum_{t=0}^{T_n}\gamma^t\,
\E_{\pi_c}\!\bigl[
K_{h_n}(Z_t)
\{X_t(h_n)-\mu_t(h_n)\}^{\otimes2}
\bigr] \to
\left(\sum_{t=0}^{T}\gamma^t\bar f_t (c)\right)M,
\label{eq:hessian-limit}
\end{align}
where
\[
M:=
\begin{pmatrix}
\kappa_0 & \kappa_1 & \kappa_1\\
\kappa_1 & 2\kappa_2 & \kappa_2\\
\kappa_1 & \kappa_2 & \kappa_2
\end{pmatrix}
-
\frac{1}{2\kappa_0}
\begin{pmatrix}
\kappa_0\\[0.15em]
0\\[0.15em]
\kappa_1
\end{pmatrix}
\begin{pmatrix}
\kappa_0 & 0 & \kappa_1
\end{pmatrix}.
\]
A direct calculation gives
\(
\det(M)={(\kappa_0\kappa_2-\kappa_1^2)^2}/{2\kappa_0}>0,
\)
because the kernel is non-degenerate (\cref{assump5:regularity-for-consistency}). Since $\sum_{t=0}^{T}\gamma^t\bar f_t (c)>0$ (\cref{assump:condtional-density}), the limiting Hessian in \eqref{eq:hessian-limit} is non-singular.

\medskip

We now turn to the right-hand side of the normal equations \eqref{eq:normal-eqn}.
It follows from the proof of~\cref{lem:kernel-localization} that
\begin{align}
\E_{\pi_c}\!\bigl[K_{h_n}(Z_t)\,\Gamma_t^R\bigr]
&\to
\kappa_0
\E_{\pi_c}\!\bigl[
\{Q^R_{c,t}(S_t,c,1)+Q^R_{c,t}(S_t,c,0)\}f_t(c\mid S_t)
\bigr],
\label{eq:period-gamma}\\[2mm]
\E_{\pi_c}\!\bigl[K_{h_n}(Z_t)\,X_t(h_n)\,\Gamma_t^R\bigr]
&\to
\begin{pmatrix}
\kappa_0\,\E_{\pi_c}\!\bigl[Q^R_{c,t}(S_t,c,1)f_t(c\mid S_t)\bigr]\\[0.35em]
\kappa_1\E_{\pi_c}\!\bigl[\{Q^R_{c,t}(S_t,c,1)-Q^R_{c,t}(S_t,c,0)\}f_t(c\mid S_t)\bigr]\\[0.35em]
\kappa_1\E_{\pi_c}\!\bigl[Q^R_{c,t}(S_t,c,1)f_t(c\mid S_t)\bigr]
\end{pmatrix}.
\label{eq:period-xgamma}
\end{align}
Also, by \eqref{eq:period-x} and \eqref{eq:period-one},
\begin{equation}
    \label{eq:limit-of-mu-t}
    \mu_t(h_n)
=
\frac{\E_{\pi_c}[K_{h_n}(Z_t)X_t(h_n)]}
{\E_{\pi_c}[K_{h_n}(Z_t)]}
\to
\frac{1}{2\kappa_0}
\begin{pmatrix}
\kappa_0\\[0.15em]
0\\[0.15em]
\kappa_1
\end{pmatrix},
\end{equation}
whenever $\bar f_t (c)>0$. This combined with \eqref{eq:period-gamma} and \eqref{eq:period-xgamma} gives
\begin{align*}
&\E_{\pi_c}\!\bigl[
K_{h_n}(Z_t)
\{X_t(h_n)-\mu_t(h_n)\}\Gamma_t^R
\bigr]\to
\E_{\pi_c}\!\bigl[
\{Q^R_{c,t}(S_t,c,1)-Q^R_{c,t}(S_t,c,0)\}f_t(c\mid S_t)
\bigr]
\begin{pmatrix}
\kappa_0/2\\[0.15em]
\kappa_1\\[0.15em]
\kappa_1/2
\end{pmatrix},
\end{align*}
provided $\bar f_t (c)>0$.  If $\bar f_t (c)=0$, the term
$\mu_t(h_n)\,\E_{\pi_c}[K_{h_n}(Z_t)\,\Gamma_t^R]$ converges to zero because $\mu_t(h_n)$ is bounded on the kernel support and \eqref{eq:period-gamma} has zero limit (since $\bar f_t (c)=0$ implies that $f_t(c\mid S_t)=0$ almost surely). Thus, the above display holds for every fixed $t$, regardless of whether $\bar f_t (c)>0$.

As earlier, this period-wise convergence can be summed over $t$: For $R=Y$, the absolute first moment of the centered term is bounded by a constant times
$\E_{\pi_c}[B_{m_2,\,t}(S_t)\,B_{f,\,t}(S_t)]$ by the same envelope calculation used in \cref{lem:kernel-localization}. For $R=A$, the same bound is immediate because $\Gamma_t^A$ is uniformly bounded by $(1-\gamma)^{-1}$ when $\gamma<1$ and by $T+1$ when $\gamma = 1$ and $T<\infty$. Therefore, using the dominated convergence theorem for series, we conclude that
\begin{align}
&\sum_{t=0}^{T_n}\gamma^t\,
\E_{\pi_c}\!\bigl[
K_{h_n}(Z_t)
\{X_t(h_n)-\mu_t(h_n)\}\Gamma_t^R
\bigr]
\notag\\
&\qquad\to
\left(
\sum_{t=0}^{T}\gamma^t\,
\E_{\pi_c}\!\bigl[
\{Q^R_{c,t}(S_t,c,1)-Q^R_{c,t}(S_t,c,0)\}f_t(c\mid S_t)
\bigr]
\right)
\begin{pmatrix}
\kappa_0/2\\[0.15em]
\kappa_1\\[0.15em]
\kappa_1/2
\end{pmatrix}.
\label{eq:rhs-limit}
\end{align}
Finally, a direct calculation gives
\[
M e_1
=
\begin{pmatrix}
\kappa_0/2\\[0.15em]
\kappa_1\\[0.15em]
\kappa_1/2
\end{pmatrix}, \quad\text{since}\quad M=
\begin{pmatrix}
\kappa_0 & \kappa_1 & \kappa_1\\
\kappa_1 & 2\kappa_2 & \kappa_2\\
\kappa_1 & \kappa_2 & \kappa_2
\end{pmatrix}
-
\frac{1}{2\kappa_0}
\begin{pmatrix}
\kappa_0\\[0.15em]
0\\[0.15em]
\kappa_1
\end{pmatrix}
\begin{pmatrix}
\kappa_0 & 0 & \kappa_1
\end{pmatrix}.
\]
Combining \eqref{eq:hessian-limit} and \eqref{eq:rhs-limit}, and using continuity of the matrix inverse at the non-singular limiting Hessian, we obtain
\begin{align*}
\tau_n^R
&\to
e_1^\top
\left\{
\left(\sum_{t=0}^{T}\gamma^t\bar f_t (c)\right)M
\right\}^{-1}
M e_1
\left(
\sum_{t=0}^{T}\gamma^t\,
\E_{\pi_c}\!\bigl[
\{Q^R_{c,t}(S_t,c,1)-Q^R_{c,t}(S_t,c,0)\}f_t(c\mid S_t)
\bigr]
\right)\\[2mm]
&\quad=
\frac{
\sum_{t=0}^{T}\gamma^t\,
\E_{\pi_c}\!\bigl[
\{Q^R_{c,t}(S_t,c,1)-Q^R_{c,t}(S_t,c,0)\}f_t(c\mid S_t)
\bigr]
}{
\sum_{t=0}^{T}\gamma^t\,\bar f_t (c)
}
=\tau^R,
\end{align*}
as desired to show.
\end{proof}

\begin{lemma}[Concentration of empirical optimizers]\label{lemma:concentration}
 Define the population-level optimizers $\tau^R_n$ ($R\in\{Y,A\}$) as in \eqref{eq:def-LLR-popln}.
    Under the conditions of \cref{consistency}, the empirical optimizers $\wh{\tau}^R(h_n)$ defined in \eqref{eq:def-LLR} satisfy $$\wh{\tau}^R(h_n) -\tau^R_n=\o(1),\quad \text{as }n\to\infty.$$
\end{lemma}

\begin{proof}
In view of~\cref{lem:fe-representation}, it suffices to show the convergence of the residualized design matrix and residualized cross-moment appearing in the empirical normal equations \eqref{eq:normal-eqn}.
We first record the two basic bounds that will be used repeatedly.  On the support of $K_{h_n}(Z_t)$, every entry of $X_t(h_n)$ is bounded by one. Consequently,
\begin{align}
\E_{\pi_c}\!\left[K^2_{h_n}(Z_t)\,\|X_t(h_n)\|^2\,(\Gamma_t^R)^2\right]
&\le \frac{C}{h_n}\E_{\pi_c}[B_{m_2,\,t}(S_t)\,B_{f,\,t}(S_t)], \label{eq:l2-cross}\\[2mm]
\E_{\pi_c}\!\left[K^2_{h_n}(Z_t)\,\|X_t(h_n)\|^4\right]
&\le \frac{C}{h_n}\E_{\pi_c}[B_{m_2,\,t}(S_t)\,B_{f,\,t}(S_t)]. \label{eq:l2-design}
\end{align}
Here (and in the rest of the proof) we use $C$ as a placeholder for universal constants that are not tracked.
To show \eqref{eq:l2-cross}, we first use $K_h(Z_t)\|X_t(h)\|\le C K_h(Z)$, then apply the same principle as in the proofs of \cref{lem:kernel-localization,lemma:localization}: Condition on $S_t$, split the integral over the two sides of the cutoff, and use the change of variables $z=c\pm h_nu$.  For example, the right-hand side contribution is bounded by
\[
\frac{C}{h_n}\E_{\pi_c}\!\left[\int_0^1 K^2(u)\,
m^R_{2,t}(S_t,c+h_nu,1)\,f_t(c+h_nu\mid S_t)du\right].
\]
For $R=Y$ this is bounded by the right-hand side of \eqref{eq:l2-cross} by \cref{assump5:regularity-for-consistency}.  For $R=A$, $\Gamma_t^A\le(1-\gamma)^{-1}$ in the infinite-horizon case and is bounded by $T+1$ in the finite-horizon case, so the same bound holds because $B_{m_2,\,t}\ge1$.  The left-hand side contribution is identical. For showing \eqref{eq:l2-design}, the same conditioning and change of variables give
\[
\E_{\pi_c}\!\left[K^2_{h_n}(Z_t)\|X_t(h_n)\|^4\right]
\le \frac{C}{h_n}
\E_{\pi_c}\!\left[\int_{-1}^{1}K^2(u)\, f_t(c+h_nu\mid S_t)\,du\right]
\le \frac{C}{h_n}\E_{\pi_c}[B_{m_2,\,t}(S_t)\,B_{f,\,t}(S_t)],
\]
where the last inequality again uses $B_{m_2,\,t}\ge1$.
This completes the proofs of \eqref{eq:l2-cross} and \eqref{eq:l2-design}.

\medskip

Returning to the main proof, we first control the variance term without fixed-effects. 
Since $\sum_{t=0}^T\gamma^t\,\E[B_{m_2,\,t}(S_t)\,B_{f,\,t}(S_t)]<\infty$ and $\sum_{t=0}^T\gamma^t\,<\infty$, the Cauchy--Schwarz inequality gives
\[
  \sum_{t=0}^{T}\gamma^t\, \left(\E_{\pi_c}[B_{m_2,\,t}(S_t)\,B_{f,\,t}(S_t)]\right)^{1/2}
  \le
  \left(\sum_{t=0}^{T}\gamma^t\,\right)^{1/2}
  \left(\sum_{t=0}^{T}\gamma^t\,
  \E_{\pi_c}[B_{m_2,\,t}(S_t)\,B_{f,\,t}(S_t)]\right)^{1/2}<\infty.
\]
Therefore, by Minkowski's inequality and \eqref{eq:l2-cross},
\[
\left\|
\sum_{t=0}^{T_n}\gamma^t\, K_{h_n}(Z_t)\,X_t(h_n)\,\Gamma_t^R
\right\|_{L_2}
\le C h_n^{-1/2}\sum_{t=0}^{T}\gamma^t\, \left(\E_{\pi_c}[B_{m_2,\,t}(S_t)\,B_{f,\,t}(S_t)]\right)^{1/2}
\le C h_n^{-1/2}.
\]
Since observations are independent across $i$, it follows that
\begin{align*}
&\var\!\left(
\frac1n\sum_{i=1}^n\sum_{t=0}^{T_n}\gamma^t\,
K_{h_n}(Z_{i,t})X_{i,t}(h_n)\Gamma^R_{i,t}
\right)\le
\frac1n
\E_{\pi_c}\!\left\|
\sum_{t=0}^{T_n}\gamma^t\,K_{h_n}(Z_t)X_t(h_n)\Gamma_t^R
\right\|^2
\le \frac{C}{nh_n}=o(1).
\end{align*}
Therefore, Chebyshev's inequality yields
\begin{align}
&\frac1n\sum_{i=1}^n\sum_{t=0}^{T_n}\gamma^t\,
K_{h_n}(Z_{i,t})X_{i,t}(h_n)\Gamma^R_{i,t}
-
\sum_{t=0}^{T_n}\gamma^t\,
\E_{\pi_c}\left[K_{h_n}(Z_t)X_t(h_n)\Gamma^R_t\right]
=\O((nh_n)^{-1/2}) =\o(1). \label{eq:uncentered-cross-conc}
\end{align}
The same argument, using \eqref{eq:l2-design} instead of \eqref{eq:l2-cross}, gives
\begin{align}
&\frac1n\sum_{i=1}^n\sum_{t=0}^{T_n}\gamma^t\,
K_{h_n}(Z_{i,t})\,X_{i,t}(h_n)^{\otimes2}
-
\sum_{t=0}^{T_n}\gamma^t\,
\E_{\pi_c}[K_{h_n}(Z_t)\,X_t(h_n)^{\otimes2}]
=\O((nh_n)^{-1/2}). \label{eq:uncentered-design-conc}
\end{align}
The corresponding scalar and vector moments involving only $K_{h_n}(Z_t)$ and $K_{h_n}(Z_t)X_t(h_n)$ concentrate by the same argument. In particular, 
for each fixed $t$,
\begin{align}
\frac1n\sum_{i=1}^nK_{h_n}(Z_{i,t})\,
&-\E_{\pi_c}\!\left[K_{h_n}(Z_t)\right]=\O((nh_n)^{-1/2}), \label{eq:fixed-mass}\\
\frac1n\sum_{i=1}^nK_{h_n}(Z_{i,t})X_{i,t}(h_n)
&-\E_{\pi_c}\!\left[K_{h_n}(Z_t)X_t(h_n)\right]=\O((nh_n)^{-1/2}), \label{eq:fixed-first}\\
\frac1n\sum_{i=1}^nK_{h_n}(Z_{i,t})\Gamma_{i,t}^R
&-\E_{\pi_c}\!\left[K_{h_n}(Z_t)\Gamma_t^R\right]=\O((nh_n)^{-1/2}), \label{eq:fixed-outcome}\\
\frac1n\sum_{i=1}^nK_{h_n}(Z_{i,t})X_{i,t}(h_n)\Gamma_{i,t}^R
&-\E_{\pi_c}\!\left[K_{h_n}(Z_t)X_t(h_n)\Gamma_t^R\right]=\O((nh_n)^{-1/2}), \label{eq:fixed-x-outcome}\\
\frac1n\sum_{i=1}^nK_{h_n}(Z_{i,t})\,X_{i,t}(h_n)^{\otimes 2}
&-\E_{\pi_c}\!\left[K_{h_n}(Z_t)\,X_t(h_n)^{\otimes 2}\right]=\O((nh_n)^{-1/2}). \label{eq:fixed-second}
\end{align}

We now pass from uncentered moments to time-demeaned moments. Throughout this step, take the arbitrary values used when a denominator vanishes to be bounded, say zero. 
We first consider a fixed period $t$, and assume that $\bar f_t (c) = \E_{\pi_c}\big[f_t(c\mid S_t)\big]>0$. \cref{lem:kernel-localization} gives
\(
\E_{\pi_c}\!\left[K_{h_n}(Z_t)\right]\to 2\kappa_0\bar f_t (c)>0\). Combining this with \eqref{eq:fixed-mass}, we can say that the denominator in the empirical quotient defining $\overline X_t(h_n)$ (cf.~\cref{lem:fe-representation}) is bounded away from zero with probability tending to one. Therefore, it follows from \eqref{eq:fixed-mass} and \eqref{eq:fixed-first} that
\begin{align}
\overline X_t(h_n)-\mu_t(h_n)
&=
\frac{n^{-1}\sum_{i=1}^nK_{h_n}(Z_{i,t})X_{i,t}(h_n)}{n^{-1}\sum_{i=1}^nK_{h_n}(Z_{i,t})}
-
\frac{\E_{\pi_c}[K_{h_n}(Z_t)X_t(h_n)]}{\E_{\pi_c}[K_{h_n}(Z_t)]}
=\O((nh_n)^{-1/2}). \label{eq:mean-quotient}
\end{align}
Indeed, on the event that $n^{-1}\sum_{i=1}^n K_{h_n}(Z_{i,t})\ge \E_{\pi_c}[K_{h_n}(Z_t)]/2$, the norm of the difference in \eqref{eq:mean-quotient} is at most
\[
\frac{
\left\|n^{-1}\sum_{i=1}^nK_{h_n}(Z_{i,t})X_{i,t}(h_n)-\E_{\pi_c}[K_{h_n}(Z_t)X_t(h_n)]\right\|}{n^{-1}\sum_{i=1}^nK_{h_n}(Z_{i,t})}
+
C\left|\frac{1}{n^{-1}\sum_{i=1}^nK_{h_n}(Z_{i,t})}-\frac{1}{\E_{\pi_c}[K_{h_n}(Z_t)]}\right|,
\]
and both terms are $\O((nh_n)^{-1/2})$ by \eqref{eq:fixed-mass} and \eqref{eq:fixed-first}. 

To handle the `design' contribution in this fixed period, we use the following identities
\begin{align*}
&\frac1n\sum_{i=1}^nK_{h_n}(Z_{i,t})
\{X_{i,t}(h_n)-\overline X_t(h_n)\}^{\otimes2}=
\frac1n\sum_{i=1}^nK_{h_n}(Z_{i,t})X_{i,t}(h_n)^{\otimes2}
-
\left(\frac1n\sum_{i=1}^nK_{h_n}(Z_{i,t})\right)\overline X_t(h_n)^{\otimes2},
\end{align*}
 and
\begin{align*}
\E_{\pi_c}\!\left[K_{h_n}(Z_t)\{X_t(h_n)-\mu_t(h_n)\}^{\otimes2}\right]
&=
\E_{\pi_c}\!\left[K_{h_n}(Z_t)X_t(h_n)^{\otimes2}\right]-
\E_{\pi_c}\!\left[K_{h_n}(Z_t)\right]\mu_t(h_n)^{\otimes2}.
\end{align*}
Using \eqref{eq:fixed-mass}, \eqref{eq:fixed-second}, \eqref{eq:mean-quotient}, and boundedness of $\mu_t(h_n)$ and $\overline X_t(h_n)$, we conclude that 
\begin{equation}
    \label{eq:fixed-L-design}
    \frac1n\sum_{i=1}^nK_{h_n}(Z_{i,t})
    \{X_{i,t}(h_n)-\overline X_t(h_n)\}^{\otimes2}-\E_{\pi_c}\left[K_{h_n}(Z_t)
\{X_t(h_n)-\mu_t(h_n)\}^{\otimes2}\right]=\O((nh_n)^{-1/2}).
\end{equation}
Similarly, we can handle the `cross-moment' contribution using the identities
\begin{align*}
&\frac1n\sum_{i=1}^nK_{h_n}(Z_{i,t})\,
\{X_{i,t}(h_n)-\overline X_t(h_n)\}\,\Gamma_{i,t}^R \\
&\qquad=
\frac1n\sum_{i=1}^nK_{h_n}(Z_{i,t})\,X_{i,t}(h_n)\,\Gamma_{i,t}^R
-
\overline X_t(h_n)\left(\frac1n\sum_{i=1}^nK_{h_n}(Z_{i,t})\,\Gamma_{i,t}^R\right),
\end{align*}
 and
\begin{align*}
\E_{\pi_c}\!\left[K_{h_n}(Z_t)\,\{X_t(h_n)-\mu_t(h_n)\}\,\Gamma_t^R\right]
&=
\E_{\pi_c}\!\left[K_{h_n}(Z_t)\,X_t(h_n)\Gamma_t^R\right]
-
\mu_t(h_n)\,\E_{\pi_c}\!\left[K_{h_n}(Z_t)\,\Gamma_t^R\right].
\end{align*}
Thus, \eqref{eq:fixed-outcome}, \eqref{eq:fixed-x-outcome}, and \eqref{eq:mean-quotient} together imply that 
\begin{equation}
    \label{eq:fixed-L-cross}
\frac1n\sum_{i=1}^nK_{h_n}(Z_{i,t})
\{X_{i,t}(h_n)-\overline X_t(h_n)\}\Gamma_{i,t}^R-\E_{\pi_c}\!\left[K_{h_n}(Z_t)\{X_t(h_n)-\mu_t(h_n)\}\Gamma_t^R\right]=\O((nh_n)^{-1/2}).
\end{equation}

It remains to treat the case $\bar f_t (c)= \E_{\pi_c}\big[f_t(c\mid S_t)\big]=0$. Using $K_{h_n}(Z_{i,t})\norm{X_{i,t}(h_n)-\overline X_t(h_n)}\le C\,K_{h_n}(Z_{i,t})$ together with \eqref{eq:fixed-mass}, we arrive at
\begin{align*}
\left\|
\frac1n\sum_{i=1}^nK_{h_n}(Z_{i,t})
\{X_{i,t}(h_n)-\overline X_t(h_n)\}^{\otimes2}
\right\|
&\le C\frac1n\sum_{i=1}^nK_{h_n}(Z_{i,t})=\E_{\pi_c}[K_{h_n}(Z_t)]+\O((nh_n)^{-1/2}),
\end{align*}
Similarly, the population analogue is also bounded by $C\,\E_{\pi_c}[K_{h_n}(Z_t)]$. We will show now that $\E_{\pi_c}[K_{h_n}(Z_t)]=o(1)$. Since $f_t\ge 0$, $\E_{\pi_c}[f_t(c\mid S_t)]=0$ implies that $f_t(c\mid S_t)=0$ almost surely. Next, using the same arguments as in the proof of \cref{lem:kernel-localization} (condition on $S_t$, apply the change of variables $z=c\pm h_nu$), we deduce that
\begin{align*}
\left|\E_{\pi_c}[K_{h_n}(Z_t)]\right| &=\E_{\pi_c}\left|\int_{-1}^1 K(u) f_t(c+h_nu\mid S_t)\,du\right|\\
&\le \E_{\pi_c} \int_{-1}^1 K(u) \left|f_t(c+h_nu\mid S_t)-f_t(c\mid S_t)\right|\,du=o(1),
\end{align*}
using the continuity of $z\mapsto f_t(z\mid S_t)$ at $c$ and the dominated convergence theorem.  Thus, \eqref{eq:fixed-L-design} continues to hold in this case. 

Similarly for the cross-moment term, we apply Cauchy--Schwarz inequality to get
\begin{align*}
&\E_{\pi_c}\!\left[
\frac1n\sum_{i=1}^nK_{h_n}(Z_{i,t})
\norm{X_{i,t}(h_n)-\overline X_t(h_n)}\,|\Gamma_{i,t}^R|
\right]\\
&\le C\,\E_{\pi_c}\!\left[K_{h_n}(Z_t)|\Gamma_t^R|\right]  \le C\,\E_{\pi_c}^{1/2}[K_{h_n}(Z_t)]\,
\E_{\pi_c}^{1/2}[K_{h_n}(Z_t)(\Gamma_t^R)^2]
=o(1).
\end{align*}
The second factor is bounded by the same envelope calculation used above, while the first factor tends to zero. Markov's inequality then makes the empirical centered cross-moment $\o(1)$, and the population analogue is $o(1)$ by the same Cauchy--Schwarz bound. Hence, \eqref{eq:fixed-L-cross} continues to hold in this case.

Taken together, \eqref{eq:fixed-L-design} and \eqref{eq:fixed-L-cross} complete the proof for the finite-horizon case. For the infinite-horizon case, we now fix a finite window $L$ and bound the corresponding tails.
For the design part, boundedness of $X_{i,t}(h_n)-\overline X_t(h_n)$ on the support of the kernel weights gives
\begin{align*}
&\left\|
\frac1n\sum_{i=1}^n\sum_{t>L}^{T_n}\gamma^tK_{h_n}(Z_{i,t})
\{X_{i,t}(h_n)-\overline X_t(h_n)\}^{\otimes2}
\right\|  \le
C\frac1n\sum_{i=1}^n\sum_{t>L}^{T_n}\gamma^tK_{h_n}(Z_{i,t}).
\end{align*}
For all large $n$, the expectation of the right-hand side is bounded (again, by conditioning on $S_t$ and using the same change of variables as above) by the following
\[
C\sum_{t>L}\gamma^t\,\E_{\pi_c}\!\left[B_{f,\,t}(S_t)\right]
\le
C\sum_{t>L}\gamma^t\,\E_{\pi_c}\!\left[B_{m_2,\,t}(S_t)B_{f,\,t}(S_t)\right],
\]
which goes to $0$ as $L\to\infty$. Therefore, by Markov's inequality,
\[
\lim_{L\to\infty}\limsup_{n\to\infty}
\mathbb P\!\left(
\left\|
\frac1n\sum_{i=1}^n\sum_{t>L}^{T_n}\gamma^tK_{h_n}(Z_{i,t})
\{X_{i,t}(h_n)-\overline X_t(h_n)\}^{\otimes2}
\right\|>\varepsilon
\right)=0
\]
for every $\varepsilon>0$. The population design tail is controlled by the same deterministic bound with the empirical average replaced by expectation.
The cross-moment tail is analogous:
\begin{align*}
&\E\!
\left\|
\frac1n\sum_{i=1}^n\sum_{t>L}^{T_n}\gamma^t\,K_{h_n}(Z_{i,t})
\{X_{i,t}(h_n)-\overline X_t(h_n)\}\Gamma_{i,t}^R
\right\|
 \\
&\qquad\le
C\sum_{t>L}\gamma^t\,\E_{\pi_c}\!\left[K_{h_n}(Z_t)|\Gamma_t^R|\right] \\
&\qquad\le
C\sum_{t>L}\gamma^t
\{\E_{\pi_c}[K_{h_n}(Z_t)]\}^{1/2}
\{\E_{\pi_c}[K_{h_n}(Z_t)(\Gamma_t^R)^2]\}^{1/2} \\
&\qquad\le
C\sum_{t>L}^{T}\gamma^t\,\E_{\pi_c}\!\left[B_{m_2,\,t}(S_t)B_{f,\,t}(S_t)\right]\to0,
\end{align*}
where the final inequality uses the local density envelope for the first factor, and the local second-moment envelope for the second factor. Again, Markov's inequality handles the empirical tail, and the population tail is identical. Combining \eqref{eq:fixed-L-design}--\eqref{eq:fixed-L-cross} with these tail bounds yields
\begin{align}
\wh M_n-M_n=\o(1), \qquad
\wh v_n^R-v_n^R=\o(1), \label{eq:design-final-conc}
\end{align}
where 
\begin{equation*}
\resizebox{\textwidth}{!}{$\displaystyle
\begin{aligned}
    \wh M_n:=\frac{1}{n}\sum_{i=1}^n
\sum_{t=0}^{T_n}\gamma^t\,
K_{h_n}(Z_{i,t})
\{X_{i,t}(h_n)-\overline X_t(h_n)\}^{\otimes 2},\ \ M_n := 
\sum_{t=0}^{T_n}\gamma^t\,
\E_{\pi_c}\!\left[
K_{h_n}(Z_t)
\{X_t(h_n)-\mu_t(h_n)\}^{\otimes 2}
\right]
\end{aligned}
$}
\end{equation*}
\begin{equation*}
\resizebox{\textwidth}{!}{$\displaystyle
\begin{aligned}
    \wh v_n^R:=\frac{1}{n}\sum_{i=1}^n
\sum_{t=0}^{T_n}\gamma^t\,
K_{h_n}(Z_{i,t})
\{X_{i,t}(h_n)-\overline X_t(h_n)\}
\Gamma^R_{i,t},\ \ v_n^R := 
\sum_{t=0}^{T_n}\gamma^t\,
\E_{\pi_c}\!\left[
K_{h_n}(Z_t)
\{X_t(h_n)-\mu_t(h_n)\}
\Gamma^R_t
\right].
\end{aligned}
$}
\end{equation*}

It remains to transfer these moment convergences to the optimizer. \cref{lemma:localization} gives
\[
M_n\to \left(\sum_{t=0}^T\gamma^t\bar f_t (c)\right)M,
\]
where the limiting matrix is nonsingular. Hence there exists $\lambda_0>0$ such that
\(
\lambda_{\min}(M_n)\ge \lambda_0
\)
for all sufficiently large $n$. By \eqref{eq:design-final-conc} and Weyl's inequality, we have
\(
\lambda_{\min}(\wh M_n)\ge\lambda_0/2
\)
with probability tending to one. On this event, both inverses exist and
\[
\wh M_n^{-1}-M_n^{-1}
=\wh M_n^{-1}(M_n-\wh M_n)M_n^{-1},
\]
which implies that
\[
\norm{\wh M_n^{-1}-M_n^{-1}}
\le \frac{2}{\lambda_0^2}\norm{\wh M_n-M_n}=\o(1).
\]
The population cross-moments $v_n^R$ are bounded, because they converge (\cref{lemma:localization}), and \eqref{eq:design-final-conc} gives $\wh v_n^R=v_n^R+\o(1)$. Therefore,
\begin{align*}
\wh\tau^R(h_n)-\tau_n^R
&=e_1^\top\{\wh M_n^{-1}\wh v_n^R-M_n^{-1}v_n^R\} \\[2mm]
&=e_1^\top(\wh M_n^{-1}-M_n^{-1})v_n^R
+e_1^\top\wh M_n^{-1}(\wh v_n^R-v_n^R)
=\o(1).
\end{align*}
This completes the proof.
\end{proof}

\begin{lemma}[Second-order bias expansion]\label{lem:bias}
For $R\in\{Y,A\}$, define the population-level optimizer $\tau^R_n$ as in \eqref{eq:def-LLR-popln}, and its limit $\tau^R$ as in \eqref{eq:def-taustar}. 
    Under the assumptions of \cref{clt-for-twice-discounted-llr}, we have that
    $$ \tau^R_n- \tau^R=\frac{h_n^2}{2}\,\xi_1 (\mu_{R,1}''(c)-\mu_{R,0}''(c))+{o}(h_n^2),$$
where $\xi_1:=(\kappa_2^2-\kappa_1\kappa_3)/(\kappa_0\kappa_2-\kappa_1^2)$ with  $\kappa_j:=\int_0^1 u^j K(u)du$, and 
\begin{equation}
    \label{def-muRz-duplicate}
    \mu_{R,a}(z):=\frac{\sum_{t=0}^T \gamma^t\,\E_{\pi_c}\left[Q_{c,t}^R(S_t,\,z,\,a)\,f_t(z\mid S_t)\right]}{\sum_{t=0}^T \gamma^t\,\E_{\pi_c}\left[f_t(z\mid S_t)\right]}, \quad a=0,1.
\end{equation}
\end{lemma}

\begin{proof}
    Fix $R\in\{Y,A\}$.  We first handle the truncation error that only appears when $T=\infty$.  For any fixed
integer $j\ge0$, on the event $K_{h_n}(Z_t)>0$ we have $|U_t(h_n)|\le1$ and
$K_{h_n}(Z_t)\le \|K\|_\infty/h_n$.  Moreover,
\[
        \Gamma^R_{t,\infty}-\Gamma_t^R
        =\gamma^{T_n-t+1}\Gamma^R_{T_n+1,\infty},
        \qquad 0\le t\le T_n.
\]
Therefore
\begin{align*}
&\left|\sum_{t=0}^{T_n}\gamma^t\,
\E_{\pi_c}\left[K_{h_n}(Z_t)\ind{Z_t\ge c}(U_t(h_n))^j
\{\Gamma^R_{t,\infty}-\Gamma_t^R\}\right]\right|
\notag\\
&\quad\le
\frac{\|K\|_\infty}{h_n}\sum_{t=0}^{T_n}\gamma^t\,\gamma^{T_n-t+1}
\E_{\pi_c}|\Gamma^R_{T_n+1,\infty}|
=
\frac{\|K\|_\infty}{h_n}(T_n+1)\gamma^{T_n+1}
\E_{\pi_c}|\Gamma^R_{T_n+1,\infty}|.
\end{align*}
For $R=A$, $\E_{\pi_c}|\Gamma^A_{T_n+1,\infty}|\le(1-\gamma)^{-1}$.  For $R=Y$,
$\E_{\pi_c}|\Gamma^Y_{T_n+1,\infty}|\le(1-\gamma)^{-1}\sup_s(\E_{\pi_c} Y_s^2)^{1/2}$ by Cauchy--Schwarz
and \cref{assump5:regularity-for-consistency}. Finally, $h=O(n^{-1/5})$ and $T_n\ge (3/5+\eps)(\log n)/(\log \gamma^{-1})$ together imply that $(T_n+1)\gamma^{T_n+1}=o(h_n^3)$.
We therefore conclude that
\begin{equation}
\left|\sum_{t=0}^{T_n}\gamma^t\,
\E_{\pi_c}\left[K_{h_n}(Z_t)\ind{Z_t\ge c}(U_t(h_n))^j
\{\Gamma^R_{t,\infty}-\Gamma_t^R\}\right]\right| =o(h_n^2).    \label{eq:tail-bias-proof}
\end{equation}
The same argument applies on the left side of the cutoff.  

\medskip

We now turn to the second-order localization calculation.  Conditioning on $S_t$, using
$A_t=1$ on $\{Z_t\ge c\}$, and changing variables $z=c+h_nu$ gives
\begin{align}
&\E_{\pi_c}\left[K_{h_n}(Z_t)\,\ind{Z_t\ge c}(U_t(h_n))^j\,\Gamma^R_{t,\infty}\right]
\notag\\
&\quad=
\int_0^1 u^jK(u)\,
\E_{\pi_c}\left[Q^R_{c,t}(S_t,\,c+h_nu,\,1)\,f_t(c+h_nu\mid S_t)\right]\,du.
\label{eq:right-loc-correct}
\end{align}
    Define $g_t(s,\,z,\,a):=Q_{c,\,t}^R(s,\, z,\, a)f_t(z\mid s)$. Since $Q_{c,\,t}^R(s,\,\cdot\,,\,a)$ and $f(\,\cdot\mid s)$ are twice differentiable at $c$ with their second derivatives locally bounded by an integrable envelopes $B_{Q'',\,t}(s)$ and $B_{f'',\,t}(s)$ respectively, it follows that $g_t(s,\,\cdot\,,\,a)$ are also twice  differentiable at $c$ with second derivative locally bounded by an integrable envelope $B_{g'',\,t}(s)$.
We thus have the following Taylor expansion.
$$g_t(s,\,c+h_nu,\,1)=g_t(s,\,c,\,1)+h_nu \,\frac{\partial}{\partial z}g_t(s,\,c,\,1) + \frac{h_n^2u^2}{2}\left(\frac{\partial^2}{\partial z^2}g_t(s,\,c
,\,1)+r_{n,t}(s,\,h_n)\right),$$
    where the remainder $r_{n,t}(s,\,h_n)\to 0$ as $n\to \infty$, since $g_t(s,\,\cdot\,,\,1)$ is twice differentiable at $c$. Moreover, $\partial^2_z g_t(s,\,z,\,a)$ are bounded in $[c-\eta,c+\eta]$ by an integrable envelope, which implies that $\sum_{t=0}^{T_n}\gamma^t\,\E_{\pi_c} r_{n,t}(S_t, h_n)\to 0$ by the dominated convergence theorem. We can therefore continue from \eqref{eq:right-loc-correct} to say that 
    \begin{align*}
       & \sum_{t=0}^{T_n} \E_{\pi_c}\left[K_{h_n}(Z_t)\,\ind{Z_t\ge c}(U_t(h_n))^j\,\Gamma^R_{t,\infty}\right]\\
        &=\sum_{t=0}^{T_n} \gamma^t \,\E_{\pi_c}\left[\kappa_j g_t(S_t,\, c,\,1) + \kappa_{j+1}\, h_n\,\frac{\partial}{\partial z}g_t(S_t,\,c,\,1) + \kappa_{j+2}\,\frac{h_n^2}{2}\, \frac{\partial^2}{\partial z^2}g_t(S_t,\,c,\,1)\right]+o(h_n^2)\\
        &=\sum_{t=0}^{T} \gamma^t \,\E_{\pi_c}\left[\kappa_j g_t(S_t,\, c,\,1) + \kappa_{j+1}\, h_n\,\frac{\partial}{\partial z}g_t(S_t,\,c,\,1) + \kappa_{j+2}\,\frac{h_n^2}{2}\, \frac{\partial^2}{\partial z^2}g_t(S_t,\,c,\,1)\right]+o(h_n^2),
    \end{align*}
where in the last step we used $(T_n+1)\gamma^{T_n+1}=o(h_n^3)$ to bound the tail sums. Using above display in conjunction with \eqref{eq:tail-bias-proof}, we arrive at the following.
\begin{align*}
&\sum_{t=0}^{T_n} \E_{\pi_c}\left[K_{h_n}(Z_t)\,\ind{Z_t\ge c}(U_t(h_n))^j\,\Gamma^R_t\right]\\
   &=\sum_{t=0}^{T} \gamma^t \,\E_{\pi_c}\left[\kappa_j g_t(S_t,\, c,\,1) + \kappa_{j+1}\, h_n\,\frac{\partial}{\partial z}g_t(S_t,\,c,\,1) + \kappa_{j+2}\,\frac{h_n^2}{2}\, \frac{\partial^2}{\partial z^2}g_t(S_t,\,c,\,1)\right]+o(h_n^2). \numberthis\label{eq:right-second-correct}
\end{align*}
The left-side calculation is similar (with $z=c-h_nu$ and $U_t(h_n)=-u$), and yields
\begin{align*}
       & \sum_{t=0}^{T_n} \E_{\pi_c}\left[K_{h_n}(Z_t)\,\ind{Z_t< c}(U_t(h_n))^j\,\Gamma^R_t\right]\\
        &=(-1)^j\sum_{t=0}^{T} \gamma^t \, \E_{\pi_c}\left[\kappa_j g_t(S_t,\, c,\,0) - \kappa_{j+1}\, h_n\,\frac{\partial}{\partial z}g_t(S_t,\,c,\,0) + \kappa_{j+2}\,\frac{h_n^2}{2}\, \frac{\partial^2}{\partial z^2}g_t(S_t,\,c,\,0)\right]+o(h_n^2).\numberthis\label{eq:left-second-correct}
    \end{align*}
The last two displays also hold when $\Gamma^R_t$ is replaced with $1$ and $g_t$ is replace with $f_t$, essentially by the same argument as above.

\medskip

Now recall from \cref{lem:fe-representation} that the population-level optimizer in \eqref{eq:def-LLR-popln} satisfies 
\begin{align*}
\tau^R_n
&=
e_1^\top M_n^{-1}v_n^R, \\
    M_n &:= 
\sum_{t=0}^{T_n}\gamma^t\,
\E_{\pi_c}\!\left[
K_{h_n}(Z_t)
\{X_t(h_n)-\mu_t(h_n)\}^{\otimes 2}
\right],\numberthis\label{eq:bias-tau-rep}\\
 v_n^R &:= 
\sum_{t=0}^{T_n}\gamma^t\,
\E_{\pi_c}\!\left[
K_{h_n}(Z_t)
\{X_t(h_n)-\mu_t(h_n)\}
\Gamma^R_t
\right].
\end{align*}
We will use \eqref{eq:right-second-correct}, \eqref{eq:left-second-correct} (and their analogues with $\Gamma_t^R$ replaced with $1$ and $g_t$ replaced with $f_t$) repeatedly. First, note that the time-specific mean $\mu_t(h_n)$ (defined in \eqref{eq:def-scaling}) satisfies
\[
\sum_{t=0}^{T_n}\gamma^t\,
\E_{\pi_c}\!\left[
K_{h_n}(Z_t)
\left\|
\mu_t(h_n)-
\frac{1}{2\kappa_0}(\kappa_0, 0, \kappa_1)^\top
\right\|
\right]
=O(h_n).
\]
To see this, note that the second-order expansions (\eqref{eq:right-second-correct}, \eqref{eq:left-second-correct} with
$\Gamma_t^R$ replaced by $1$) imply that the denominator
$\E_{\pi_c}[K_{h_n}(Z_t)]$ is
$2\kappa_0\,\E_{\pi_c}[f_t(c\mid S_t)]+O(h_n^2)$, while the three numerator
moments corresponding to $A_t$, $U_t(h_n)$ and $A_tU_t(h_n)$ are respectively
\[
\kappa_0\,\E_{\pi_c}[f_t(c\mid S_t)]+O(h_n),\qquad
O(h_n),\qquad
\kappa_1\,\E_{\pi_c}[f_t(c\mid S_t)]+O(h_n).
\]
Summing over $t$ with weights $\gamma^t$ gives the above display. Similarly, 
\[M_n
=
F(c)
 M
+O(h_n),\] 
where
\[M=
\begin{pmatrix}
\kappa_0 & \kappa_1 & \kappa_1\\
\kappa_1 & 2\kappa_2 & \kappa_2\\
\kappa_1 & \kappa_2 & \kappa_2
\end{pmatrix}
-
\frac{1}{2\kappa_0}
\begin{pmatrix}
\kappa_0\\[0.15em]
0\\[0.15em]
\kappa_1
\end{pmatrix}
\begin{pmatrix}
\kappa_0 & 0 & \kappa_1
\end{pmatrix},\quad F(z):=\sum_{t=0}^T\gamma^t\,\E_{\pi_c}\!\left[f_t(z\mid S_t)\right].
\]
To see this, note that before subtracting the time-specific weighted mean,
the leading right-side contribution to
$\E_{\pi_c}[K_{h_n}(Z_t)X_t(h_n)^{\otimes2}]$ is
\[
\E_{\pi_c}[f_t(c\mid S_t)]
\begin{pmatrix}
\kappa_0 & \kappa_1 & \kappa_1\\
\kappa_1 & \kappa_2 & \kappa_2\\
\kappa_1 & \kappa_2 & \kappa_2
\end{pmatrix},
\]
whereas the leading left-side contribution is
\[
\E_{\pi_c}[f_t(c\mid S_t)]
\begin{pmatrix}
0 & 0 & 0\\
0 & \kappa_2 & 0\\
0 & 0 & 0
\end{pmatrix}.
\]
The leading contribution from subtracting
$\mu_t(h_n)^{\otimes2}\E_{\pi_c}[K_{h_n}(Z_t)]$ is
\[
\frac{1}{(2\kappa_0)^2}
\begin{pmatrix}
\kappa_0\\[0.15em]
0\\[0.15em]
\kappa_1
\end{pmatrix}
\begin{pmatrix}
\kappa_0 & 0 & \kappa_1
\end{pmatrix}\cdot 2\kappa_0\,\E_{\pi_c}[f_t(c\mid S_t)]=
\frac{1}{2\kappa_0)}
\begin{pmatrix}
\kappa_0\\[0.15em]
0\\[0.15em]
\kappa_1
\end{pmatrix}
\begin{pmatrix}
\kappa_0 & 0 & \kappa_1
\end{pmatrix}\E_{\pi_c}[f_t(c\mid S_t)].
\]
Combining the last three displays and summing over $t$ gives the desired expansion for
$M_n$.

\medskip

We now use the above expansions to derive the second-order term in
$\tau_n^R$ as expressed in \eqref{eq:bias-tau-rep}.  Since
\[
\sum_{t=0}^{T_n}\gamma^t
\E_{\pi_c}\!\left[
K_{h_n}(Z_t)\{X_t(h_n)-\mu_t(h_n)\}
\right]=0,
\]
we may subtract constants inside $v_n^R$.  In particular, with $\mu_{R,a}$ is defined in \eqref{def-muRz-duplicate},
\[
\tau_n^R-\tau^R
=
e_1^\top M_n^{-1}\wt v_n^R,\qquad 
\wt v_n^R:=
v_n^R
-
M_n
\begin{pmatrix}
\tau^R\\[0.15em]
h_n\mu_{R,0}'(c)\\[0.15em]
h_n\{\mu_{R,1}'(c)-\mu_{R,0}'(c)\}
\end{pmatrix},
\]
because the first coordinate of the vector $M_n^{-1}\wt v_n^R$ is still $\tau^R$. Here $\mu'_{R,a}$ is a shorthand for $\frac{\partial}{\partial z}\mu_{R,a}$ and likewise for $\mu''_{R,a}$.
Since
$\tau^R=\mu_{R,1}(c)-\mu_{R,0}(c)$, it follows from \eqref{eq:bias-tau-rep} by algebra that
\begin{align*}
\wt v_n^R&= 
\sum_{t=0}^{T_n}\gamma^t\,
\E_{\pi_c}\!\left[
K_{h_n}(Z_t)
\{X_t(h_n)-\mu_t(h_n)\}
\Gamma^R_t
\right]\\
&\qquad - 
\sum_{t=0}^{T_n}\gamma^t\,
\E_{\pi_c}\!\left[
K_{h_n}(Z_t)
\{X_t(h_n)-\mu_t(h_n)\}^{\otimes2}
\right]\begin{pmatrix}
\tau^R\\[0.15em]
h_n\mu_{R,0}'(c)\\[0.15em]
h_n\{\mu_{R,1}'(c)-\mu_{R,0}'(c)\}
\end{pmatrix}\\
&=\sum_{t=0}^{T_n}\gamma^t\,
\E_{\pi_c}\!\Bigg[
K_{h_n}(Z_t)\{X_t(h_n)-\mu_t(h_n)\}
\Big\{
\Gamma_t^R-\mu_{R,0}(c)-\tau^R A_t
\\[-0.2em]
&\hspace{10em}
-h_n\mu_{R,0}'(c)U_t(h_n)
-h_n\{\mu_{R,1}'(c)-\mu_{R,0}'(c)\}A_tU_t(h_n)
\Big\}
\Bigg].
\end{align*}
On the right side of the cutoff, $A_t=1$ and $z-c=h_nU_t(h_n)$, and hence
the conditional mean of the expression in braces is
\[
Q_{c,t}^R(S_t,z,1)-\mu_{R,1}(c)-\mu_{R,1}'(c)(z-c).
\]
Multiplying by $f_t(z\mid S_t)$ and summing over $t$ with weights $\gamma^t$ gives
\[
F(z)\mu_{R,1}(z)
-
F(z)\mu_{R,1}(c)
-
F(z)\mu_{R,1}'(c)(z-c)
=
F(z)\{\mu_{R,1}(z)-\mu_{R,1}(c)-\mu_{R,1}'(c)(z-c)\}.
\]
By the second-order smoothness assumption,
\[
\mu_{R,1}(z)-\mu_{R,1}(c)-\mu_{R,1}'(c)(z-c)
=
\frac{1}{2}\mu_{R,1}''(c)(z-c)^2+o((z-c)^2),
\]
and the preceding localization bounds imply that this remainder remains
$o(h_n^2)$ after integration against the kernel and summation over $t$.
Also $F(z)=F(c)+O(|z-c|)$ locally, so replacing $F(z)$ by $F(c)$ in the
second-order term only changes the expression by $O(|z-c|^3)$, and hence by
$o(h_n^2)$ after kernel localization.  Thus,
\begin{align*}
& \sum_{t=0}^{T_n}\gamma^t\,
\E_{\pi_c}\!\Bigg[
K_{h_n}(Z_t)\ind{Z_{t}\ge c}\{X_t(h_n)-\mu_t(h_n)\}
\Big\{
\Gamma_t^R-\mu_{R,0}(c)-\tau^R A_t
\\[-0.2em]
&\hspace{10em}
-h_n\mu_{R,0}'(c)U_t(h_n)
-h_n\{\mu_{R,1}'(c)-\mu_{R,0}'(c)\}A_tU_t(h_n)
\Big\}
\Bigg]\\
&=\sum_{t=0}^{T_n}\gamma^t\,
\E_{\pi_c}\!\Bigg[
K_{h_n}(Z_t)\ind{Z_{t}\ge c}\{X_t(h_n)-\mu_t(h_n)\}
\frac{h_n^2}{2}\mu_{R,1}''(c)(Z_{i,t}-c)^2
\Bigg]+o(h_n^2)\\
&=\frac{h_n^2}{2}F(c)\mu_{R,1}''(c)
\left\{
\begin{pmatrix}
\kappa_2\\[0.15em]
\kappa_3\\[0.15em]
\kappa_3
\end{pmatrix}
-
\frac{\kappa_2}{2\kappa_0}
\begin{pmatrix}
\kappa_0\\[0.15em]
0\\[0.15em]
\kappa_1
\end{pmatrix}
\right\}
+o(h_n^2).\tag{Using \eqref{eq:right-second-correct}, \eqref{eq:left-second-correct} and their analogues with $\Gamma_t^R$ replaced with $1$}
\end{align*}
Here the first vector comes from integrating $u^2(1,u,u)^\top K(u)$ over
$[0,1]$, while the second vector is the fixed-effect projection term: The
leading value of $\mu_t(h_n)$ is
$(2\kappa_0)^{-1}(\kappa_0,0,\kappa_1)^\top$, and
$\kappa_2=\int_0^1u^2K(u)\,du$.

The left side is identical except that $A_t=0$ and $z-c=-h_nu$.  Its
conditional mean after subtracting the first-order approximation is
\[
Q_{c,t}^R(S_t,z,0)-\mu_{R,0}(c)-\mu_{R,0}'(c)(z-c),
\]
and the same argument gives the left-side contribution
\[
\frac{h_n^2}{2}F(c)\mu_{R,0}''(c)
\left\{
\begin{pmatrix}
0\\[0.15em]
-\kappa_3\\[0.15em]
0
\end{pmatrix}
-
\frac{\kappa_2}{2\kappa_0}
\begin{pmatrix}
\kappa_0\\[0.15em]
0\\[0.15em]
\kappa_1
\end{pmatrix}
\right\}
+o(h_n^2).
\]
Combining the right and left contributions, we obtain
\begin{align}
&\wt v_n^R=
\frac{h_n^2}{2}F(c)\mu_{R,1}''(c)
\left\{
\begin{pmatrix}
\kappa_2\\[0.15em]
\kappa_3\\[0.15em]
\kappa_3
\end{pmatrix}
-
\frac{\kappa_2}{2\kappa_0}
\begin{pmatrix}
\kappa_0\\[0.15em]
0\\[0.15em]
\kappa_1
\end{pmatrix}
\right\}
+
\frac{h_n^2}{2}F(c)\mu_{R,0}''(c)
\left\{
\begin{pmatrix}
0\\[0.15em]
-\kappa_3\\[0.15em]
0
\end{pmatrix}
-
\frac{\kappa_2}{2\kappa_0}
\begin{pmatrix}
\kappa_0\\[0.15em]
0\\[0.15em]
\kappa_1
\end{pmatrix}
\right\}
+o(h_n^2).
\label{eq:bias-vector-expansion}
\end{align}
It remains to apply the inverse of the design matrix $M_n$.  Since
$M_n=F(c)M+O(h_n)$ and $F(c)>0$, we have
\[
M_n^{-1}=F(c)^{-1}M^{-1}+O(h_n).
\]
The $O(h_n)$ part contributes $o(h_n^2)$ when multiplied by the right side of
\eqref{eq:bias-vector-expansion}.  Hence
\begin{align*}
\tau_n^R-\tau^R
&=
\frac{h_n^2}{2}\mu_{R,1}''(c)\,
e_1^\top M^{-1}
\left\{
\begin{pmatrix}
\kappa_2\\[0.15em]
\kappa_3\\[0.15em]
\kappa_3
\end{pmatrix}
-
\frac{\kappa_2}{2\kappa_0}
\begin{pmatrix}
\kappa_0\\[0.15em]
0\\[0.15em]
\kappa_1
\end{pmatrix}
\right\}
\\
&\quad+
\frac{h_n^2}{2}\mu_{R,0}''(c)\,
e_1^\top M^{-1}
\left\{
\begin{pmatrix}
0\\[0.15em]
-\kappa_3\\[0.15em]
0
\end{pmatrix}
-
\frac{\kappa_2}{2\kappa_0}
\begin{pmatrix}
\kappa_0\\[0.15em]
0\\[0.15em]
\kappa_1
\end{pmatrix}
\right\}
+o(h_n^2).
\end{align*}
Now
\[
e_1^\top M^{-1}
=
\frac{1}{\kappa_0\kappa_2-\kappa_1^2}
\begin{pmatrix}
2\kappa_2 & -\kappa_1 & 0
\end{pmatrix}.
\]
Therefore,
\[
e_1^\top M^{-1}
\left\{
\begin{pmatrix}
\kappa_2\\[0.15em]
\kappa_3\\[0.15em]
\kappa_3
\end{pmatrix}
-
\frac{\kappa_2}{2\kappa_0}
\begin{pmatrix}
\kappa_0\\[0.15em]
0\\[0.15em]
\kappa_1
\end{pmatrix}
\right\}
=
\frac{\kappa_2^2-\kappa_1\kappa_3}
{\kappa_0\kappa_2-\kappa_1^2}
=\xi_1,
\]
and
\[
e_1^\top M^{-1}
\left\{
\begin{pmatrix}
0\\[0.15em]
-\kappa_3\\[0.15em]
0
\end{pmatrix}
-
\frac{\kappa_2}{2\kappa_0}
\begin{pmatrix}
\kappa_0\\[0.15em]
0\\[0.15em]
\kappa_1
\end{pmatrix}
\right\}
=
-\frac{\kappa_2^2-\kappa_1\kappa_3}
{\kappa_0\kappa_2-\kappa_1^2}
=-\xi_1.
\]
Substituting these two identities into the previous display gives
\[
\tau_n^R-\tau^R
=
\frac{h_n^2}{2}\xi_1
\{\mu_{R,1}''(c)-\mu_{R,0}''(c)\}
+o(h_n^2),
\]
which completes the proof.
\end{proof}

\begin{lemma}[Asymptotic linearity]
    \label{lem:asymp_linearity}
    For $R\in\{Y,A\}$, define the population-level optimizer $\tau^R_n$ as in \eqref{eq:def-LLR-popln}, and its limit $\tau^R$ as in \eqref{eq:def-taustar}. 
    Under the assumptions of \cref{clt-for-twice-discounted-llr}, the empirical optimizer $\wh \tau^R_n$ solving the local linear regression \eqref{eq:def-LLR} satisfies the following. 
    \begin{align*}
         \sqrt{nh_n}\left(\wh\tau^R_n-\tau^R_n\right)=\frac{1}{\sqrt{n}}\sum_{i=1}^n \psi_{n,i}^R + \o(1),
     \end{align*}
     where 
\begin{align*}
\psi_{n,i}^R
&:=\sqrt{h_n}\,e_1^\top M_n^{-1}
\sum_{t=0}^{T_n}\gamma^t\,K_{h_n}(Z_{i,t})\{X_{i,t}(h_n)-\mu_t(h_n)\}
\left\{\Gamma_{i,t}^R-\alpha_{n,t}^R-(\zeta_n^R)^\top X_{i,t}(h_n)\right\},\\
M_n&:=\sum_{t=0}^{T_n}\gamma^t\,
\E_{\pi_c}\left[K_{h_n}(Z_t)\{X_t(h_n)-\mu_t(h_n)\}^{\otimes2}\right], \qquad \mu_t(h)
  :=\frac{\E_{\pi_c} \left[K_h(Z_{t})X_{t}(h)\right]}
          {\E_{\pi_c}\left[ K_h(Z_{t})\right]},\\
\zeta_n^R &:= M_n^{-1} v_n^R, 
\qquad\qquad\qquad\qquad v_n^R := 
\sum_{t=0}^{T_n}\gamma^t\,
\E_{\pi_c}\!\left[
K_{h_n}(Z_t)
\{X_t(h_n)-\mu_t(h_n)\}\,
\Gamma^R_t
\right].\\
\alpha_{n,t}^R &:= 
\frac{\E_{\pi_c}\left[K_{h_n}(Z_t)\{\Gamma_t^R-(\zeta_n^R)^\top X_t(h_n)\}\right]}
     {\E_{\pi_c}[K_{h_n}(Z_t)]},\tag{and zero if the denominator is zero}
\end{align*}
and $K_{h}$ and $X_{i,t}(h_n)$ are defined in \eqref{eq:def-scaling}.
Moreover, $\E_{\pi_c}[\psi_{n,i}^R]=0$, and $M_n=F(c) M + O(h_n)$ where \[M=
\begin{pmatrix}
\kappa_0 & \kappa_1 & \kappa_1\\
\kappa_1 & 2\kappa_2 & \kappa_2\\
\kappa_1 & \kappa_2 & \kappa_2
\end{pmatrix}
-
\frac{1}{2\kappa_0}
\begin{pmatrix}
\kappa_0\\[0.15em]
0\\[0.15em]
\kappa_1
\end{pmatrix}
\begin{pmatrix}
\kappa_0 & 0 & \kappa_1
\end{pmatrix},\quad F(z):=\sum_{t=0}^T\gamma^t\,\E_{\pi_c}\!\left[f_t(z\mid S_t)\right].
\]
\end{lemma}

\begin{proof} Denote by $\wh M_n$ and $\wh v_n^R$ the empirical design matrix and empirical cross-moment, respectively. That is,
    \begin{align*}
    \wh M_n &:=\frac{1}{n}\sum_{i=1}^n
\sum_{t=0}^{T_n}\gamma^t\,
K_{h_n}(Z_{i,t})
\{X_{i,t}(h_n)-\overline X_t(h_n)\}^{\otimes 2},\\ 
\wh v_n^R &:=\frac{1}{n}\sum_{i=1}^n
\sum_{t=0}^{T_n}\gamma^t\,
K_{h_n}(Z_{i,t})
\{X_{i,t}(h_n)-\overline X_t(h_n)\}
\Gamma^R_{i,t}.
\end{align*}
Then, it follows from \cref{lem:fe-representation} that
\begin{align}
 \wh\tau_n^R =e_1^\top   \wh \zeta_n^R,\qquad \wh \zeta_n^R = \wh M_n^{-1} \wh v_n^R,\qquad \tau_n^R =e_1^\top \zeta_n^R,\qquad \zeta_n^R = (M_n)^{-1} v_n^R.\label{defs}
\end{align}
We showed in the proof of \cref{lemma:concentration} that $\wh M_n^{-1}- M_n^{-1}=\o(1)$ and $\wh v_n^R - v_n^R=\o(1)$. We will show now that
\begin{equation}
    \label{eq:asymp-lin-claim}
    \wh v_n - \wh M_n\zeta_n^R=\O((nh_n)^{-1/2}).
\end{equation}
First observe that by algebra,
\begin{align*}
    &\wh v_n - \wh M_n\zeta_n^R\\
    &=\frac{1}{n}\sum_{i=1}^n\sum_{t=0}^{T_n}\gamma^t\,K_{h_n}(Z_{i,t})\{X_{i,t}(h_n)-\overline X_t(h_n)\}\left\{\Gamma_{i,t}^R-(\zeta_n^R)^\top (X_{i,t}(h_n)-\overline X_t(h_n))\right\}\\
    &=\frac{1}{n}\sum_{i=1}^n \sum_{t=0}^{T_n}\gamma^t\,K_{h_n}(Z_{i,t})\{X_{i,t}(h_n)-\overline X_t(h_n)\}\left\{\Gamma_{i,t}^R-\alpha_{n,t}^R-(\zeta_n^R)^\top X_{i,t}(h_n)\right\}, 
\end{align*}
since $\sum_{i=1}^n K_{h_n}(Z_{i,t})(X_{i,t}(h_n)-\overline X_t(h_n))=0$ by definition. Next, we aim to replace $\overline X_t(h_n)$ with $\mu_t(h_n)$ in the last display. To see why, note that using the last display and algebra,
\begin{align*}
    &\wh v_n - \wh M_n\zeta_n^R-\frac{1}{n}\sum_{i=1}^n \sum_{t=0}^{T_n}\gamma^t\,K_{h_n}(Z_{i,t})\{X_{i,t}(h_n)-\mu_t(h_n)\}\left\{\Gamma_{i,t}^R-\alpha_{n,t}^R-(\zeta_n^R)^\top X_{i,t}(h_n)\right\}\\
    &=\sum_{t=0}^{T_n}\gamma^t\,\{\mu_t(h_n)-\overline X_t(h_n)\}\frac{1}{n}\sum_{i=1}^n K_{h_n}(Z_{i,t})\left\{\Gamma_{i,t}^R-\alpha_{n,t}^R-(\zeta_n^R)^\top X_{i,t}(h_n)\right\}.
\end{align*}
For each $t$, both factors in the above summands are
$\O((nh_n)^{-1/2})$ (using the concentration arguments used in the proof of \cref{lemma:concentration} and the localization arguments for the population part exactly as in the proof of \cref{lem:bias}).  Moreover, summability is justified by the integrable envelopes as in \cref{assump5:regularity-for-consistency,assump6:smoothness} and the dominated convergence theorem. Therefore, the right-hand side in the above display is
$\O((nh_n)^{-1})=\o((nh_n)^{-1/2})$. 
Thus,
\begin{multline}
    \label{eq:expansion-with-mu-t}
    \wh v_n - \wh M_n\zeta_n^R=\frac{1}{n}\sum_{i=1}^n \sum_{t=0}^{T_n}\gamma^t\,K_{h_n}(Z_{i,t})\{X_{i,t}(h_n)-\mu_t(h_n)\}\left\{\Gamma_{i,t}^R-\alpha_{n,t}^R-(\zeta_n^R)^\top X_{i,t}(h_n)\right\}\\+\o((nh_n)^{-1/2}).
\end{multline}
    Next, observe that it follows from the profiled population-level normal equations \eqref{eq:normal-eqn} that 
    \begin{equation}
        \label{psi-has-mean-zero}
        \E_{\pi_c}\left[\sum_{t=0}^{T_n}\gamma^t\,K_{h_n}(Z_{i,t})\{X_{i,t}(h_n)-\mu_t(h_n)\}\left\{\Gamma_{i,t}^R-\alpha_{n,t}^R-(\zeta_n^R)^\top X_{i,t}(h_n)\right\}\right]=0.
    \end{equation}
On the other hand, each summand has variance of order $h_n^{-1}$ because $K_{h_n}^2$ integrates to order
$h_n^{-1}$ on the $h_n$-neighborhood of the cutoff, and the series over $t$ is summable by
\cref{assump5:regularity-for-consistency,assump6:smoothness}; see \eqref{eq:uncentered-cross-conc} and \eqref{eq:uncentered-design-conc} in the proof of \cref{lemma:concentration}. This completes the proof of \eqref{eq:asymp-lin-claim}.

\medskip

We now return to the main proof. Using algebra, we deduce from the definitions \eqref{defs} that
\begin{align*}
    \wh\tau_n^R-\tau_n^R &= e_1^\top \left(\wh M_n^{-1} \wh v_n^R - \zeta_n^R\right)= e_1^\top \wh M_n^{-1}\left( \wh v_n^R - \wh M_n\zeta_n^R\right) \\
   &= e_1^\top  M_n^{-1}\left(\wh v_n^R - \wh M_n\zeta_n^R\right) + e_1^\top \left(\wh M_n^{-1} -M_n^{-1}\right)\left(\wh v_n^R - \wh M_n\zeta_n^R\right).
\end{align*}
Using \eqref{eq:asymp-lin-claim} and $\wh M_n^{-1}- M_n^{-1}=\o(1)$ from the proof of \cref{lemma:concentration}, we can say that the second term in the above display is $\o((nh_n)^{-1/2})$. Therefore,
\begin{align*}
    \sqrt{nh_n}\left(\wh\tau_n^R-\tau_n^R\right) &= e_1^\top M_n^{-1}\sqrt{nh_n}\left(\wh v_n^R - \wh M_n\zeta_n^R\right) + \o(1).
     \end{align*}
     This combined with \eqref{eq:expansion-with-mu-t} finishes the proof of the desired asymptotic linear expansion. Finally, $\E_{\pi_c}[\psi_{n,i}^R]=0$ follows from \eqref{psi-has-mean-zero}.
\end{proof}

\begin{lemma}\label{bivariate-clt-for-twice-discounted-llr}
     Suppose that \cref{assump:data-collected-under-thresholding-policy,assump:g-formula,assump:condtional-density,assump:continuous-Q,assump5:regularity-for-consistency,assump6:smoothness} hold, and that the local linear regression \eqref{eq:def-LLR} is run with bandwidth $h_n=O(n^{-1/5})$. If $T=\infty$, assume further that $\gamma<1$ and that $T_n\ge (3/5+\eps)(\log n)/(\log \gamma^{-1})$ for all large $n$, for some $\eps>0$.
     Then the numerator $\wh{\tau}^Y_n=\wh{\tau}^Y(h_n)$ and the denominator $\wh{\tau}^A_n=\wh{\tau}^A(h_n)$ of the twice-discounted local linear regression \eqref{eq:def-LLR} satisfy the following.
     $$
\sqrt{n h_n}
\left(\begin{matrix}\wh{\tau}^Y_n - \tau^Y
-\frac{1}{2}h_n^2\,\xi_1 (\mu_{Y,1}''(c)-\mu_{Y,0}''(c))\\[2mm]
\wh{\tau}^A_n - \tau^A
-\frac{1}{2}h_n^2\,\xi_1 (\mu_{A,1}''(c)-\mu_{A,0}''(c))
\end{matrix}\right) \stackrel{d}{\longrightarrow} \mathcal{N}\left(\begin{pmatrix}0\\ 0 \end{pmatrix}, \begin{pmatrix}
V_{YY} & V_{YA}\\V_{YA} & V_{AA}
\end{pmatrix}\right),
$$
where $\xi_1:=(\kappa_2^2-\kappa_1\kappa_3)/(\kappa_0\kappa_2-\kappa_1^2)$ with $\kappa_j:=\int_0^1 u^j K(u)du$, 
\begin{equation*}
    \mu_{R,a}(z):=\frac{\sum_{t=0}^T \gamma^t\,\E_{\pi_c}\left[Q_{c,t}^R(S_t,\,z,\,a)\,f_t(z\mid S_t)\right]}{\sum_{t=0}^T \gamma^t\,\E_{\pi_c}\left[f_t(z\mid S_t)\right]}, \quad a=0,1,\ R\in\{Y,A\}.
\end{equation*}
and the variance entries $V_{YY}$, $V_{AA}$ and $V_{YA}$ as defined in \cref{clt-for-twice-discounted-llr}.
\end{lemma}

\begin{proof} We use the second-order bias expansion from \cref{lem:bias}, the asymptotic linear representation from \cref{lem:asymp_linearity}, and apply the Lindeberg--Feller central limit theorem for triangular arrays.
First note that \cref{lem:bias} takes care of the bias term: For each \(R\in\{Y,A\}\),
\[
  \tau_n^R-\tau^R
  =\frac12h_n^2\xi_1\{\mu_{R,1}''(c)-\mu_{R,0}''(c)\}+o(h_n^2).
\]
Since \(h_n=O(n^{-1/5})\), \(\sqrt{nh_n}\,o(h_n^2)=o(1)\). 
On the other hand, \cref{lem:asymp_linearity} gives
\[
  \sqrt{nh_n}(\wh\tau_n^R-\tau_n^R)
  =\frac1{\sqrt n}\sum_{i=1}^n\psi_{n,i}^R+\o(1),
  \qquad R\in\{Y,A\},
\]
where
\[
\psi_{n,i}^R
:=
\sqrt{h_n}\,e_1^\top M_n^{-1}
\sum_{t=0}^{T_n}\gamma^tK_{h_n}(Z_{i,t})
\{X_{i,t}(h_n)-\mu_t(h_n)\}
\{\Gamma_{i,t}^R-\alpha_{n,t}^R-(\zeta_n^R)^\top X_{i,t}(h_n)\},
\]
\[
M_n:=\sum_{t=0}^{T_n}\gamma^t
\E_{\pi_c}\!\left[K_{h_n}(Z_t)\{X_t(h_n)-\mu_t(h_n)\}^{\otimes2}\right],
\qquad
\mu_t(h):=\frac{\E_{\pi_c}\{K_h(Z_t)X_t(h)\}}{\E_{\pi_c}\{K_h(Z_t)\}},
\]
\[
\zeta_n^R:=M_n^{-1}\sum_{t=0}^{T_n}\gamma^t
\E_{\pi_c}\!\left[K_{h_n}(Z_t)\{X_t(h_n)-\mu_t(h_n)\}\Gamma_t^R\right],
\]
and
\[
\alpha_{n,t}^R
:=
\frac{
\E_{\pi_c}\!\left[K_{h_n}(Z_t)\{\Gamma_t^R-(\zeta_n^R)^\top X_t(h_n)\}\right]
}{\E_{\pi_c}\{K_{h_n}(Z_t)\}},                        
\]
where the convention is to set the numerator as zero when the denominator is zero. \cref{lem:asymp_linearity} also gives $\E_{\pi_c}\{\psi_{n,i}^R\}=0$.
Thus it remains to show that
\[
  \frac1{\sqrt n}\sum_{i=1}^n
  \begin{pmatrix}\psi_{n,i}^Y\\ \psi_{n,i}^A\end{pmatrix}
  \dto
  N\!\left(\begin{pmatrix}
      0\\0
  \end{pmatrix},
  \begin{pmatrix}V_{YY}&V_{YA}\\ V_{YA}&V_{AA}\end{pmatrix}
  \right).                                            
\]
Because the rows are iid across units, the above follows from the Lindeberg--Feller CLT once we prove covariance convergence and the vector Lindeberg condition.

\medskip

We record a few limits from the proofs of \cref{lemma:localization,lemma:concentration,lem:bias,lem:asymp_linearity} that will be used in the covariance and Lindeberg calculations.  First, recall from \eqref{eq:limit-of-mu-t} that for each fixed \(t\) with \(\E_{\pi_c}\{f_t(c\mid S_t)\}>0\),
\begin{equation}
  \mu_t(h_n)\to
  \frac1{2\kappa_0}\begin{pmatrix}\kappa_0,
  \,0,\,\kappa_1\end{pmatrix}.    \label{eq:clt-6}       
\end{equation}
 If \(\E_{\pi_c}\{f_t(c\mid S_t)\}=0\), set \(\mu_t(h_n)=0\) (the corresponding terms will anyway be irrelevant because for these time indices, we have $f_t(c\mid S_t)=0$ almost surely).
Second, 
\begin{equation}
  M_n=F(c)M+O(h_n),
  \qquad M_n^{-1}=F(c)^{-1}M^{-1}+O(h_n).                         \label{eq:clt-8}       
\end{equation}
In particular, \(\norm{M_n^{-1}}\leq C\) for all large \(n\).  Also, from the expansion in the proof of \cref{lem:bias}, 
\begin{equation}
\zeta_n^R
=
\begin{pmatrix}
\tau^R,\, h_n\mu_{R,0}'(c),\, h_n\{\mu_{R,1}'(c)-\mu_{R,0}'(c)\}
\end{pmatrix}^\top
+O(h_n^2)\to\begin{pmatrix}
    \tau^R ,\, 0,\, 0
\end{pmatrix}^\top,
\ \ R\in\{Y,A\}.                                                 \label{eq:clt-9} 
\end{equation}
Lastly, we show that the profiled population fixed effects converge to \(\alpha_t^R\) (defined in \cref{clt-for-twice-discounted-llr}). For the numerator, split on the two sides of the cutoff.  On the right side, using \(z=c+h_nu\), \(A_t=1\), and \(X_t(h_n)=(1,u,u)^\top\),
\begin{align*}
&\E_{\pi_c}\!\left[K_{h_n}(Z_t)\1\{Z_t\geq c\}
\{\Gamma_t^R-(\zeta_n^R)^\top X_t(h_n)\}\right] \\
&\quad=
\int_0^1 K(u)\E_{\pi_c}\!\left[
\{Q^R_{c,t}(S_t,c+h_nu,1)-\zeta_{n,1}^R-u\zeta_{n,2}^R-u\zeta_{n,3}^R\}
 f_t(c+h_nu\mid S_t)
\right]du \\[2mm]
&\quad\to
\kappa_0\,\E_{\pi_c}\!\left[
\{Q^R_{c,t}(S_t,c,1)-\tau^R\}f_t(c\mid S_t)
\right],                                                   \numberthis    \label{eq:clt-10}
\end{align*}
where we used \eqref{eq:clt-9}, the continuity of \(Q^R_{c,t}(S_t,\cdot,1)\) and \(f_t(\cdot\mid S_t)\) at \(c\), and dominated convergence with the envelopes in \cref{assump5:regularity-for-consistency}.  On the left side, using \(z=c-h_nu\), \(A_t=0\), and \(X_t(h_n)=(0,-u,0)^\top\),
\begin{align*}
&\E_{\pi_c}\!\left[K_{h_n}(Z_t)\1\{Z_t<c\}
\{\Gamma_t^R-(\zeta_n^R)^\top X_t(h_n)\}\right] \\
&\quad=
\int_0^1 K(u)\E_{\pi_c}\!\left[
\{Q^R_{c,t}(S_t,c-h_nu,0)+u\zeta_{n,2}^R\}
 f_t(c-h_nu\mid S_t)
\right]du \\[2mm]
&\quad\to
\kappa_0\,\E_{\pi_c}\!\left[
Q^R_{c,t}(S_t,c,0)f_t(c\mid S_t)
\right].                                               \numberthis\label{eq:clt-11}
\end{align*}
Combining \eqref{eq:clt-10} and \eqref{eq:clt-11} yields \(\alpha_{n,t}^R\to\alpha_t^R\) whenever \(\E_{\pi_c}\{f_t(c\mid S_t)\}>0\).  If \(\E_{\pi_c}\{f_t(c\mid S_t)\}=0\), the denominator is \(o(1)\) and the corresponding weighted contribution in the variance formula is zero.

The convergence statements above are also valid after summing over \(t\) with weights \(\gamma^t\) or \(\gamma^{2t}\).  This follows from the same envelope bounds used in \cref{lemma:localization,lemma:concentration,lem:bias,lem:asymp_linearity} where the dominating series is \(\sum_t\gamma^t\,\E\{B_{m2,t}(S_t)B_{f,t}(S_t)\}\).

\paragraph{Covariance convergence.}
Fix \(G,H\in\{Y,A\}\). Using \(\E_{\pi_c}\psi_{n,i}^G=\E_{\pi_c}\psi_{n,i}^H=0\), 
\begin{align}
\E_{\pi_c}\{\psi_{n,i}^G\,\psi_{n,i}^H\}
&=h_ne_1^\top M_n^{-1}
\E_{\pi_c}\!\Bigg[
\sum_{t=0}^{T_n}\sum_{t'=0}^{T_n}\gamma^{t+t'}
K_{h_n}(Z_t)K_{h_n}(Z_{t'})                                      \notag\\
&\hspace{22mm}\times
\{X_t(h_n)-\mu_t(h_n)\}
\{X_{t'}(h_n)-\mu_{t'}(h_n)\}^\top                                  \notag\\
&\hspace{22mm}\times
\{\Gamma_t^G-\alpha_{n,t}^G-(\zeta_n^G)^\top X_t(h_n)\}
\{\Gamma_{t'}^H-\alpha_{n,t'}^H-(\zeta_n^H)^\top X_{t'}(h_n)\}
\Bigg]M_n^{-1}e_1 .                                         \label{eq:clt-12}
\end{align}
We first show that the terms with \(t\neq t'\) are \(o(1)\).  On the support of \(K_{h_n}(Z_t)\), \(\norm{X_t(h_n)}\leq C\).  From \eqref{eq:clt-6}, \(\norm{\mu_t(h_n)}\leq C\) for all fixed \(t\) with positive limiting density, and the remaining time points have vanishing weighted contribution.  Moreover, since \(\alpha_{n,t}^R\) is the weighted mean of \(\Gamma_t^R-(\zeta_n^R)^\top X_t(h_n)\), subtracting it can only reduce the weighted second moment:
\begin{align}
&\E_{\pi_c}\!\left[K_{h_n}(Z_t)
\{\Gamma_t^R-\alpha_{n,t}^R-(\zeta_n^R)^\top X_t(h_n)\}^2\right]  
\leq
\E_{\pi_c}\!\left[K_{h_n}(Z_t)
\{\Gamma_t^R-(\zeta_n^R)^\top X_t(h_n)\}^2\right]                         \notag\\
&\qquad\qquad\qquad\qquad\qquad\qquad\leq
C\,\E_{\pi_c}\!\left[K_{h_n}(Z_t)\big\{(\Gamma_t^R)^2+1\big\}\right]
\leq C\,\E_{\pi_c}\{B_{m2,t}(S_t)B_{f,t}(S_t)\}.              \label{eq:clt-13}
\end{align}
The last inequality follows by conditioning on \(S_t\), splitting into \(A_t=1\) and \(A_t=0\), changing variables \(z=c\pm h_nu\), and using the local bounds on \(m_{2,\,t}^R\) and \(f_t\).  For \(R=A\), the same bound is immediate since \(\Gamma_t^A\) is bounded by \((1-\gamma)^{-1}\) in the infinite-horizon case and by \(T+1\) in the finite-horizon case.

For \(t<t'\), the same conditioning argument applied twice gives
\begin{align}
&\left\|
\E_{\pi_c}\!\left[
K_{h_n}(Z_t)K_{h_n}(Z_{t'})
\{X_t(h_n)-\mu_t(h_n)\}
\{X_{t'}(h_n)-\mu_{t'}(h_n)\}^\top
\right.
\right.\notag\\[-1mm]
&\hspace{50mm}\left.
\left.\times
\{\Gamma_t^G-\alpha_{n,t}^G-(\zeta_n^G)^\top X_t(h_n)\}
\{\Gamma_{t'}^H-\alpha_{n,t'}^H-(\zeta_n^H)^\top X_{t'}(h_n)\}
\right]\right\|                                                          \notag\\
&\quad\leq
C\,\E_{\pi_c}\!\left[
B_{m2,t}^{1/2}(S_t)B_{m2,t'}^{1/2}(S_{t'})B_{f,t}(S_t)B_{f,t'}(S_{t'})
\right].      \label{eq:clt-14}
\end{align}
To prove \eqref{eq:clt-14}, we first split \(Z_t\) and \(Z_{t'}\) into their two sides of the cutoff.  On a fixed pair of sides, write \(Z_t=c\pm h_nu\) and \(Z_{t'}=c\pm h_nv\).  The two factors \(h_n^{-1}\) in the kernels cancel with the two Jacobians \(h_n\,du\) and \(h_n\,dv\).  The residualized regressors are bounded on \(u,v\in[0,1]\).  Conditional Cauchy--Schwarz bounds the conditional absolute cross moment of the two residual terms by the product of the square-root local second-moment envelopes, while the two conditional densities are bounded by \(B_{f,t}(S_t)\) and \(B_{f,t'}(S_{t'})\).  The integral of \(K(u)K(v)\) over \([0,1]^2\) goes into the constant in \eqref{eq:clt-14}.  The case \(t'>t\) is identical.

Multiplying \eqref{eq:clt-14} by \(h_n\gamma^{t+t'}\), summing over \(t\neq t'\), and using the second summability condition in \cref{assump6:smoothness} gives
\begin{align}
&h_n\sum_{\substack{0\leq t,t'\leq T_n\\ t\neq t'}}\gamma^{t+t'}
\left\|
\E_{\pi_c}\!\left[
K_{h_n}(Z_t)K_{h_n}(Z_{t'})
\{X_t(h_n)-\mu_t(h_n)\}
\{X_{t'}(h_n)-\mu_{t'}(h_n)\}^\top
\varepsilon_{t,n}^G\varepsilon_{t',n}^H
\right]
\right\|                                                                  \notag\\
&\quad\leq
Ch_n\sum_{t=0}^T\sum_{t'=t+1}^T\gamma^{t+t'}
\E_{\pi_c}\!\left[
B_{m2,t}^{1/2}(S_t)B_{m2,t'}^{1/2}(S_{t'})B_{f,t}(S_t)B_{f,t'}(S_{t'})
\right]
=o(1),                                                   \label{eq:clt-15}
\end{align}
where \(\varepsilon_{t,n}^R:=\Gamma_t^R-\alpha_{n,t}^R-(\zeta_n^R)^\top X_t(h_n)\).  Thus only the diagonal terms \(t=t'\) contribute to the covariance limit.

For a fixed diagonal time point, split again on the two sides of the cutoff.  On the right side, after the change of variables \(z=c+h_nu\),
\begin{align}
&h_n\E_{\pi_c}\!\left[
K_{h_n}^2(Z_t)\1\{Z_t\geq c\}
\{X_t(h_n)-\mu_t(h_n)\}^{\otimes2}
\varepsilon_{t,n}^G\varepsilon_{t,n}^H
\right]                                                               \notag\\
&\quad=
\int_0^1K^2(u)\,
\E_{\pi_c}\!\Bigg[
\{(1,u,u)^\top-\mu_t(h_n)\}^{\otimes2}                                  \notag\\
&\hspace{18mm}\times
\E_{\pi_c}\!\left[
\begin{aligned}
&\{\Gamma_t^G-\alpha_{n,t}^G-(\zeta_n^G)^\top(1,u,u)^\top\}\\[-1mm]
&\quad\times\{\Gamma_t^H-\alpha_{n,t}^H-(\zeta_n^H)^\top(1,u,u)^\top\}
\end{aligned}
\middle| S_t,Z_t=c+h_nu,A_t=1
\right]
 f_t(c+h_nu\mid S_t)
\Bigg]du .                                                 \label{eq:clt-16}
\end{align}
Using \cref{eq:clt-6,eq:clt-9,eq:clt-10,eq:clt-11}, the cross-moment continuity, and dominated convergence, the right side of \eqref{eq:clt-16} converges to
\begin{align}
&\E_{\pi_c}\!\left[
\E_{\pi_c}\!\left[
\{\Gamma_t^G-\alpha_t^G-\tau^G\}
\{\Gamma_t^H-\alpha_t^H-\tau^H\}
\mid S_t,Z_t=c,A_t=1
\right]f_t(c\mid S_t)
\right]                                                            \notag\\
&\hspace{35mm}\times
\int_0^1K^2(u)
\left\{(1,u,u)^\top-\frac1{2\kappa_0}\begin{pmatrix}\kappa_0,0,\kappa_1\end{pmatrix}^\top\right\}^{\otimes2}du .   \label{eq:clt-17}
\end{align}
The domination follows from \eqref{eq:clt-13}, with the cross term bounded by Cauchy--Schwarz and the local cross-moment envelope.  The left side is identical except that \(z=c-h_nu\), \(A_t=0\), and \(X_t(h_n)=(0,-u,0)^\top\).  Hence
\begin{align}
&h_n\E_{\pi_c}\!\left[
K_{h_n}^2(Z_t)\1\{Z_t<c\}
\{X_t(h_n)-\mu_t(h_n)\}^{\otimes2}
\varepsilon_{t,n}^G\varepsilon_{t,n}^H
\right]                                                               \notag\\[2mm]
&\quad\to
\E_{\pi_c}\!\left[
\E_{\pi_c}\!\left[
\{\Gamma_t^G-\alpha_t^G\}
\{\Gamma_t^H-\alpha_t^H\}
\mid S_t,Z_t=c,A_t=0
\right]f_t(c\mid S_t)
\right]                                                            \notag\\
&\hspace{35mm}\times
\int_0^1K^2(u)
\left\{(0,-u,0)^\top-\frac1{2\kappa_0}\begin{pmatrix}\kappa_0,0,\kappa_1\end{pmatrix}^\top\right\}^{\otimes2}du .  \label{eq:clt-18}
\end{align}
The diagonal summands in \cref{eq:clt-17,eq:clt-18}, after multiplication by \(\gamma^{2t}\), are dominated by \(C\gamma^{2t}\E_{\pi_c}\{B_{m2,t}(S_t)B_{f,t}(S_t)\}\), which is summable because \(\gamma^{2t}\leq\gamma^t\) and \cref{assump5:regularity-for-consistency}. Therefore,  dominated convergence for series applies.

In the infinite-horizon case, the covariance calculation above may be equivalently carried out with the full future sums \(\Gamma_{t,\infty}^R\).  The error from replacing \(\Gamma_t^R\) by \(\Gamma_{t,\infty}^R\) is negligible.  Indeed, for \(0\leq t\leq T_n\),
\[
  \Gamma_{t,\infty}^R-\Gamma_t^R
  =\gamma^{T_n-t+1}\Gamma_{T_n+1,\infty}^R.
\]
Using \(\norm{X_t(h_n)-\mu_t(h_n)}\leq C\), \(K_{h_n}\leq \norm{K}_\infty/h_n\), \(\norm{M_n^{-1}}\leq C\), and \(\sup_s\E_{\pi_c}\{(\Gamma_{s,\infty}^R)^2\}<\infty\),
\begin{align*}
&\left\|
\sqrt{h_n}\sum_{t=0}^{T_n}\gamma^tK_{h_n}(Z_t)
\{X_t(h_n)-\mu_t(h_n)\}
\{\Gamma_{t,\infty}^R-\Gamma_t^R\}
\right\|_{L_2}                                                   \\
&\quad\leq
C\sqrt{h_n}\sum_{t=0}^{T_n}\gamma^t\frac1{h_n}\gamma^{T_n-t+1}
\norm{\Gamma_{T_n+1,\infty}^R}_{L_2}
\leq C\frac{(T_n+1)\gamma^{T_n+1}}{\sqrt{h_n}}
=o(1).                                                   
\end{align*}
The last equality follows because \(T_n\geq(3/5+\varepsilon)\log n/\log\gamma^{-1}\) implies \(\gamma^{T_n+1}=O(n^{-3/5-\varepsilon})\), while \(h_n=O(n^{-1/5})\), so \((T_n+1)\gamma^{T_n+1}/\sqrt{h_n}=O((\log n)n^{-1/2-\varepsilon})=o(1)\).  The omitted tail \(t>T_n\) is also negligible by the summability in \cref{assump5:regularity-for-consistency}.  Indeed, using the same envelope bound as earlier,
\[
\left\|
\sqrt{h_n}\sum_{t>T_n}\gamma^tK_{h_n}(Z_t)
\{X_t(h_n)-\mu_t(h_n)\}\Gamma_{t,\infty}^R
\right\|_{L_2}
\leq
C\sum_{t>T_n}\gamma^t
\E_{\pi_c}^{1/2}\{B_{m2,t}(S_t)B_{f,t}(S_t)\}=o(1).    
\]
Thus the observed truncated sums and the infinite-horizon sums have the same covariance limit. We thus conclude that the covariance in \eqref{eq:clt-12} without the $e_1^\top M_n^{-1}$ factor has the desired limit. Finally, recall from \eqref{eq:clt-8} that \(M_n^{-1}=F(c)^{-1}M^{-1}+o(1)\).  Also note the algebraic identities
\[
  e_1^\top M^{-1}\left\{(1,u,u)^\top-\frac1{2\kappa_0}\begin{pmatrix}\kappa_0, \,0,\,\kappa_1\end{pmatrix}^\top\right\}
  =\frac{\kappa_2-\kappa_1u}{\kappa_0\kappa_2-\kappa_1^2},   
\]
and
\[
  e_1^\top M^{-1}\left\{(0,-u,0)^\top-\frac1{2\kappa_0}\begin{pmatrix}\kappa_0, \,0,\,\kappa_1\end{pmatrix}^\top\right\}
  =-\frac{\kappa_2-\kappa_1u}{\kappa_0\kappa_2-\kappa_1^2}.   
\]
Therefore both sides of the cutoff give the same scalar quadratic factor,
\[
\int_0^1K^2(u)
\left(\frac{\kappa_2-\kappa_1u}{\kappa_0\kappa_2-\kappa_1^2}\right)^2du
=
\frac{\kappa_2^2\rho_0-2\kappa_1\kappa_2\rho_1+\kappa_1^2\rho_2}
     {(\kappa_0\kappa_2-\kappa_1^2)^2}
=\xi_2.                                                     
\]
This, in conjunction with \cref{eq:clt-12,eq:clt-15,eq:clt-17,eq:clt-18}, gives
\[
  \E_{\pi_c}\{\psi_{n,i}^G\,\psi_{n,i}^H\}\to V_{GH},
  \qquad G,H\in\{Y,A\}.                             
\]
This proves convergence of the full \(2\times2\) covariance matrix.

\paragraph{Verifying the Lindeberg condition.}
We prove that for every \(\delta>0\), and $R\in\{Y,A\}$,
\[
  \E_{\pi_c}\!\left[
  \left\|\psi_{n,i}^R\right\|^2
  \1\!\left\{
  \left\|\psi_{n,i}^R\right\|>\delta\sqrt n
  \right\}
  \right]\to0.       
\]
The \(A\)-part is straightforward because \(\Gamma_t^A\) is bounded and, on the support of the kernel,
\begin{equation*}
  \abs{\psi_{n,i}^A}
  \leq C\sqrt{h_n}\sum_{t=0}^{T_n}\gamma^t\frac1{h_n}
  \leq \frac{C}{\sqrt{h_n}}
  =o(\sqrt n),        
\end{equation*}
since \(nh_n\to\infty\).  We next focus on the \(Y\)-part.
Choose numbers \(b_n\to\infty\) such that \(b_n=o(\sqrt{nh_n})\), for example \(b_n=(nh_n)^{1/4}\).  Decompose the \(Y\)-residual in $\psi_{n,i}^Y$ as
\begin{multline*}
    \Gamma_t^Y-\alpha_{n,t}^Y-(\zeta_n^Y)^\top X_t(h_n)
=
\{\Gamma_t^Y-\alpha_{n,t}^Y-(\zeta_n^Y)^\top X_t(h_n)\}
\1\!\left\{\abs{\Gamma_t^Y-\alpha_{n,t}^Y-(\zeta_n^Y)^\top X_t(h_n)}\leq b_n\right\}
\\\quad+
\{\Gamma_t^Y-\alpha_{n,t}^Y-(\zeta_n^Y)^\top X_t(h_n)\}
\1\!\left\{\abs{\Gamma_t^Y-\alpha_{n,t}^Y-(\zeta_n^Y)^\top X_t(h_n)}> b_n\right\}.
\end{multline*}
Denote the corresponding two pieces of \(\psi_{n,i}^Y\), after splitting the residuals as above, by $\psi_{n,i}^{Y,\mathrm{trunc}}$ and $\psi_{n,i}^{Y,\mathrm{tail}}$, respectively.  For the truncated piece, using \(\norm{M_n^{-1}}\leq C\), \(\norm{X_t(h_n)-\mu_t(h_n)}\leq C\), \(K_{h_n}\leq\norm K_\infty/h_n\), and \(\sum_t\gamma^t\leq C\),
\[
  \abs{\psi_{n,i}^{Y,\mathrm{trunc}}}
  \leq C\sqrt{h_n}\sum_{t=0}^{T_n}\gamma^t\frac{b_n}{h_n}
  \leq C\frac{b_n}{\sqrt{h_n}}
  =o(\sqrt n).                                         
\]
This implies that, for all sufficiently large \(n\), 
\[
\E_{\pi_c}\!
\left[\left\|\psi_{n,i}^Y\right\|^2\ind{\left\|\psi_{n,i}^Y\right\|>\delta\sqrt{n}}\right]
\leq C\,\E_{\pi_c}\left\{\left(\psi_{n,i}^{Y,\mathrm{tail}}\right)^2\right\}.             
\]
It remains to show that the right side in the above display is \(o(1)\).  Expanding the square as in the covariance calculation gives a sum over \((t,t')\).  For each fixed \((t,t')\), after splitting into sides of the cutoff and changing variables \(z=c\pm h_nu\), \(z'=c\pm h_nv\), the corresponding term is bounded by the same integrable envelope as in \eqref{eq:clt-14}, multiplied by a conditional tail second moment.  Since \(b_n\to\infty\), this conditional tail second moment converges to zero for each fixed \((t,t')\).  The envelope in \eqref{eq:clt-14}, together with the diagonal envelope in \eqref{eq:clt-13}, is summable after multiplication by \(\gamma^{t+t'}\).  Dominated convergence for the diagonal and off-diagonal sums therefore yields
\[
  \E_{\pi_c}\left\{\left(\psi_{n,i}^{Y,\mathrm{tail}}\right)^2\right\}=o(1),
\]
which completes the proof.
\end{proof}

\section{Auxiliary Results}

\begin{lemma}\label{lemma:Celini}
     Under the parametric model \eqref{eq:cellini} of \citet{cellini2010value}, and provided that the horizon is infinite $(T=\infty)$, the population target $\tauRD$ defined in \eqref{eq:tauRD} reduces to the  following: $$\tauRD=\sum_{t=0}^\infty \gamma^t\,\theta_t,$$ where $\theta_t$ is as defined in \eqref{eq:cellini}.
\end{lemma}

\begin{proof} Under \eqref{eq:cellini}, we have $$Y_{i,t+j}=\sum_{t'\le t+j}\theta_{t+j-t'} A_{i,t'}+\eps_{i,t+j}=\sum_{m=0}^{t+j} \theta_m A_{i,t+j-m}+\eps_{i,t+j}.$$
Therefore,
\begin{align*}
&Q_{c,\,t}^Y(S_{i,t},\, c,\, 1)-Q_{c,\,t}^Y(S_{i,t},\, c,\, 0)\\
   &=\sum_{j=0}^\infty \gamma^j \left(\E_{\pi_c}\left[Y_{i,t+j}\mid S_{i,t}, Z_{i,t}=c, A_{i,t}=1\right] - \E_{\pi_c}\left[Y_{i,t+j}\mid S_{i,t}, Z_{i,t}=c, A_{i,t}=0\right]\right) \\
   &=\sum_{j=0}^\infty \gamma^j \sum_{m=0}^{t+j} \theta_m \left(\E_{\pi_c}\left[A_{i,t+j-m}\mid S_{i,t}, Z_{i,t}=c, A_{i,t}=1\right] - \E_{\pi_c}\left[A_{i,t+j-m}\mid S_{i,t}, Z_{i,t}=c, A_{i,t}=0\right]\right)\\
   &=\sum_{j=0}^\infty \gamma^j \sum_{m=0}^{j} \theta_m \left(\E_{\pi_c}\left[A_{i,t+j-m}\mid S_{i,t}, Z_{i,t}=c, A_{i,t}=1\right] - \E_{\pi_c}\left[A_{i,t+j-m}\mid S_{i,t}, Z_{i,t}=c, A_{i,t}=0\right]\right)\tag{terms for $m>j$ vanish}\\
   &=\sum_{m=0}^\infty  \sum_{j\ge m} \gamma^j\theta_m \left(\E_{\pi_c}\left[A_{i,t+j-m}\mid S_{i,t}, Z_{i,t}=c, A_{i,t}=1\right] - \E_{\pi_c}\left[A_{i,t+j-m}\mid S_{i,t}, Z_{i,t}=c, A_{i,t}=0\right]\right)\tag{swapping the order of summation}\\
    &=\sum_{m=0}^\infty \gamma^m\theta_m  \sum_{s\ge 0} \gamma^{s}\left(\E_{\pi_c}\left[A_{i,t+s}\mid S_{i,t}, Z_{i,t}=c, A_{i,t}=1\right] - \E_{\pi_c}\left[A_{i,t+s}\mid S_{i,t}, Z_{i,t}=c, A_{i,t}=0\right]\right)\\
    &=\left(\sum_{m=0}^\infty \gamma^m\theta_m\right)\left(Q_{c,\,t}^A(S_{i,t},\, c,\, 1)-Q_{c,\,t}^A(S_{i,t},\, c,\, 0)\right).
\end{align*}
The proof now follows from \eqref{eq:tauRD}.
\end{proof}

\begin{lemma}\label{lemma:VRD-on-gamma}
     In addition to the conditions of \cref{clt-for-twice-discounted-llr}, assume that: 
     \begin{enumerate}[label=(\roman*)]
     \item The conditional second moments of future outcomes are uniformly bounded at the threshold, i.e., there exists an $M\in (0,\infty)$ such that for all $t\ge 0$ and $j\ge 0$,
     \(
        \E_{\pi_c}[Y_{t+j}^2\mid S_t,\,Z_t=c,\,A_t=a]\le M
     \)
     almost surely, for both $a=0,1$.
    \item The conditional densities $f_t(c\mid S_t)$ are uniformly bounded and bounded away from zero, i.e.,
    \(0<\underline{f}\le f_t(c\mid S_t)\le \overline{f}<\infty
    \)
    almost surely, for all $t\ge 0$.
     \end{enumerate}
     Define 
     \begin{equation*}
        \cg{T}  := \sum_{t=0}^T \gamma^t = \begin{cases}
         (1-\gamma^{T+1})/(1-\gamma) & \text{if $T<\infty$ and $\gamma<1$,}\\
             (1-\gamma)^{-1} & \text{if $T=\infty$ and $\gamma<1$,}\\
             T+1 & \text{if $\gamma=1$}.
         \end{cases}
     \end{equation*}
    Then, the quantities $V_{YY}$, $V_{AA}$ and $V_{YA}$ defined in \cref{clt-for-twice-discounted-llr} satisfy
    \[
        V_{YY}=O(\cgg{T}),\qquad
        V_{AA}=O(\cgg{T}),\qquad
        |V_{YA}|=O(\cgg{T}),
    \]
    where the constants do not depend on $\gamma$, and 
    \begin{equation*}
        \cgg{T}:=\frac{\sum_{t=0}^T\gamma^{2t}\cg{T-t}^2}{\cg{T}^2}.
    \end{equation*}
    Consequently, the asymptotic variance $V_{\mathrm{RD}}$ in \cref{clt-for-twice-discounted-llr} satisfies 
    \begin{equation}\label{eq:VRDonGamma}
        V_{\mathrm{RD}} =
        O\left(
        \cgg{T}\cdot
        \frac{(1+|\tauRD|)^2}{(\Delta\mu_A(c))^2}
        \right),
    \end{equation}
    where
    \[
    \Delta\mu_A(c)=
    \frac{\sum_{t=0}^T \gamma^t\,\E_{\pi_c}[\{Q_{c,\,t}^A(S_t,\, c,\, 1)-Q_{c,\,t}^A(S_t,\, c,\, 0)\}f_t(c\mid S_t)]}
    {\sum_{t=0}^T \gamma^t\,\E_{\pi_c}[f_t(c\mid S_t)]}.
    \]
    If, in addition, there exist constants $\underline q>0$ and $L<\infty$ such that, for a.e.~$s_t$ and all $t\ge0$, \(Q_{c,\,t}^A(s_t,c,1)-Q_{c,\,t}^A(s_t,c,0)\ge \underline q\), and \(\left|Q_{c,\,t}^Y(s_t,c,1)-Q_{c,\,t}^Y(s_t,c,0)\right|
        \le
        L\left\{Q_{c,\,t}^A(s_t,c,1)-Q_{c,\,t}^A(s_t,c,0)\right\}\),
    then
    \[
        V_{\mathrm{RD}}=O(\cgg{T}).
    \]
   The behavior of $\cgg{T}$ across different regimes is characterized in \cref{lemma:C2rates}.
\end{lemma}

\begin{proof}
Fix $a\in\{0,1\}$ and define
\(
    \cal{F}_{t,\,a}:=\sigma(S_t,\,Z_t=c,\,A_t=a)
\).
By conditional Minkowski's inequality, applied first to finite partial sums and then by monotone convergence when $T=\infty$,
\begin{align*}
    \left(\E_{\pi_c}\left[(\Gamma_t^Y)^2\mid \cal{F}_{t,\,a}\right]\right)^{1/2}
    &\le
    \sum_{j=0}^{T-t}\gamma^j
    \left(\E_{\pi_c}\left[Y_{t+j}^2\mid \cal{F}_{t,\,a}\right]\right)^{1/2}
    \le M^{1/2}\cg{T-t}.
\end{align*}
Thus
\begin{equation}\label{boundedGtsquared}
    \E_{\pi_c}\left[(\Gamma_t^Y)^2\mid \cal{F}_{t,\,a}\right]
    \le M\cg{T-t}^2 .
\end{equation}
Similarly, since $A_{t+j}\in\{0,1\}$,
\begin{equation}\label{boundedHtsquared}
    \E_{\pi_c}\left[(\Gamma_t^A)^2\mid \cal{F}_{t,\,a}\right]
    \le \cg{T-t}^2 .
\end{equation}
Next, conditional Cauchy--Schwarz and the density bound imply
\[
    \left|m^Y_{t,a}(c)\right|
    \le
    \frac{
    \E_{\pi_c}\left[
    \left|Q^Y_{c,t}(S_t,c,a)\right|f_t(c\mid S_t)
    \right]}
    {\E_{\pi_c}[f_t(c\mid S_t)]}
    \le
    M^{1/2}\cg{T-t}\frac{\overline f}{\underline f}.
\]
Likewise,
\[
    0\le m^A_{t,a}(c)\le \cg{T-t}\frac{\overline f}{\underline f}.
\]
Moreover, since
\(
    \sum_{t=0}^T\gamma^t\,\E_{\pi_c}[f_t(c\mid S_t)]
    \ge \underline f\,\cg{T}
\),
we also have, for $R=Y,A$,
\begin{equation*}
    |\Delta\mu_R(c)|
    \lesssim
    \frac{\sum_{t=0}^T\gamma^t\cg{T-t}}{\cg{T}},
\end{equation*}
where the constant does not depend on $\gamma$.
We now bound the fixed-effect centered residuals that appear in the variance formula. Recall that
\(
    \alpha_t^R=
    \frac12\{m^R_{t,1}(c)+m^R_{t,0}(c)-\Delta\mu_R(c)\}
\) (cf.~\cref{clt-for-twice-discounted-llr}).
Combining the last two bounds, 
\begin{equation}\label{eq:alpha-bound}
    |\alpha_t^R+a\Delta\mu_R(c)|
    \lesssim
    \cg{T-t}
    +
    \frac{\sum_{s=0}^T\gamma^s\cg{T-s}}{\cg{T}},
    \qquad R\in\{Y,A\},\ a\in\{0,1\}.
\end{equation}
Therefore, using $(x-y)^2\le 2x^2+2y^2$, \eqref{boundedGtsquared}, \eqref{boundedHtsquared} and \eqref{eq:alpha-bound}, we derive that
\begin{equation}\label{eq:centered-second-bound}
    \E_{\pi_c}\left[
    \left(\Gamma_t^R-\alpha_t^R-a\Delta\mu_R(c)\right)^2
    \mid \cal{F}_{t,a}
    \right]
    \lesssim
    \cg{T-t}^2
    +
    \left(
    \frac{\sum_{s=0}^T\gamma^s\cg{T-s}}{\cg{T}}
    \right)^2 ,
\end{equation}
for $R=Y,A$.
We next record a simple deterministic inequality. Since $t\mapsto \cg{T-t}$ is non-increasing and $t\mapsto \gamma^t$ is non-increasing,
\[
    \frac{\sum_{t=0}^T \gamma^{2t}\cg{T-t}}
         {\sum_{t=0}^T \gamma^{2t}}
    \ge
    \frac{\sum_{t=0}^T \gamma^t\cg{T-t}}
         {\sum_{t=0}^T \gamma^t}.
\]
Indeed, after cross-multiplication, the difference between the two sides is proportional to
\[
    \sum_{s<t}\gamma^{s+t}(\gamma^s-\gamma^t)
    \{\cg{T-s}-\cg{T-t}\},
\]
which is nonnegative. Hence, by Jensen's inequality,
\[
    \left(
    \frac{\sum_{t=0}^T\gamma^t\cg{T-t}}{\cg{T}}
    \right)^2
    \le
    \frac{\sum_{t=0}^T\gamma^{2t}\cg{T-t}^2}
         {\sum_{t=0}^T\gamma^{2t}}.
\]
It follows that
\begin{equation}\label{eq:weighted-average-bound}
    \frac{\sum_{t=0}^T\gamma^{2t}}
         {\cg{T}^2}
    \left(
    \frac{\sum_{s=0}^T\gamma^s\cg{T-s}}{\cg{T}}
    \right)^2
    \le
    \frac{\sum_{t=0}^T\gamma^{2t}\cg{T-t}^2}{\cg{T}^2}
    =
    \cgg{T}.
\end{equation}

We can now bound $V_{YY}$. By definition of $V_{YY}$ in \cref{clt-for-twice-discounted-llr},
\begin{align*}
    V_{YY}
    &=
    \frac{\xi_2}{F(c)^2}
    \sum_{t=0}^T\sum_{a=0}^1
    \gamma^{2t}
    \E_{\pi_c}\left[
    \E_{\pi_c}\left[
    \left(\Gamma_t^Y-\alpha_t^Y-a\Delta\mu_Y(c)\right)^2
    \mid \cal{F}_{t,a}
    \right]
    f_t(c\mid S_t)
    \right],
\end{align*}
where $F(c)=\sum_{t=0}^T\gamma^t\E_{\pi_c}[f_t(c\mid S_t)]$. Since
$F(c)\ge \underline f\,\cg{T}$ and $f_t(c\mid S_t)\le \overline f$, \eqref{eq:centered-second-bound} and \eqref{eq:weighted-average-bound} yield
\[
    V_{YY}
    \lesssim
    \frac{\sum_{t=0}^T\gamma^{2t}\cg{T-t}^2}{\cg{T}^2}
    +
    \frac{\sum_{t=0}^T\gamma^{2t}}{\cg{T}^2}
    \left(
    \frac{\sum_{s=0}^T\gamma^s\cg{T-s}}{\cg{T}}
    \right)^2
    \lesssim
    \cgg{T}.
\]
The same argument, using \eqref{boundedHtsquared} instead of \eqref{boundedGtsquared}, gives
\[
    V_{AA}\lesssim \cgg{T}.
\]
Finally, by conditional Cauchy--Schwarz,
\begin{align*}
&\left|
\E_{\pi_c}\left[
\left(\Gamma_t^Y-\alpha_t^Y-a\Delta\mu_Y(c)\right)
\left(\Gamma_t^A-\alpha_t^A-a\Delta\mu_A(c)\right)
\mid \cal{F}_{t,a}
\right]\right|
\\
&\qquad\le
\left(
\E_{\pi_c}\left[
\left(\Gamma_t^Y-\alpha_t^Y-a\Delta\mu_Y(c)\right)^2
\mid \cal{F}_{t,a}
\right]
\right)^{1/2}
\left(
\E_{\pi_c}\left[
\left(\Gamma_t^A-\alpha_t^A-a\Delta\mu_A(c)\right)^2
\mid \cal{F}_{t,a}
\right]
\right)^{1/2}.
\end{align*}
The two factors are bounded by the same expression as in \eqref{eq:centered-second-bound}, up to constants. Therefore the preceding argument also gives
\[
    |V_{YA}|\lesssim \cgg{T}.
\]
Therefore, \eqref{eq:VRDonGamma} follows from the definition of $V_{\,\mathrm{RD}}$.

\medskip
It remains to prove the final claim. Using
\(
Q_{c,\,t}^A(s_t,c,1)-Q_{c,\,t}^A(s_t,c,0)\ge \underline q
\)
for a.e.~$s_t$ and all $t\ge0$, we get
\(
    \Delta\mu_A(c)
    \ge \underline q
\).
Furthermore, if
\[
        \left|Q_{c,\,t}^Y(s_t,c,1)-Q_{c,\,t}^Y(s_t,c,0)\right|
        \le
        L\left\{Q_{c,\,t}^A(s_t,c,1)-Q_{c,\,t}^A(s_t,c,0)\right\},
\]
then multiplying by $f_t(c\mid S_t)$, summing over $t$ with weights $\gamma^t$, and dividing by $F(c)$ gives
\[
    |\Delta\mu_Y(c)|\le L\Delta\mu_A(c).
\]
Thus $|\tauRD|\le L$. Combining this with \eqref{eq:VRDonGamma} gives
\(
    V_{\mathrm{RD}}=O(\cgg{T})
\),
as claimed.
\end{proof}

\begin{lemma}\label{lemma:C2rates}
For $T\in\{0,1,2,\dots\}$ and $\gamma\in[0,1]$, define
\[
\cg{T}:=\sum_{t=0}^T \gamma^t,
\qquad
\cgg{T}:=\frac{\sum_{t=0}^T \gamma^{2t}\cg{T-t}^2}{\cg{T}^2}.
\]
\begin{enumerate}
    \item\label{item1C} For each fixed $T<\infty$,
    \[
    \lim_{\gamma\,\uparrow\, 1} \cgg{T}
    = \frac{1}{(T+1)^2}\sum_{j=1}^{T+1} j^2
    = \frac{(T+2)(2T+3)}{6(T+1)}.
    \]

    \item\label{item2C} If $T\to\infty$ and $T(1-\gamma_{\,T})\to 0$, then
    $
    \cgg{T}\sim \frac{T}{3}$.

    \item\label{item3C} If $T\to\infty$ and $T(1-\gamma_{\,T})\to \lambda\in(0,\infty)$, then
    $\cgg{T}\sim TJ(\lambda)$,
    where
     $$J(\lambda):=\int_0^1 \frac{e^{-2\lambda u}(1-e^{-\lambda(1-u)})^2}{(1-e^{-\lambda})^2}\,du=\left(1-e^{-\lambda}\right)^{-2}\left( \frac{1-e^{-2\lambda}}{2\lambda}-\frac{2e^{-\lambda}(1-e^{-\lambda})}{\lambda}+e^{-2\lambda}\right).$$

    \item\label{item4C} If $T\to\infty$ and $T(1-\gamma_{\,T})\to\infty$, then
    \[
    \cgg{T}\sim \left(1-\gamma_{\,T}^2\right)^{-1}.
    \]
\end{enumerate}
The function $J$ is right-continuous at $\lambda=0$ with $J(0)=1/3$, so \cref{item2C} matches \cref{item3C} with $\lambda=0$.
\end{lemma}

\begin{proof}
When $T$ is fixed and $\gamma\uparrow1$, we have
\[
\frac{\cg{T-t}}{\cg{T}}\to \frac{T-t+1}{T+1},
\]
and as a result,
\[
\cgg{T}\to \sum_{t=0}^T \left(\frac{T-t+1}{T+1}\right)^2
=\frac{1}{(T+1)^2}\sum_{j=1}^{T+1} j^2
=\frac{(T+2)(2T+3)}{6(T+1)}.
\]
This proves \cref{item1C}.

Now let $T\to\infty$ and write $\gamma=\gamma_{\,T}$. First assume that $T(1-\gamma_{\,T})\to\lambda\in[0,\infty)$. In the case $\lambda=0$, if $\gamma_{\,T}= 1$ along a subsequence, the result subsequential limit of $C_2$ follows from \cref{item1C}. So in the following we assume $\gamma_{\,T}<1$ without loss of generality. For $\gamma<1$, we can write
\[
\cg{T-t}=\frac{1-\gamma^{T-t+1}}{1-\gamma},
\qquad
\cg{T}=\frac{1-\gamma^{T+1}}{1-\gamma},
\]
and therefore
\[
\cgg{T}
=\sum_{t=0}^T\left(\frac{\gamma^t(1-\gamma^{T-t+1})}{1-\gamma^{T+1}}\right)^2.
\]
Expanding the square and summing the geometric series gives the exact identity
\begin{equation}
\cgg{T}
=\frac{1}{(1-\gamma^{T+1})^2}\left(\displaystyle \frac{1-\gamma^{2T+2}}{1-\gamma^2}-2\gamma^{T+1}\frac{1-\gamma^{T+1}}{1-\gamma}+(T+1)\gamma^{2T+2}\right).
\tag{1}
\end{equation}
Since $T\to\infty$ and $T(1-\gamma_{\,T})\to \lambda\in[0,\infty)$, we necessarily have $\gamma_{\,T}\to1$, and hence we can write the following:
\[
\log \gamma_{\,T} = -(1-\gamma_{\,T})+O\bigl((1-\gamma_{\,T})^2\bigr),
\]
and
\[
T\log\gamma_{\,T} = -T(1-\gamma_{\,T})+o(1)\to -\lambda.
\]
Therefore, uniformly in $0\le t\le T$,
\[
\gamma_{\,T}^t = \exp\!\left(\frac{t}{T}\,T\log\gamma_{\,T}\right) \to e^{-\lambda t/T},
\qquad
\gamma_{\,T}^{T-t+1}\to e^{-\lambda(1-t/T)},
\qquad
\gamma_{\,T}^{T+1}\to e^{-\lambda}.
\]
Hence
\[
\frac{\cgg{T}}{T}
=\frac1T\sum_{t=0}^T\left(\frac{\gamma_{\,T}^t(1-\gamma_{\,T}^{T-t+1})}{1-\gamma_{\,T}^{T+1}}\right)^2
\to \int_0^1 \frac{e^{-2\lambda u}(1-e^{-\lambda(1-u)})^2}{(1-e^{-\lambda})^2}\,du
=J(\lambda).
\]
This proves \cref{item3C} when $\lambda\in(0,\infty)$, and the same formula with $\lambda=0$ gives \cref{item2C} once we check $J(0)=1/3$. For that, simply note that the integrand converges pointwise to $(1-u)^2$ and is uniformly bounded on $[0,1]\times[0,1]$, so
\(
J(0)=\int_0^1 (1-u)^2\,du=\frac13
\).
It is also straightforward to evaluate the integral in the definition of $J$ and conclude that, for any  $\lambda>0$,
\[
J(\lambda)
=\left(1-e^{-\lambda}\right)^{-2}\left( \frac{1-e^{-2\lambda}}{2\lambda}-\frac{2e^{-\lambda}(1-e^{-\lambda})}{\lambda}+e^{-2\lambda}\right).
\]
This completes the proof of \cref{item2C,item3C}.

It remains to treat the case $T(1-\gamma_{\,T})\to\infty$. Set $a_T:=\gamma_{\,T}^{T+1}$. Since $\log x\le -(1-x)$ for $x\in[0,1)$,
\[
a_T=\exp\bigl((T+1)\log\gamma_{\,T}\bigr)
\le \exp\bigl(-(T+1)(1-\gamma_{\,T})\bigr)\to0.
\]
Using \((1)\),
\[
\cgg{T}
=\frac{1+a_T}{1-a_T}\cdot\frac{1}{1-\gamma_{\,T}^2}
-\frac{2a_T}{1-a_T}\cdot\frac{1}{1-\gamma_{\,T}}
+\frac{(T+1)a_T^2}{(1-a_T)^2}.
\]
The first term is asymptotic to $(1-\gamma_{\,T}^2)^{-1}$ because $a_T\to0$. For the second term,
\[
(1-\gamma_{\,T}^2)\displaystyle \frac{2a_T}{1-a_T}\frac{1}{1-\gamma_{\,T}}
=\frac{2a_T}{1-a_T}(1+\gamma_{\,T})\to0.
\]
For the third term,
\[
(1-\gamma_{\,T}^2)\displaystyle \frac{(T+1)a_T^2}{(1-a_T)^2}
=\frac{(T+1)(1-\gamma_{\,T}^2)a_T^2}{(1-a_T)^2}
\le \frac{2(T+1)(1-\gamma_{\,T})e^{-2(T+1)(1-\gamma_{\,T})}}{(1-a_T)^2}\to0,
\]
because $x e^{-2x}\to0$ as $x\to\infty$. Hence
\(
\cgg{T}=(1-\gamma_{\,T}^2)^{-1}(1+o(1))
\),
as desired to show.
\end{proof}

\begin{lemma}\label{lemma:multiv-delta}
    Suppose that $h_n=O(n^{-1/5})$, and that $X_n$ and $Y_n$ are real-valued random variables such that
    $$X_n=x+b_x h_n^2 + \frac{U_n}{\sqrt{nh_n}}+r_{x,n}, \quad Y_n=y+b_y h_n^2 + \frac{V_n}{\sqrt{nh_n}}+r_{y,n},$$
    where $(U_n, V_n)$ converges weakly to a zero-mean bivariate Gaussian $(U, V)$, $r_{x,n}=\o(\sqrt{nh_n})$ and $r_{y,n}=\o(\sqrt{nh_n})$, and $y\neq 0$. Then,
    $$\sqrt{nh_n}\left(\frac{X_n}{Y_n}-\frac{x}{y}-{h_n^2}\left(\frac{b_x}{y}-\frac{xb_y}{y^2}\right)\right)\dto\normal\left(0\,,\,\frac{1}{y^2}\left(\var(U)+\frac{x^2}{y^2}\var(V)-\frac{2x}{y}\cov(U, V)\right)\right) .$$
\end{lemma}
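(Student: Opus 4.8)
The plan is to treat this as a ratio (delta-method) argument driven by the joint weak convergence of $(U_n,V_n)$, the one wrinkle being that the leading bias term does not vanish on its own and must be subtracted explicitly. First I would set $A_n := X_n - x = b_x h^2 + U_n/\sqrt{nh} + r_{x,n}$ and $B_n := Y_n - y = b_y h^2 + V_n/\sqrt{nh} + r_{y,n}$. Since $h\to 0$, $nh\to\infty$, and $(U_n,V_n)$ is weakly convergent (hence $O_p(1)$), both $A_n$ and $B_n$ are $o_p(1)$; in particular $Y_n \Pto y \neq 0$, so $1/[y(y+B_n)] \Pto 1/y^2$ by the continuous mapping theorem.

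Next I would record the exact algebraic identity
\[
\frac{X_n}{Y_n} - \frac{x}{y} = \frac{y A_n - x B_n}{y(y+B_n)}, \qquad y A_n - x B_n = (y b_x - x b_y)h^2 + \frac{yU_n - xV_n}{\sqrt{nh}} + (y r_{x,n} - x r_{y,n}).
\]
Multiplying through by $\sqrt{nh}$, the remainder contributes $\sqrt{nh}(y r_{x,n} - x r_{y,n}) = o_p(1)$ by the assumed negligibility of the remainders after rescaling, and the stochastic part contributes $yU_n - xV_n$. The delicate term is the bias: writing $c_n := \sqrt{nh}\,h^2 = n^{1/2}h^{5/2}$, the condition $h=O(n^{-1/5})$ forces $c_n=O(1)$, so $\sqrt{nh}\,(y b_x - x b_y)h^2 = (y b_x - x b_y)c_n$ is bounded but need not converge.

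The key step is that subtracting the deterministic bias $h^2(b_x/y - x b_y/y^2) = h^2(y b_x - x b_y)/y^2$ cancels this non-vanishing contribution. Indeed,
\[
\sqrt{nh}\left(\frac{X_n}{Y_n} - \frac{x}{y} - h^2\,\frac{y b_x - x b_y}{y^2}\right) = \frac{yU_n - xV_n}{y(y+B_n)} + (y b_x - x b_y)\,c_n\left(\frac{1}{y(y+B_n)} - \frac{1}{y^2}\right) + o_p(1).
\]
Because $1/[y(y+B_n)] - 1/y^2 = o_p(1)$ while $c_n = O(1)$, the middle term is $O(1)\cdot o_p(1) = o_p(1)$; and replacing $1/[y(y+B_n)]$ by $1/y^2$ in the first term costs only $(yU_n - xV_n)\cdot o_p(1) = o_p(1)$ since $yU_n - xV_n = O_p(1)$. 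Hence the whole expression equals $U_n/y - xV_n/y^2 + o_p(1)$.

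Finally I would apply the continuous mapping theorem to the linear functional $(u,v)\mapsto u/y - xv/y^2$ together with $(U_n,V_n)\dto(U,V)$, and Slutsky's theorem to absorb the $o_p(1)$; the limit is Gaussian with mean zero (as $U,V$ are centered) and variance $y^{-2}\var(U) + x^2 y^{-4}\var(V) - 2x y^{-3}\cov(U,V)$, as claimed. I expect the only genuine obstacle to be the handling of the $O(1)$ bias factor $c_n$: rather than asking $\sqrt{nh}\,h^2$ to converge, one exploits that it multiplies a quantity tending to zero in probability once the deterministic bias has been subtracted, so boundedness of $c_n$ is all that is needed.
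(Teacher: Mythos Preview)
Your proof is correct and takes essentially the same delta-method route as the paper: both reduce $\sqrt{nh}\bigl(X_n/Y_n - x/y - h^2(b_x/y - xb_y/y^2)\bigr)$ to $U_n/y - xV_n/y^2 + o_p(1)$ and finish with Slutsky and the continuous mapping theorem. The only cosmetic difference is that you work with the exact fraction identity $X_n/Y_n - x/y = (yA_n - xB_n)/(yY_n)$ and are more explicit that $c_n=\sqrt{nh}\,h^2$ is merely $O(1)$ (not necessarily convergent), whereas the paper linearizes via $\frac{1}{y}(X_n-x)+x(\tfrac{1}{Y_n}-\tfrac{1}{y})$ plus a negligible cross term.
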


\begin{proof}
    The proof is a straightforward application of the identity $\wh{x}\wh{y}-xy = y(\wh{x}-x)+x(\wh{y}-y)-(\wh{x}-x)(\wh{y}-y)$. Using this identity we can write
    \begin{align*}
        \frac{X_n}{Y_n} - \frac{x}{y} &= \frac{1}{y}(X_n - x) + x\left(\frac{1}{Y_n} - \frac{1}{y}\right) + \underbrace{(X_n-x)\left(\frac{1}{Y_n}-\frac{1}{y}\right)}_{=\,\o(\sqrt{nh_n})}\\
         &= \frac{1}{y}\left(b_x h_n^2 + \frac{U_n}{\sqrt{nh_n}}\right) - \frac{x}{y^2}\left(b_y h_n^2+\frac{V_n}{\sqrt{nh_n}}\right) + \o(\sqrt{nh_n})\\[2mm]
        & = h_n^2\left(\frac{b_x}{y}-\frac{xb_y}{y^2}\right) + \frac{1}{\sqrt{nh_n}}\left(\frac{1}{y}U_n-\frac{x}{y^2}V_n\right) + \o(\sqrt{nh_n}).
    \end{align*}
    The above combined with Slutsky's lemma and multivariate delta method completes the proof.
\end{proof}

\begin{lemma}[Law of large numbers for a triangular array of matrices]\label{triangular-array-lln}
    Suppose that for each $n\ge 1$, $X_{n,\,1},\dots,X_{n,\,n}$ are i.i.d.~$\R^{p\times q}$-valued random matrices, where $p,q\ge 1$ are fixed. Assume that $\sup_{n\ge1}\E\left\|X_{n,\,1}\right\|<\infty$ and  $\E\left[\|X_{n,\,1}\|\ind{\|X_{n,\,1}\|>\eps n}\right]\to 0$ as $n\to\infty$, for every $\eps>0$, where $\|\cdot\|$ is any norm that satisfies $|A_{i,\,j}|\le \|A\|$ for all $i\le p$ and $j\le q$. Then, $$\frac{1}{n}\sum_{i=1}^n X_{n,i}-\E\left[X_{n,\,1}\right]\ \Pto\ 0.$$ 
\end{lemma}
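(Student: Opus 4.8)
The plan is to reduce the matrix statement to a scalar weak law of large numbers and then dispatch the scalar claim by a standard truncation-and-Chebyshev argument. Since all norms on $\R^{p\times q}$ are equivalent and $p,q$ are fixed, convergence in probability of $\tfrac1n\sum_{i=1}^n X_{n,i}-\E[X_{n,1}]$ to the zero matrix is equivalent to entrywise convergence in probability, so it suffices to treat a fixed entry $(j,k)$. The first step is to check that both hypotheses descend to the scalar sequence $a_{n,i}:=(X_{n,i})_{j,k}$. The bound $|(A)_{j,k}|\le\norm{A}$ gives $\sup_n\E|a_{n,1}|\le\sup_n\E\norm{X_{n,1}}<\infty$; and since $\{|a_{n,1}|>\eps n\}\subseteq\{\norm{X_{n,1}}>\eps n\}$, the same bound yields $\E[|a_{n,1}|\ind{|a_{n,1}|>\eps n}]\le\E[\norm{X_{n,1}}\ind{\norm{X_{n,1}}>\eps n}]\to 0$ for every $\eps>0$. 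Thus I may assume $p=q=1$; I write $X_{n,1},\dots,X_{n,n}$ for the i.i.d.\ scalar variables in row $n$ and $\mu_n:=\E[X_{n,1}]$ for their common mean, which is finite and bounded in $n$.

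For the scalar claim I truncate at level $n$: set $Y_{n,i}:=X_{n,i}\ind{|X_{n,i}|\le n}$, and organize the argument in three short steps. First, the truncated and untruncated sums coincide with high probability: a union bound and Markov's inequality give $\P(\exists\,i\le n:\ X_{n,i}\ne Y_{n,i})\le n\,\P(|X_{n,1}|>n)\le\E[|X_{n,1}|\ind{|X_{n,1}|>n}]\to 0$, so it is enough to prove the law for $\tfrac1n\sum_{i=1}^n Y_{n,i}$. Second, I control the variance of the truncated mean using independence within the row: $\var\!\big(\tfrac1n\sum_{i=1}^n Y_{n,i}\big)=\tfrac1n\var(Y_{n,1})\le\tfrac1n\E[X_{n,1}^2\ind{|X_{n,1}|\le n}]$, whence Chebyshev gives $\tfrac1n\sum_{i=1}^n Y_{n,i}-\E[Y_{n,1}]\Pto 0$ once the right-hand side vanishes. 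Third, I recenter: $|\E[Y_{n,1}]-\mu_n|\le\E[|X_{n,1}|\ind{|X_{n,1}|>n}]\to 0$, which together with the first two steps yields $\tfrac1n\sum_{i=1}^n X_{n,i}-\mu_n\Pto 0$.

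The crux --- and the only place both hypotheses are genuinely needed in tandem --- is showing $\tfrac1n\E[X_{n,1}^2\ind{|X_{n,1}|\le n}]\to 0$ in the second step. I would fix $\delta\in(0,1)$ and split the truncation region at $\delta n$. On $\{|X_{n,1}|\le\delta n\}$ I use $X_{n,1}^2\le \delta n\,|X_{n,1}|$, contributing at most $\delta\,\E|X_{n,1}|\le\delta\sup_n\E|X_{n,1}|$; on $\{\delta n<|X_{n,1}|\le n\}$ I use $X_{n,1}^2\le n\,|X_{n,1}|$, contributing at most $\E[|X_{n,1}|\ind{|X_{n,1}|>\delta n}]$, which tends to $0$ by the truncation hypothesis with $\eps=\delta$. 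Hence $\limsup_n\tfrac1n\E[X_{n,1}^2\ind{|X_{n,1}|\le n}]\le\delta\sup_n\E|X_{n,1}|$ for every $\delta>0$, and letting $\delta\downarrow 0$ forces the limit to be $0$. I expect this two-region splitting to be the main (indeed essentially the only nontrivial) obstacle; the remaining pieces are a routine union bound, Chebyshev's inequality, and an elementary tail estimate. Note that the i.i.d.\ assumption within each row is exactly what collapses the variance of the sum to a single $\tfrac1n\var(Y_{n,1})$ factor, so no Lindeberg-style bookkeeping across the index $i$ is required.
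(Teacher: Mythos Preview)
Your proof is correct and follows essentially the same strategy as the paper's: reduce entrywise to the scalar case and verify the two conditions $n\,\P(|X_{n,1}|>n)\to 0$ and $n^{-1}\E[X_{n,1}^2\ind{|X_{n,1}|\le n}]\to 0$ of the triangular-array weak law. The only cosmetic differences are that the paper cites Durrett's Theorem~2.2.11 rather than writing out the truncation-and-Chebyshev steps, and handles the second-moment bound via the layer-cake identity plus dominated convergence instead of your $\delta$-splitting.
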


\begin{proof}
    This result essentially follows from \citet[Theorem 2.2.11]{Durrett2019} and the accompanying discussion. We present here the proof for $p=q=1$; the extension to arbitrary $p, q \ge 1$ follows by applying the same reasoning componentwise, noting that the required assumptions are satisfied for each entry $(i,j)$ with $i\le p$ and $j\le q$, because $\left|X_{n,\,1,\,i,\,j}\right|\le \|X_{n,\,1}\|$ and $ \left|X_{n,\,1,\,i,\,j}\right|\ind{\left|X_{n,\,1,\,i,\,j}\right|>\eps n}\le \left\|X_{n,\,1}\right\|\ind{\|X_{n,\,1}\|>\eps n}$. 
    
    In view of \citet[Theorem 2.2.11]{Durrett2019}, it suffices to show that (a) $n\,\P(\left|X_{n,\,1}\right|>n)\to 0$, and (b) $n^{-1}\,\E[X_{n,\,1}^2\ind{\left|X_{n,\,1}\right|\le n}]\to 0$. The first condition follows from the following argument: $$n\,\P(\left|X_{n,\,1}\right|>n)=n\,\E\left[\ind{\left|X_{n,\,1}\right|>n}\right]\le \E\left[\left|X_{n,\,1}\right|\ind{\left|X_{n,\,1}\right|>n}\right]\to 0.$$ To prove (b), note that
    $$\frac{1}{n}\E\left[X_{n,\,1}^2\ind{\left|X_{n,\,1}\right|\le n}\right]=\frac{1}{n}\int_0^n 2y\,\P(\left|X_{n,\,1}\right|>y)\,dy=\int_0^1 2\eps n \,\P(\left|X_{n,\,1}\right|>\eps n)\,d\eps.$$
    Define $g_n(\eps) := 2\eps n\, \P(\left|X_{n,\,1}\right|>\eps n)$. Note that $g_n(\eps)\to 0$ pointwise as $n\to\infty$, because 
    $$g_n(\eps)=2\eps n\,\E\left[\ind{\left|X_{n,\,1}\right|>\eps  n}\right]\le 2\,\E\left[\left|X_{n,\,1}\right|\ind{\left|X_{n,\,1}\right|>\eps n}\right]\to 0.$$
    Furthermore, $g_n(\eps) \le 2\sup_{n\ge 1}\E\left|X_{n,\,1}\right|<\infty$ for every  $\eps\in (0,1)$.  Therefore, we can apply the dominated convergence theorem to conclude that $\lim_{n\to\infty}\int_0^1 g_n(\eps)\,d\eps = 0$. This completes the proof, thanks to \citet[Theorem 2.2.11]{Durrett2019}.
\end{proof}

\end{document}